\theoremstyle{definition}\newtheorem{nota}[thm]{Notation}
\newcommand{\puni}[1]{\suni{#1}{\varepsilon}}
\newcommand{\ppw}[1]{\spw{#1}{\varepsilon}}
\newcommand{\cmd}[1]{\mathtt{#1}}
\newcommand{\labelfont}[1]{\textit{(#1)}}
\def\doi{5 (4:1) 2009}
\begin{document}

%\title{A System~F with call-by-name exceptions}
\title[A Type System For Call-by-name Exceptions]{A Type System For Call-by-name Exceptions}

\author[S.~Lebresne]{Sylvain Lebresne}
\address{Purdue University}
\email{slebresn@purdue.edu}

\keywords{Exceptions, Call-By-Name, System~F, Type System, Realizability}
\subjclass{D3.1, F4.1}

\begin{abstract}
  We present an extension of System~F with call-by-name exceptions.
  The type system is enriched with two syntactic constructs: a union type for
  programs whose execution may raise an exception at top level, and a
  \emph{corruption type} for programs that may raise an exception in any
  evaluation context (not necessarily at top level).  We present the syntax
  and reduction rules of the system, as well as its typing and subtyping
  rules.  We then study its properties, such as confluence.  Finally, we
  construct a realizability model using orthogonality techniques, from which
  we deduce that well-typed programs are weakly normalizing and that the ones
  who have the type of natural numbers really compute a natural number,
  without raising exceptions.
\end{abstract}

\maketitle

\section{Introduction}
\noindent Exceptions are a convenient mechanism for handling errors in 
programming languages. Most modern languages use them: Java, ML,
C++, \ldots.  The main computational features of exceptions are:
%\vspace*{-1mm}
\begin{enumerate}[(1)]
  \item One can raise an exception instead of any other expression (or
  instruction);
  \item It propagates automatically by default;
  \item One can catch it only when they need to.
%\vspace*{-1mm}
\end{enumerate}
%Exceptions have long been confined to call-by-value languages and are usually
%presented as a mechanism which ``cuts through'' the normal control flow of a
%program when raised. 
%Adding exceptions to lazy languages is more difficult since an expression is
%evaluated only when needed and thus, programs do not have a readily-predictable 
%control flow. As noted by S. Peyton Jones \textit{et al.}~\cite{jones1999sie},
%``(\ldots) the only productive way to think about an expression is to consider
%the \emph{value it computes}, not the \emph{way in which the value is
%computed}''. That is why they proposed the idea of exceptions-as-values: a
%value (of any type) is either a ``normal'' value, or an ``exceptional''
%one. Thus exceptions are values, not effects.   

\noindent Exceptions have long been confined to call-by-value
  languages and are usually presented as a mechanism which ``cuts
through'' the normal control flow of a program when raised. This is
viewing the raising of an exception as an effect of the
calculus. Unfortunately, this view makes exceptions hard to transpose
to call-by-name calculi since those do not cope well with effects.

This is a well-known problem. While in call-by-value the effect of a term $t\
u$ can be simply predicted based solely on the effects of $t$ and $u$, in
call-by-name it also depends on the actual term $t$. Indeed, in call-by-name $u$
may well not be evaluated thus not producing its effect (or evaluated many
times, producing the effect many times). For exceptions, this means that in
call-by-name, the fact that $u$ raises an exception does not necessarily imply
that $t\ u$ will. Hence in call-by-name, as summarized by S. Peyton Jones
\textit{et al.}~\cite{jones1999sie}, ``(\ldots) the only productive way to
think about an expression is to consider the \emph{value it computes}, not the
\emph{way in which the value is computed}''. Based on this observation, they
proposed the idea of exceptions-as-values: a value is either a ``normal''
value, or an ``exceptional'' one. In their framework, exceptions are not
effects anymore. And while they present this idea in the context of the
Haskell programming language, this is a very general idea for exceptions in
call-by-name calculi. 

From a typing perspective, exceptions are no simple beasts. Indeed,
the type system should allow the use of exceptions in any part of a program.
What should then be the type of the operation for raising an
exception\footnote{Remark that imperative language alleviate this problem by
making the operation of raising an exception an instruction and not an
expression.}? A solution, used in ML for instance, is to allow the
operation of raising an exception to have any type.
In~\cite{jones1999sie}, S. Peyton Jones \textit{et al.} chose a similar
solution, making exceptional values inhabitants of all types. While simple,
this solution comes at a price, the loss of type safety with respect to
exceptions. The type of an expression never ensures that no exceptions can be
raised during evaluation. 

If we want the type of an expression to reflect which exception it may raise,
a more precise typing is in order. For call-by-value languages, since
exceptions are effects, a convenient and efficient solution is to add to the
type system an effect system~\cite{pessaux1999tba,guzman1994ets}.
Unfortunately and unsurprisingly, this solution is unadapted to the typing of
call-by-name exceptional values. Indeed, tracking values with types is much
more difficult than tracking effects. 

The call-by-name evaluation is well represented amongst type theoretical
calculi which are at the core of many proof assistants (\coq{}~\cite{coq2008},
\textsc{lego}~\cite{lego}, \ldots). We believe this reinforce the case for
studying exceptions in call-by-name and their precise typing. Indeed, the
solution of having exceptional values inhabiting all types would be
inconsistent in these settings. 

This paper proposes a type system for exceptions in call-by-name calculi. By
introducing the new notion of \emph{corruption}, this type system is able to
track which exceptions may escape from a term during evaluation. By using
subtyping, this notion is able to cope with the automatic propagation of
exceptions and to respect the modularity of typing. This type system is
presented in the context of an extension of System~F with exceptions. System~F
is used here as a first step towards more elaborate type theoretical
frameworks.

Meta-theoretical properties of the resulting calculus are proved in
particular by exhibiting a realizability model. While parts of the proofs are
given in this paper, more detailed proofs of the results presented\footnote{A
notable difference is the presence of lists in the language described in this
document, while in~\cite{slphd08} the data type of lists is only presented for
a first-order typed version of the language.} are available in the Ph.D thesis of
the author~\cite{slphd08}. 

The remaining of the paper is organized as follows. We explain our design in
Section~\ref{presentation}: we justify the kind of exception-as-values we use
and describe the three levels of corruption our type system distinguishes.  We
formally present our calculus in Section~\ref{formal} and state the properties
it enjoys and Section~\ref{examples} provides some examples. Then in
Section~\ref{model} we design a realizability model of our calculus that gives
some insight on the meaning of corruption and we prove its soundness. Finally,
we present in Section~\ref{related} some related works before concluding in
Section~\ref{futureworks} with future works.

\section{Design of the system}
\label{presentation}
\subsection{Which exceptions-as-values?}
\label{whichexceptions}

As stressed above, exceptions in call-by-name calculi should be values. But
there are essentially two designs for exceptions as values: either we encode
them explicitly in the language, or we make them primitives. The first option
is a well-known one and let us first present its drawbacks in order to justify
the need for the primitive solution. 

Encoding explicitly exceptions is an old
idea~\cite{wadler1985rfl,spivey1990fte}: to each type $A$ is associated a type
$\code{Maybe}\ A$ which is either values of $A$ tagged as correct values or
exceptional values (this idea is nicely explained, for the Haskell programming
language, in~\cite{jones1999sie}). It has later been realized that the $\code{Maybe}$ 
type constructor forms a \emph{monad}~\cite{moggi1991nca,wadler1990cm}. 
And P.~Wadler and P.~Thiemann proposed in~\cite{wadler2003mea} to add effects to
monads, allowing for the detection of uncaught exceptions in such monadic
encoding. However this approach has some drawbacks, namely:
\begin{enumerate}[$\bullet$]
  \item Terms using exceptions are crippled by extra clutter. For example, in
    Haskell, to apply a function \verb!f :: Int -> Int!
    to a value \verb!x :: Maybe Int!
    we are forced to write:
    \begin{verbatim}
    do a <- x
       return (f a)
    \end{verbatim}
    \vspace*{-4mm}
    Using exceptions is not as transparent for the programmer as it is in
    call-by-value languages;
  \item As remarked in~\cite{jones1999sie}, modularity and code re-use are
    compromised, especially for higher order functions. Consider the following
    sorting function:
    \begin{verbatim}
    sort :: (a -> a -> Bool) -> [a] -> [a]
    \end{verbatim}%
    \vspace*{-4mm}
    This function cannot be applied to a comparison function that may raise
    exceptions such as:
    \begin{verbatim}
    cmp :: a -> a -> Maybe Bool
    \end{verbatim}
    \vspace*{-4mm}
    Indeed, with monads, we need to know where the \verb!sort! function uses
    the comparison in order to add the monad's operations;
  \item Monads force the evaluation of arguments (in the example above, the evaluation 
    of \texttt{x} is forced before the application to \texttt{f}). One could not see
    that as an inconvenience, and this is indeed desirable for most uses of
    monads (IO, states, \ldots). Nonetheless, this is a constraint and
    it makes exceptions not usable in non monadic call-by-name code. We think
    that this can be avoided for exceptions. 
\end{enumerate}

\medskip
\noindent This leads us to the second design choice: making exceptions
primitives. This has been first proposed by S.~Peyton~Jones \textit{et
  al.}~\cite{jones1999sie} with \emph{imprecise exceptions}. The idea
is that a value of \emph{any} type is either a ``normal'' value, or an
``exceptional'' one. The resulting mechanism allows exceptions to be
used in place of any other term (as for more traditional
``call-by-value'' exceptions and contrarily to monadic ones). Note
that since values may be exceptional, we can have for instance, a
list, which is fully defined but for which some elements are
exceptional values (see Section~\ref{examples}). These exceptions are
raised only when (and if) the list is evaluated. A main difference
with the call-by-value mechanism of exceptions is for example that a
term like $(\lam{x}{0})\ (\rai)$ (where $0$ is simply the constant
zero and ${\varepsilon}$ some exception) will reduce to $0$ and not to
$\rai$.

Our system, named \fx, adapts this idea to System~F, adding it two new term
constructions: \texttt{raise} and \texttt{try}. But while the exceptions
of~\cite{jones1999sie} are not precisely typed (the raising operation is
in all types), we propose a type system where the type of an expression
indicates which exceptions the expression may raise. 

\subsection{Expected properties}

The type system we will present enjoys the following properties:
\begin{enumerate}[$\bullet$]
  \item If a term can raise an exception, its type indicates it. 
    In particular, programs of type $\nat$ are not able to raise exceptions;
  \item Programmers can use a term $\rai$ in place of any other
    term. In particular, $\rai$ type as a function;
  \item Exceptions and their typing discipline do not jeopardize modularity 
    and code re-use. A function defined without exceptions in mind still
    accepts exceptional arguments and behave in a sensible way. Moreover, this
    is done without knowing the actual code of the function. 
\end{enumerate}

\subsection{Three levels of corruption}
\label{corruption}

We call \emph{corrupted}, a term that may mention exceptions. Given a type $A$ 
(say the type $\nat$ of natural numbers), we distinguish three levels of 
\emph{corruptions} for the terms related with this type:
\begin{enumerate}[$\bullet$]
  \item Terms of $A$. They are not corrupted, either they do not mention 
    exceptions or the ones they mention are caught or erased during reduction; 
  \item Terms of $\puni{A}$. They are terms of $A$ or terms that reduce to
    the exception~${\varepsilon}$, i.e. reduce to $\rai$ (we then say that they raise ${\varepsilon}$). 
  \item Terms of $\ppw{A}$. They are terms of $A$ that may mention the
    exception~${\varepsilon}$ but do not necessarily reduce to it (for instance, if $S$ is
    the successor function, $S\ (\rai)$ has type $\ppw{\nat}$, but not type
    $\puni{\nat}$ since it has not type~$\nat$ nor does it reduce to $\rai$). 
\end{enumerate}
Moreover, to handle the properties of corruption, we use a subtyping relation. 
And in particular we have the subtyping: $A\ \st\ \puni{A}\ \st\ \ppw{A}$. 

The following section explains why the need to distinguish at least those three
levels. But one might wonder why we do not distinguish more levels. Like the
terms containing exceptions but \emph{not} at top level. Or terms having an
exception at a depth of \emph{at most} 2 (like $\rai$ or $\lam{x}{\rai}$ but
not $\lam{x}{\lam{y}{\rai}}$), etc. As of now, while such more precise notion
may well be sound, we have not study them. The main reason is that they would
complicate and clutter the type system while we are not convinced they would
prove useful in practice. 

\subsection{Why we need to distinguish these three levels.}
\label{why3levels}

The construction $\puni{A}$ is really needed because of the typing of the
\texttt{try} operation, since for a \texttt{try} to catch an exception in
its body, this body has to reduce to the exception. 

But because we do not want to change the typing rule of application,
the construction $\puni{A}$ clearly does not fulfill all our
needs. Firstly, we cannot use it to type $S\ (\rai)$.  Secondly, given
a function $M$ of type $A\ {\rightarrow}\ B$, we cannot apply it to a
term $N$ of type $\puni{A}$. Indeed, $M\ N$ is generally not of type
$\puni{B}$ (note however that it would be the case in a call-by-value
calculus). Consider for instance $M = \lam{x}{\lam{y}{x}}$ (of type
$A\ {\rightarrow}\ (C\ {\rightarrow}\ A)$) and $N = \rai$, then $M\ N$
reduces to $\lam{y}{\rai}$ which is not of type
$\puni{(C\ {\rightarrow}\ A)}$ (since it is neither a function of type
$C\ {\rightarrow}\ A$ nor the exception ${\varepsilon}$)\footnote{Note
  that this is typically this example, of the typing of a term like
  $(\lam{x}{\lam{y}{x}})\ N$ when $N$ may raises an exception, that
  makes effect system~\cite{pessaux1999tba} unsuited to call-by-name
  exceptions.}.
%
%And secondly, there
%remains terms that we cannot substitute by an exception: we can write
%$(\lam{x}{S\ x})\ 0$ but not $(\lam{x}{S\ x})\ (\rai)$.

To solve these problems, we use a second type construction, the
\emph{corruption} of a type $A$ by an exception of name ${\varepsilon}$, denoted 
$\ppw{A}$. The main property the corruption enjoys is a good behavior with
respect to arrow types:
\begin{displaymath}
  \ppw{(\arr{A}{B})}\quad \seq\quad \arr{\ppw{A}}{\ppw{B}}
\end{displaymath}
This subtyping equality\footnote{The subtyping equality $A\ \seq\ B$
is simply defined as shorthand for $A\ \st\ B$ and $B\ \st\ A$. } may seem
paradoxical with the usual subtyping rule of arrow (contra-variance to the
left, co-variance to the right). This is however justified by the
realizability model of Section~\ref{model}.

Intuitively, terms of type $\ppw{A}$ should be seen as terms of type
$A$ where some sub-terms may have been replaced by $\rai$ (hence,
programmers can use $\rai$ wherever they want, which, in turns,
corrupts the resulting type). Equivalently, while terms of $\puni{A}$
are terms that may reduce to $\rai$ at top-level, terms of $\ppw{A}$
are the ones that may reduce to $\rai$ in \emph{any} evaluation
context.

Now, with corruption, we can apply a function $f : \arr{A}{B}$ to a
potentially exceptional term. Indeed, we have that
\begin{displaymath}
  \arr{A}{B}\ \st\ \puni{(\arr{A}{B})}\ \st\ \ppw{(\arr{A}{B})}\ \seq\ \arr{\ppw{A}}{\ppw{B}}.
\end{displaymath}
Remark that since we use subtyping, there is no need to actually know the
term~$f$. This allows for modularity: to type the application of some
(external) function~$f$ to a term~$u$, it is enough to know the type of $f$,
and this even when~$u$ may raise exceptions but the exported type of $f$ does
not mention exceptions.  
This is in particular convenient for primitive functions like the successor
function~$S$, allowing to type-check $S\ (\rai)$ with the type~$\ppw{\nat}$
without the need to give~$S$ a complicated type (the type of $S$ is simply
$\nat\ {\rightarrow}\ \nat$). 

\subsection{Exceptions by the millions}

While we have only used one exception names ${\varepsilon}$ in the above section, it is
useful to be able to handle more than one exception at a time. To that end,
the general type constructions are $\uni{A}$ and $\pw{A}$ where ${\Delta}$ is a set
of exception names. 

Using sets of exceptions requires some type identification using the following
subtyping rules:
\begin{displaymath}
  \begin{array}{c@{\quad \seq\quad }c}
    \uni{(\funi{A}{{\Delta}'})} &  \funi{A}{({\Delta}{\cup}{\Delta}')} \\
    \pw{(\fpw{A}{{\Delta}'})} & \fpw{A}{({\Delta}{\cup}{\Delta}')} \\
    \funi{A}{{\emptyset}} & A \\
    \fpw{A}{{\emptyset}} & A
  \end{array}
\end{displaymath}

\vfill\eject

\section{Formal presentation}
\label{formal}
\noindent We present the \fx{} calculus, an extension of System F with
typed exceptions, natural numbers and lists.
\subsection{Syntax, reductions and associated properties}
\label{syntax}
\subsubsection{Syntax of terms}
\label{syntax:subsyn}

We consider a countable set $\exceptionset$ of names of exceptions and a
distinguished set of variables $\varset$. 
\begin{defi}[Terms]
A \emph{term} of \fx{} is a term generated by the following grammar:
\begin{displaymath}
  \begin{array}{rr@{\quad}l}
    M,N &::=& x \quad|\quad \lam{x}{M} \quad|\quad M\ N \\
    & | &  \rai \quad|\quad \try{M}{N} \\
    & | & 0 \quad|\quad S \quad|\quad \natrec \quad|\quad \nil \quad|\quad \lcons \quad|\quad \foldright
  \end{array}
\end{displaymath}
\end{defi}
In this definition, variables are ranged over by $x, y, \ldots$ while
exception names are ranged over by ${\varepsilon}, {\varepsilon}', \ldots$.
Notions of free and bound variables are defined as usual, as well as the
external operation of substitution (written $\subst{M}{x}{N}$). The set of all
closed terms is denoted $\termset$ and terms are considered up to ${\alpha}$-equivalence.
Note that the construction $\try{M}{N}$ does not bind the occurrences
of ${\varepsilon}$. The term $\rai$ is called an \emph{exception}, ${\varepsilon}$
being its name, but, as an abuse of terminology, we also call ${\varepsilon}$ an
exception. In the term $\try{M}{N}$ we will sometimes call $M$
the body and $N$ the handler of the \texttt{try} construction. 

To the terms of the lambda calculus, we add the constructions to raise and
catch exceptions as well as two usual structured data types: the natural
numbers and the lists. 

\subsubsection{Computation in \texorpdfstring{\fx}{Fx}}
\label{syntax:subcomp}

\begin{defi}[Regular values]
A \emph{regular value} is a (closed) term of \fx{} having one of the following
form:
\begin{displaymath}
  \begin{array}{rr@{\ }l}
    RV &::=& \lam{x}{M} \md 0 \md S \md S\ N \md \natrec 
      \md \natrec\ M \md \natrec\ M\ N  \\
      &|& \nil \md \lcons \md \lcons\ M \md \lcons\ M\ N
      \md \foldright \md \foldright\ M \md \foldright\ M\ N
  \end{array}
\end{displaymath}
\end{defi}

Note that $S\, N$ is a regular value for any term $N$ and hence $S\
(\rai)$ is a regular value as well. 

\begin{defi}[Values]
A \emph{values} is a (closed) term of \fx{} having one of the following
form:
\begin{displaymath}
  V ::=  RV \md \rai
\end{displaymath}
where $RV$ is a regular value and ${\varepsilon}$ any exception name.
\end{defi}

\noindent For well-typed term, a value corresponds to a weak head normal form. 

\begin{defi}[Computation]
The \emph{notion of reduction} $\bred$ for the calculus is defined by the rules of
Figure~\ref{syntax:reduction}. 
Computation in \fx{} is defined from the notion of reduction by the relation
of \emph{reduction}~$\red$ whose rules are given in Figure~\ref{syntax:relreduction}.
We note~$\rede$ the reflexive and transitive closure of~$\red$ and we note
$\eqred$ its reflexive, transitive and symmetric closure. Moreover, if $M
\eqred N$, we will say that $M$ is \emph{equivalent} to $N$. 
\end{defi}
\begin{figure}[htb!]
\begin{framedp}
  \begin{displaymath}
    \begin{array}{r@{\quad\bred\quad}l}
      (\lam{x}{M})\ N & \subst{M}{x}{N} \\
      (\rai)\ M & \rai \vspace*{0.2cm} \\
      \try{(\rai)}{N} & N \\
      \try{(\frai{{\varepsilon}'})}{N} & \frai{{\varepsilon}'} \qquad\rlap{\mbox{(if ${\varepsilon}\ {\neq}\ {\varepsilon}'$)}}\\
      \try{V}{N} & V \qquad\qquad\ \ \;\rlap{\mbox{(if $V$ is a regular value)}}\vspace*{0.2cm} \\
      \natrec\ X\ Y\ 0 & X \\
      \natrec\ X\ Y\ (S\ N) & Y\ N\ (\natrec\ X\ Y\ N) \\
      \natrec\ X\ Y\ (\rai) & \rai \\[.2cm]
      \foldright\ X\ Y\ \nil & X \\
      \foldright\ X\ Y\ (\lcons\ E\ L) & Y\ E\ L\ (\foldright\ X\ Y\ L)\\
      \foldright\ X\ Y\ (\rai) & \rai
    \end{array}
  \end{displaymath}
\end{framedp}
\caption{Notion of reduction for \fx}
\label{syntax:reduction}
\end{figure}
\begin{figure}[htb!]
\begin{framedp}
  \begin{displaymath}
    \begin{array}{c@{\quad}c@{\quad}c}
      \infer{M\red M'}{M\bred M'} &
      \infer{M\ N\red M'\ N}{M \red M'} &
      \infer{\try{M}{N}\red\try{M'}{N}}{M\red M'}\vspace*{1mm}\\
      \infer{\lam{x}{M}\red\lam{x}{M'}}{M \red M'} &
      \infer{M\ N\red M\ N'}{N \red N'}&
      \infer{\try{M}{N}\red\try{M}{N'}}{N\red N'}
    \end{array}
  \end{displaymath}
\end{framedp}
  \caption{Relation of reduction for \fx}
  \label{syntax:relreduction}
\end{figure}

Note that, as usual, the scope of capture of the \texttt{try}
construction is dynamic: in the term $(\lam{x}{\try{x}{0}})\ (\rai),$
the exception is caught during reduction and 
the whole term reduces to~$0$. 
We say that a term $M$ \emph{raises the exception} ${\varepsilon}$ if
$M\rede\rai$ (that is, if $M$ reduces to the exception named ${\varepsilon}$).

\begin{defi}[to have a value]
  We will say that a term $M$ \emph{has a value} if and only if it reduces to
  a value, that is if there exists a value $V$ such that $M\rede V$.
\end{defi}
It can be proved~\cite{slphd08} that this notion is equivalent to the one of
having a normal form for the weak head reduction of the calculus. 

We now show that adding \texttt{raise} and \texttt{try} does not break the
confluence of the calculus:
\begin{thm}[Confluence]
  \label{syntax:confluence}
  If $M$, $N$ and $N'$
  are terms such that $M\rede N$ and $M\rede N'$, then there exists a term $P$
  such that $N\rede P$ and $N'\rede P.$
\end{thm}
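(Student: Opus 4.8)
The plan is to prove confluence of the reduction relation $\red$ via the standard Tait–Martin-Löf parallel reduction technique. The crux is that the notion of reduction $\bred$ for this calculus includes overlapping and nested redexes (for instance, a $\mathtt{try}$ whose body is itself reducible, or $\natrec$ applied to a subterm that reduces), so a naive single-step diamond property fails: one step on the left may require several steps on the right to close. Parallel reduction sidesteps this by allowing, in one "parallel step," the simultaneous contraction of any set of redexes present in a term.

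First I would define a parallel reduction relation $\Rrightarrow$ on terms by induction on term structure, mirroring the congruence rules of Figure~\ref{syntax:relreduction} but in a reflexive, context-closed way: $x \Rrightarrow x$; if $M \Rrightarrow M'$ and $N \Rrightarrow N'$ then $M\ N \Rrightarrow M'\ N'$, $\lam{x}{M} \Rrightarrow \lam{x}{M'}$, and $\try{M}{N} \Rrightarrow \try{M'}{N'}$; together with one "firing" clause for each rule of the notion of reduction in Figure~\ref{syntax:reduction} applied to already-parallel-reduced subterms, e.g. if $M \Rrightarrow M'$ and $N \Rrightarrow N'$ then $(\lam{x}{M})\ N \Rrightarrow \subst{M'}{x}{N'}$, and $\try{(\rai)}{N} \Rrightarrow N'$, and $\natrec\ X\ Y\ (S\ N) \Rrightarrow Y'\ N'\ (\natrec\ X'\ Y'\ N')$, and so on for the $\foldright$, successor-zero, and exception-propagation rules. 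I would then check the two sandwiching lemmas: $\red\ \subseteq\ \Rrightarrow\ \subseteq\ \rede$, so that the reflexive-transitive closure of $\Rrightarrow$ coincides with $\rede$. Consequently it suffices to establish the diamond property for $\Rrightarrow$, since the diamond property of a relation lifts to confluence of its transitive closure by a routine tiling argument.

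The heart of the proof is the diamond property for $\Rrightarrow$: if $M \Rrightarrow N$ and $M \Rrightarrow N'$, then there is $P$ with $N \Rrightarrow P$ and $N' \Rrightarrow P$. The clean way is to define the \emph{complete development} $M^{*}$ of $M$ — the term obtained by firing, in parallel, every redex occurring in $M$ — and prove the triangle lemma: whenever $M \Rrightarrow N$, we have $N \Rrightarrow M^{*}$. The diamond then follows immediately by taking $P = M^{*}$. I would prove the triangle lemma by induction on the structure of $M$, inspecting which parallel-reduction clause produced the step $M \Rrightarrow N$ and checking in each case that the result still reduces into the complete development. A key auxiliary fact needed throughout is the substitution lemma for parallel reduction: if $M \Rrightarrow M'$ and $N \Rrightarrow N'$ then $\subst{M}{x}{N} \Rrightarrow \subst{M'}{x}{N'}$, proved by induction on $M$; this is what lets the $\beta$-firing case go through.

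The main obstacle I anticipate is the case analysis around the new constructs, specifically the $\mathtt{try}$ rules and the exception-propagation rules, which is where the genuinely non-orthogonal overlaps of this calculus live. For $\mathtt{try}$ the difficulty is that the five reduction rules are selected by the \emph{shape} of the body (whether it is $\rai$, a $\frai{{\varepsilon}'}$ with ${\varepsilon}' \neq {\varepsilon}$, or a regular value), and a parallel step inside the body can change that shape, so I must verify that the firing clauses and the congruence clause interact coherently — e.g. that if $M \Rrightarrow \rai$ then $\try{M}{N}$ and $N$ both reduce to the complete development. Similarly, the propagation rules $(\rai)\ M \bred \rai$, $\natrec\ X\ Y\ (\rai) \bred \rai$, and $\foldright\ X\ Y\ (\rai) \bred \rai$ overlap with the ordinary $\beta$, $\natrec$, and $\foldright$ rules when the relevant subterm reduces to $\rai$, and these critical pairs are exactly what the complete-development definition must resolve consistently. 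Once the firing clauses are stated so that $M^{*}$ is well-defined and every such overlap resolves to it, the induction closes; the remaining lambda-calculus and data-constructor cases are entirely routine and I would not spell them out.
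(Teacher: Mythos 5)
Your proposal is correct and follows essentially the same route as the paper: define a parallel reduction for \fx{} (your firing clauses match the definition the paper gives in its appendix), prove the diamond property for it, and conclude via the coincidence of its reflexive--transitive closure with $\rede$. The only difference is that you make the diamond step explicit through Takahashi's complete-development/triangle argument, whereas the paper leaves that induction to the reader.
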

\begin{proof}
We adapt the proof originated by Tait and Martin-L\"of for
the confluence of pure lambda-calculus that can be found
in~\cite{barendregt1984lc} for example. We define the notion of parallel
reduction $\pred$ for \fx, we show that it satisfies the diamond property and conclude
since $\rede\ =\ \prede$. Proofs of these properties are easy to tackle
inductions we leave to the interested reader. We however give in
appendix~\ref{appendix:pred} the definition of the parallel reduction for~\fx{}. 
\end{proof}

%We also define \emph{weak head reduction} ($\hred$) as usual. The rule for the
%weak head reduction are given Figure~\ref{model:whred}. Note that $\natrec$
%needs a reduction rule since the reduction rule for application does not
%handle it as in the reduction.  The transitive and reflexive closure of
%$\hred$ is noted $\hrede$. 
%
%\begin{figurep}%[!htbp]
%\begin{framedp} 
%  \begin{displaymath}
%    \begin{array}{c@{\qquad\qquad\qquad}c}
%      \infer{M\hred M'}{M\bred M'} &
%      \infer{M\ N\hred M'\ N}{M\hred M'} 
%    \end{array}
%  \end{displaymath}
%  \vspace{1mm}
%  \begin{displaymath}
%    \infer{\try{M}{N}\hred \try{M'}{N'}}{M\hred M' & N\hred N'}
%  \end{displaymath}
%  \vspace{1mm}
%  \begin{displaymath}
%    \begin{array}{c@{\qquad\quad}c}
%      \infer{\natrec\ X\ Y\ M\hred \natrec\ X\ Y\ M'}{M\hred M'} & 
%      \infer{\foldright\ X\ Y\ M\hred \foldright\ X\ Y\ M'}{M\hred M'}
%    \end{array}
%  \end{displaymath}
%\end{framedp}
%  \caption{Weak head reduction}
%  \label{model:whred}
%\end{figurep}
%
%
%Moreover, we say that $t$ \emph{has a weak head normal form} if there exists a
%weak head normal form $t'$ such that $t\red t'$. It should be noted that we
%can prove that $t$ has a weak head normal form if and only if there exists a
%weak head normal form $t''$ such that $t\hrede t''$. 

\subsection{The type system}
\label{typing}
As stressed in Section~\ref{corruption}, \fx{} uses a subtyping relation
$\st$. Thus, \fx{} is in fact an extension of the second-order lambda calculus
with subtyping introduced by Mitchell~\cite{mitchell1988pti,wells1995ums} (and
we will call this calculus System~F${\eta}$ in the following). Note that we will
however use a presentation of this calculus that differs from the original one
and that can be found for example in~\cite{tiuryn1996sps}. 

\begin{defi}[Types]
The syntax of types for \fx{} is built upon the one of System F. \emph{Type}
of \fx{} are generated by the following grammar:
\begin{displaymath}
  A,B ::= {\alpha} \md \nat \md \liste{A} \md \arr{A}{B} \md \fall{{\alpha}}{A} \md \uni{A} \md \pw{A} 
\end{displaymath}
\end{defi}
In $\uni{A}$ and $\pw{A}$, ${\Delta}$ is a finite set of exceptions
names (${\Delta}\ {\subseteq}\ \exceptionset$). Moreover, ${\alpha}$ stands for a type variable
taken from the set of type variables $\typevarset$. Notions of free and bound
type variable are defined as usual, as well as the external operation of
substitution (written $\subst{A}{{\alpha}}{B}$). We denote by $FV(A)$ the set of all
the free type variables of the type $A$. Types are considered up to
${\alpha}$-equivalence.  Precedences for the arrow construction and the universal
quantifier are the usual ones; the precedences of
$\uni{A}$ and $\pw{A}$ being higher. Moreover, we will often write
$\pw{\liste{A}}$ for $\pw{(\liste{A})}$.

\subsubsection{Typing}
\begin{defi}[Typing context]
A \emph{typing context} ${\Gamma}$ is a finite set of declarations having the form
${\Gamma}\ {\equiv}\ x_1 : A_1, \ldots, x_n : A_n$ where $x_1,\ldots,x_n$ are pairwise
distinct term variables and where $A_1,\ldots, A_n$ are arbitrary types.
\end{defi}
The set $FV({\Gamma})$ of free variables of ${\Gamma}$ denotes the union of the sets of free type variables for the
types used in ${\Gamma}$, that is to say: 
\begin{displaymath}
  FV(x_1 : A_1, \ldots, x_n : A_n)\ = \displaystyle\mathop{{\bigcup}}_{i\,{\in}\;\lens{1\ldots n}}FV(A_i)
\end{displaymath}

\begin{defi}[Typing]
The type system of \fx{} is defined from the \emph{typing judgment}
$$\jtyp{M}{A}$$ 
that reads `in the typing context ${\Gamma}$, the term $M$ has type
$A$'. This judgment is inductively defined by the rules of
Figure~\ref{typing:typing}. 
\end{defi}
Remark that the typing rules from System~F${\eta}$ are unchanged, we simply add
rules. Also note that the usual typing rules for the recursion operators can
be retrieved from \textit{(rec)} and \textit{(fold)} by taking ${\Delta}={\emptyset}$ (theses
rules are in fact typing schemes).
\newcommand{\esp}{\hspace*{1cm}}
\begin{figurep}
%\begin{figure}
\begin{framedp}
  \begin{center}
  \textbf{System~F${\eta}$ typing rules:}
  \end{center}
  \vspace*{2mm}
  \begin{displaymath}
    \begin{array}{c@{\qquad}c@{\qquad}c}
      \inferpb{\labelfont{ax}}{\jtyp{x}{A}}{(x:A)\,{\in}\, {\Gamma}} 
      &
      \inferpb{\labelfont{abs}}{\jtyp{\lam{x}{M}}{\arr{A}{B}}}{\jtypctx{{\Gamma}, x:A}{M}{B}}
      &
      \inferpb{\labelfont{app}}{\jtyp{M\ N}{B}}{\jtyp{M}{\arr{A}{B}} & \jtyp{N}{A}}
    \end{array}
  \end{displaymath}
  \begin{displaymath}
    \begin{array}{c@{\qquad\qquad}c}
      \inferpb{\labelfont{gen}}{\jtyp{M}{\fall{{\alpha}}{A}}}{\jtyp{M}{A} & {\alpha}\,{\notin}\;FV({\Gamma})}
      &
      \inferpb{\labelfont{subs}}{\jtyp{M}{B}}{\jtyp{M}{A} & A\;\st\;B}
    \end{array}
  \vspace*{4mm}
  \end{displaymath}
  \begin{center}
  \textbf{Natural numbers typing rules:}
  \end{center}
  \vspace*{2mm}
  \begin{displaymath}
    \begin{array}{c@{\esp\esp}c}
      \inferpbb{\labelfont{zero}}{\jtyp{0}{\nat}}{}
      &
      \inferpbb{\labelfont{succ}}{\jtyp{S}{\arr{\nat}{\nat}}}{}
    \end{array}
  \end{displaymath}
  \vspace*{-2mm}
  \begin{displaymath}
      \inferpbb{\labelfont{rec}}{\jtyp{\natrec}{\fall{{\alpha}}{\uni{{\alpha}}\;{\rightarrow}\;(\pw{\nat}\;{\rightarrow}\;\uni{{\alpha}}\;{\rightarrow}\;\uni{{\alpha}})\;{\rightarrow}\;\funi{\pw{\nat}}{{\Delta}'}\;{\rightarrow}\;\funi{{\alpha}}{({\Delta}{\cup}{\Delta}')}}}}{}\vspace*{-1mm}
  \vspace*{4mm}
  \end{displaymath}
  \begin{center}
  \textbf{List typing rules:}
  \end{center}
  \vspace*{2mm}
  \begin{displaymath}
    \begin{array}{c@{\esp\esp}c}
      \inferpbb{\labelfont{nil}}{\jtyp{\nil}{\fall{{\alpha}}{\liste{{\alpha}}}}}{}
      &
      \inferpbb{\labelfont{cons}}{\jtyp{\lcons}{\fall{{\alpha}}{{\alpha}\;{\rightarrow}\;\liste{{\alpha}}\;{\rightarrow}\;\liste{{\alpha}}}}}{}
    \end{array}
  \end{displaymath}
  \vspace*{-2mm}
  \begin{displaymath}
      \inferpbb{\labelfont{fold}}{
       \begin{array}{l}
        \jtyp{\foldright}{\fall{{\alpha}}{\fall{{\beta}}{\uni{{\alpha}}\;{\rightarrow}\;(\pw{{\beta}}\;{\rightarrow}\;\pw{\liste{{\beta}}}\;{\rightarrow}\;\uni{{\alpha}}\;{\rightarrow}\;\uni{{\alpha}})}}}\\
        \qquad\qquad\qquad\qquad\;{\rightarrow}\;\funi{\pw{\liste{{\beta}}}}{{\Delta}'}\;{\rightarrow}\;\funi{{\alpha}}{({\Delta}{\cup}{\Delta}')}
       \end{array}
        }{}\vspace*{-1mm}
  \vspace*{4mm}
  \end{displaymath}
  \begin{center}
  \textbf{Exceptions handling typing rules:}
  \end{center}
  \vspace*{2mm}
  \begin{displaymath}
    \begin{array}{c@{\esp\esp}c}
      \inferpb{\labelfont{raise}}{\jtyp{\rai}{\fall{{\alpha}}{\funi{{\alpha}}{\lens{{\varepsilon}}}}}}{}
      &
      \inferpb{\labelfont{try}}{\jtyp{\try{M}{N}}{A}}{\jtyp{M}{\funi{A}{\lens{{\varepsilon}}}} & \jtyp{N}{A}}
    \end{array}
  \end{displaymath}
\end{framedp}
\caption{Typing judgments}
\label{typing:typing}
\end{figurep}
%\end{figure}

\subsubsection{Subtyping}
\begin{defi}[Subtyping]
The \emph{subtyping} relation between two types $A$ and $B$, written \mbox{$A \st
B$}, is inductively defined by the rules of Figure~\ref{typing:subtyping}.
\end{defi}
The equality $A \seq B$ is defined as short for ``$A \st B$ and $A \ist B$''. 
In the inference rules, when the equality $A \seq B$ appears as a premise, it
figures for the two premises $A \st B$ and $A \ist B$. And when it appears as
a conclusion, it figures for two inference rules, one having $A \st B$ as a
conclusion, the other one having $A \ist B$.
\renewcommand{\esp}{\hspace*{1cm}}
\begin{figurep}
\begin{framedp}
  \textbf{System F${\eta}$ rules~:}
  \vspace*{1mm}

  \begin{displaymath}
    \begin{array}{c@{\quad}c@{\quad}c}
      \inferp{\labelfont{st-id}}{\jst{A}{A}}{} 
      &
      \inferp{\labelfont{st-trans}}{\jst{A}{C}}{\jst{A}{B} & \jst{B}{C}}
      &
      \inferp{\labelfont{st-arrow}}{\jst{\arr{A}{B}}{\arr{A'}{B'}}}{\jst{A'}{A} & \jst{B}{B'}}
    \end{array}
  \end{displaymath}
  \begin{displaymath}
    \begin{array}{c@{\quad\qquad}c}
      \inferp{\labelfont{f-gen}}{\jst{A}{\fall{{\alpha}}{B}}}{\jst{A}{B} & {\alpha}\,{\notin}\;FV(A)}
      &
      \inferp{\labelfont{f-inst}}{\jst{\fall{{\alpha}}{A}}{\subst{A}{{\alpha}}{B}}}{}
    \end{array}
  \end{displaymath}
  \begin{displaymath}
      \inferp{\labelfont{f-distr}}{\jst{\fall{{\alpha}}{(\arr{A}{B})}}{\arr{A}{\fall{{\alpha}}{B}}}}{{\alpha}\,{\notin}\;FV(A)}
  \vspace*{2mm}
  \end{displaymath}
  \textbf{Exception related rules~:}
  \vspace*{1mm}

  \begin{displaymath}
    \begin{array}{c@{\qquad\qquad}c}
      \inferp{\labelfont{ex-uni}}{\jst{A}{\uni{A}}}{}
      &
      \inferp{\labelfont{ex-corrupt}}{\jst{\uni{A}}{\pw{A}}}{}
    \end{array}
  \end{displaymath}
  \begin{displaymath}
    \inferp{\labelfont{ex-noexc}}{\jst{\fpw{A}{\emptyset}}{A}}{}
  \end{displaymath}
  \begin{displaymath}
    \begin{array}{c@{\qquad\qquad}c}
        \inferp{\labelfont{ex-ctx}}{\jst{\uni{A}}{\uni{B}}}{\jst{A}{B}}
        &
        \inferp{\labelfont{ex-arru}}{\jst{\uni{(\arr{A}{B})}}{\arr{A}{\uni{B}}}}{} 
    \end{array}
  \end{displaymath}
  \vspace*{.2cm}
  \begin{displaymath}
    \begin{array}{c@{\esp}c}
      \inferp{\labelfont{ex-fallc}}{\jst{\fall{{\alpha}}{\pw{A}}}{\pw{(\fall{{\alpha}}{A})}}}{} 
        &
      \inferp{\labelfont{ex-fallu}}{\jst{\fall{{\alpha}}{(\uni{A})}}{\uni{(\fall{{\alpha}}{A})}}}{} 
     \end{array}
  \vspace*{2mm}
  \end{displaymath}
  \begin{displaymath}
    \begin{array}{c@{\esp\quad}c}
      \inferp{\labelfont{ex-lcor}}{\jst{\liste{\pw{A}}}{\pw{\liste{A}}}}{} 
        &
      \inferp{\labelfont{ex-lctx}}{\jst{\liste{A}}{\liste{B}}}{\jst{A}{B}} 
    \end{array}
  \vspace*{2mm}
  \end{displaymath}

  \textbf{Exception related equality rules~:}
  \vspace*{1mm}

  \begin{displaymath}
      \begin{array}{c@{\esp}c}
        \inferp{\labelfont{eq-uu}}{\funi{(\uni{A})}{{\Delta}'} \ \seq\  \funi{A}{({\Delta}{\cup}{\Delta}')}}{} 
        &
        \inferp{\labelfont{eq-cc}}{\fpw{(\pw{A})}{{\Delta}'}\ \seq\ \fpw{A}{({\Delta}{\cup}{\Delta}')}}{} 
      \end{array}
  \end{displaymath}
  \vspace*{.1cm}
   \begin{displaymath}
     \begin{array}{c@{\esp}c}
        \inferp{\labelfont{eq-uc}}{\fpw{(\uni{A})}{{\Delta}'}\ \seq\ \uni{\fpw{A}{{\Delta}'}}}{}
        &
        \inferp{\labelfont{eq-arrc}}{\pw{(\arr{A}{B})}\ \seq\ \arr{\pw{A}}{\pw{B}}}{}
      \end{array}
  \end{displaymath}
\end{framedp}
\caption{The subtyping relation}
\label{typing:subtyping}
\end{figurep}

The subtyping rules from F${\eta}$ are unchanged. The rules
\textit{(ex-noexc)}, \textit{(eq-uu)} and \textit{(eq-cc)} dealt with sets of
exceptions. The hierarchy of corruption (see~\ref{corruption}) is implemented
by \textit{(ex-uni)} and \textit{(ex-corrupt)}. The rules \textit{(ex-fallc)} and
\textit{(ex-fallu)} are justified by the absence of computational content of
the universal quantification. Moreover, corruption and union commutes 
\textit{(eq-uc)}.

The subtyping is stable by union \textit{(ex-ctx)}, but also by corruption 
(this is proved by Theorem~\ref{theorem:corruptstable}).
Rule \textit{(ex-arru)} simply says that, since a term~$M$ of
type~$\uni{(\arr{A}{B})}$ is either a term of type~$\arr{A}{B}$ or an
exception of~${\Delta}$, it can always be applied to a term of type~$A$, resulting in
a term of type $B$ (if $M$ is a true function) or an exception of~${\Delta}$ (if so
is~$M$).

As discussed in Section~\ref{why3levels}, the rule \textit{(eq-arrc)}
is the main rule of corruption and allows exceptions to be used anywhere. Note
that we really need an equality here on pain of losing the subject-reduction
property.

Finally, the list construction is monotonic (rule \textit{(ex-lcor)}) and a
list of corrupted elements is in particular a corrupted list (rule
\textit{ex-lctx}).

\medskip
The subtyping associated to the notion of corruption is a quite flexible one,
especially with respect to arrows. As noted in Section~\ref{why3levels}, it
allows to derive 
\begin{displaymath}
  \jst{A_1\ \to\ A_2\ \to\ \ldots\ \to\ A_n}{\pw{A_1}\ \to\ \pw{A_2}\ \to\ \ldots\ \to\ \pw{A_n}}
\end{displaymath}
but also that $\jst{\arr{\pw{A}}{B}}{\arr{\pw{A}}{\pw{B}}}$ or that
$\jst{\arr{A}{\pw{B}}}{\arr{\pw{A}}{\pw{B}}}$ for instance\footnote{Proofs for
all those relations follow the same pattern. Corruption is introduced on the
right with \textit{(ex-uni)} and \textit{(ex-corrupt)} and is then
distributed over the operands of the arrow with \textit{(eq-arrc)}. Lastly,
double corruption ($\pw{(\pw{A})}$) is eliminated with \textit{(eq-cc)} if
needed.}. However, what the subtyping of corruption forbids is the removable
of corruption in covariant position. That is, corruption allows the use of
functions with exceptions they do not handle themselves, but it then always
ensure that the return type mentions those exceptions. 

\subsubsection{Typing the recursion operations}
\label{typing:tailrec}

\fx{} uses natural numbers and lists. To work with these data types, we
have equipped the calculus with recursion operators ($\natrec$ and
$\foldright$) of Gödel's System T\cite{girard1989pat}. However, in the presence of
exceptions, the usual typing of these operators is not precise enough. Indeed,
consider the case of the natural numbers. The usual typing rule of the
recursion operator
$\natrec$ is $$\fall{{\alpha}}{{\alpha}\ {\rightarrow}\ (\nat\ {\rightarrow}\ {\alpha}\ {\rightarrow}\ {\alpha})\ {\rightarrow}\ {\alpha}\ {\rightarrow}\ {\alpha}},$$
and hence, using the corruption type and its associated subtyping rules, it
also has the type
$$\fall{{\alpha}}{\pw{{\alpha}}\ {\rightarrow}\ (\pw{\nat}\ {\rightarrow}\ \pw{{\alpha}}\ {\rightarrow}\ \pw{{\alpha}})\ {\rightarrow}\ \pw{{\alpha}}\ {\rightarrow}\ \pw{{\alpha}}}$$
for any set of exception names ${\Delta}$. However, this last type is not precise enough
and for instance, it does not precisely account for the reduction rule 
$$\natrec\ X\ Y\ (\rai)\ \red\ \rai.$$
Dealing with this imprecision is the reason of the addition of the set ${\Delta}'$ in
the typing rule of $\natrec$ (rule \labelfont{rec} of
Figure~\ref{typing:typing}). Moreover, the function $eval$ which will be
introduced in section~\ref{examples} reveals another imprecision. Given a
corrupted natural number, this function returns either a well formed natural
number or an exception at top-level. But to give this function the type we
want, that is to say the type $\arr{\pw{\nat}}{\uni{\nat}}$, we need the
addition of the set ${\Delta}$ in the typing rule of the recursion operator (rule
\labelfont{rec} of Figure~\ref{typing:typing}). The typing rule of the
recursion operator $\foldright$ follows the same modifications. 

%
%So, \fx{} uses primitive natural numbers and hence, provides the usual recursion
%operator (denoted $\natrec$). Moreover, a corrupted natural number is an
%integer where some sub-terms may have been replaced by an exception. Hence,
%computationally, it is not difficult to write, using $\natrec$, a function
%that, given a corrupted integer, return either its argument if it is a well
%formed natural number, or the corrupting exception otherwise (we will give
%such a function in Section~\ref{examples}). But to give it the type we
%expect, \textit{i.e.} $\arr{\pw{\nat}}{\uni{\nat}}$, we will have to give to
%the recursion operator a type more precise than the one it is usually given. 
%\todo{Etendre et mettre a jour pour les listes}

\subsection{Properties of typing}

The subtyping relation is stable by corruption:
\begin{thm}
  \label{theorem:corruptstable}
  If $A$ and $B$ are two types such that $A \st B$, then for any set of
  exception names ${\Delta}$, $\pw{A} \st \pw{B}$.
\end{thm}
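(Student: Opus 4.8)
The plan is to argue by induction on the derivation of $A \st B$, keeping the set $\Delta$ fixed throughout and proving the corrupted judgment $\pw{A} \st \pw{B}$. The engine of the proof is the block of equality rules \textit{(eq-arrc)}, \textit{(eq-cc)}, \textit{(eq-uc)} and \textit{(eq-uu)}: they let me slide a corruption $\pw{(\cdot)}$ past an arrow, past an iterated corruption, and past a union, turning $\pw{A}$ and $\pw{B}$ into isomorphic types whose head constructor is exposed. Since corruption only decorates a type with a finite set of exception names and never touches its free type variables, it also commutes with type substitution, $\pw{(\subst{A}{\alpha}{B})} \seq \subst{(\pw{A})}{\alpha}{B}$, and leaves $FV$ unchanged; both facts will be used freely.

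The structural and ``first order'' cases are then routine. For \textit{(st-id)} and \textit{(st-trans)} I reapply the same rule to the corrupted premises supplied by the induction hypothesis. For \textit{(st-arrow)} I rewrite $\pw{(\arr{A}{B})} \seq \arr{\pw{A}}{\pw{B}}$ and $\pw{(\arr{A'}{B'})} \seq \arr{\pw{A'}}{\pw{B'}}$ with \textit{(eq-arrc)}, feed the induction hypotheses $\pw{A'} \st \pw{A}$ and $\pw{B} \st \pw{B'}$ into \textit{(st-arrow)}, and close with transitivity. The corruption axioms \textit{(ex-uni)}, \textit{(ex-corrupt)}, \textit{(ex-noexc)} and \textit{(ex-arru)} are handled the same way: \textit{(eq-uc)} and \textit{(eq-cc)} normalise the two corrupted sides into a union (or corruption) of a smaller type, and I then apply the uncorrupted version of the very same axiom together with \textit{(ex-ctx)}. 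The rule \textit{(f-gen)} is also easy: the induction hypothesis gives $\pw{A} \st \pw{B}$, the side condition $\alpha\notin FV(A)$ survives as $\alpha\notin FV(\pw{A})$, so \textit{(f-gen)} yields $\pw{A} \st \fall{\alpha}{\pw{B}}$, and \textit{(ex-fallc)} then sends $\fall{\alpha}{\pw{B}}$ into $\pw{(\fall{\alpha}{B})}$.

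The real difficulty is concentrated in the cases where the corruption ends up sitting \emph{outside} a universal quantifier or a list, namely \textit{(f-inst)}, \textit{(f-distr)}, \textit{(ex-fallc)}, \textit{(ex-lcor)} and \textit{(ex-lctx)}. In all of these I would like to push the outer corruption inward and invoke the induction hypothesis, but the only available commutation rules, \textit{(ex-fallc)} giving $\fall{\alpha}{\pw{A}} \st \pw{(\fall{\alpha}{A})}$ and \textit{(ex-lcor)} giving $\liste{\pw{A}} \st \pw{\liste{A}}$, point the wrong way: they move an \emph{inner} corruption outward, and their converses are not derivable from the listed rules. For lists the converse is even semantically false, since $\pw{\liste{A}}$ contains a top level $\rai$ whereas $\liste{\pw{A}}$ does not. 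I expect this to be the crux of the whole proof.

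The way I would break the deadlock is to prove, as auxiliary lemmas preceding the induction, the two missing facts in the form that actually holds: for the quantifier, the converse inclusion $\pw{(\fall{\alpha}{A})} \st \fall{\alpha}{\pw{A}}$, upgrading \textit{(ex-fallc)} to an equality $\pw{(\fall{\alpha}{A})} \seq \fall{\alpha}{\pw{A}}$ (morally legitimate because the universal quantifier carries no computational content, exactly as is invoked to justify \textit{(ex-fallc)} and \textit{(ex-fallu)}); and for lists, the decomposition $\pw{\liste{A}} \seq \uni{(\liste{\pw{A}})}$, which exposes a corrupted list as a list of corrupted elements possibly short-circuited by a top level exception. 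Granting these, the list cases close by combining \textit{(ex-lctx)} on the elements with \textit{(ex-ctx)} on the surrounding union, and the quantifier cases close by instantiating or distributing under the new equality. The honest caveat is that these converse and decomposition lemmas are entangled with the statement being proved (e.g. $\pw{(\fall{\alpha}{A})}\st\pw{A}$ is itself an instance of corruption monotonicity applied to \textit{(f-inst)}), so either they must be folded into a single simultaneous induction on the structure of types, or — more transparently — they are discharged in the realizability model of Section~\ref{model}, where their soundness is immediate.
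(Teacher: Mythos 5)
Your induction and your treatment of the structural, arrow and union cases coincide exactly with the paper's proof: the paper proceeds by induction on the derivation of $A \st B$ and closes each case by commuting $\pw{\cdot}$ with the type constructors via the equality rules, working out only \textit{(ex-arru)} as a sample (corrupt both sides, rewrite with \textit{(eq-uc)}, \textit{(eq-arrc)}, \textit{(eq-cc)}, apply the uncorrupted rule, conclude by transitivity). Where you diverge is your claim that the quantifier and list cases are a ``deadlock'': the paper does not share this concern at all --- it asserts that corruption commutes with \emph{all} type constructions and treats \textit{(f-inst)}, \textit{(f-distr)}, \textit{(ex-fallc)}, \textit{(ex-lcor)} and \textit{(ex-lctx)} as equally routine, offering no auxiliary lemma for them. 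Your observation that \textit{(ex-fallc)} and \textit{(ex-lcor)} are one-directional, so that the blanket commutation argument does not literally apply to those cases, is a fair criticism of the paper's two-line proof; but it does not license the repairs you propose.

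Those repairs are where your proposal breaks. First, the list equality $\pw{\liste{A}}\ \seq\ \uni{(\liste{\pw{A}})}$ is false in the paper's own model: by Lemma~\ref{lemme:listislist}, a term such as $\lcons\ a\ (\rai)$ with ${\varepsilon}\,{\in}\;{\Delta}$ and $a\,{\in}\;\trad{\pw{A}}$ (an exception in the \emph{spine} of the list) inhabits $\trad{\pw{\liste{A}}}$, whereas by Lemmas~\ref{lemme:fx:model:uniisuni} and~\ref{lemme:listislist} it inhabits neither $\trad{\liste{\pw{A}}}$ (whose spines must end in $\nil$ or $\daemon$) nor the top-level-exception part of the union; corrupted lists may raise exceptions at any depth of the spine, not just at top level or inside elements, so no such two-level decomposition can be correct. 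Second, ``upgrading'' \textit{(ex-fallc)} to an equality is not a proof step but a modification of the system: the theorem asserts derivability of $\pw{A} \st \pw{B}$ in the rules of Figure~\ref{typing:subtyping} as given. Third, discharging the missing facts in the realizability model proves the wrong statement: the model yields the semantic inclusion $\trad{\pw{A}}\,{\subseteq}\,\trad{\pw{B}}$, which is Lemma~\ref{lemme:model:subtypingsoundness}, not the syntactic judgment $\pw{A} \st \pw{B}$ that Theorem~\ref{theorem:corruptstable} claims and that the paper subsequently uses as a derived inference. So, by your own admission, your argument does not close the quantifier and list cases, and each of the suggested patches is either unsound or establishes a different theorem.
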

\proof
  We proceed by induction on the derivation of $A \st B$. All the cases are
  easily resolved since corruption commutes with all type constructions. For
  example, taking the case of rule \tlabel{ex-arru}, we have to show that
  $\pw{(\funi{(\arr{A}{B})}{{\Delta}'})}\ \st\ \pw{(\arr{A}{\funi{B}{{\Delta}'}})}.$
  But using rule \tlabel{ex-arru}, 
  $\funi{(\arr{\pw{A}}{\pw{B}})}{{\Delta}'}\ \st\ \arr{\pw{A}}{\funi{\pw{B}}{{\Delta}'}}$
  and we conclude using the fact that 
  %\begin{displaymath}
    $\funi{(\arr{\pw{A}}{\pw{B}})}{{\Delta}'}\ \seq\ \pw{(\funi{(\arr{A}{B})}{{\Delta}'})}$
  %  \qquad \mbox{and} \qquad 
  and
    $\arr{\pw{A}}{\funi{\pw{B}}{{\Delta}'}}\ \seq\ \pw{(\arr{A}{\funi{B}{{\Delta}'}})}$.\qed
  %\end{displaymath}

\noindent A few remarkable subtyping rules are also easily derivable from the ones
of Figure~\ref{typing:subtyping}:
\begin{thm}
  The following subtyping relations hold:
  \begin{displaymath}
    \begin{array}{rcl}
      \funi{A}{{\emptyset}} & \st & A \\
      \uni{(\fall{{\alpha}}{A})} & \st & \fall{{\alpha}}{(\uni{A})} \\
      \pw{(\fall{{\alpha}}{A})} & \st & \fall{{\alpha}}{(\pw{A})} 
    \end{array}
  \end{displaymath}
\end{thm}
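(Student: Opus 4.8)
The plan is to derive each of the three inclusions by a short chain of the primitive rules of Figure~\ref{typing:subtyping}, reusing Theorem~\ref{theorem:corruptstable} for the corruption case. The workhorse for the two quantifier relations is the instantiation rule \tlabel{f-inst}: taking $B = {\alpha}$ gives $\subst{A}{{\alpha}}{{\alpha}} = A$, hence $\fall{{\alpha}}{A}\ \st\ A$. I will then push the surrounding constructor under the quantifier on the right with the generalisation rule \tlabel{f-gen}, whose side condition ${\alpha}\notin FV(\cdot)$ is automatically satisfied because ${\alpha}$ is bound in $\fall{{\alpha}}{A}$.

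For the first relation $\funi{A}{\emptyset}\ \st\ A$, I would first weaken the union to a corruption with \tlabel{ex-corrupt} (instantiated at ${\Delta} = \emptyset$), obtaining $\funi{A}{\emptyset}\ \st\ \fpw{A}{\emptyset}$, and then discharge the empty corruption with \tlabel{ex-noexc}, namely $\fpw{A}{\emptyset}\ \st\ A$; chaining the two by \tlabel{st-trans} gives the result. The point to note is that there is no direct ``empty-set'' elimination rule for union itself, so routing through corruption is the intended move.

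For the second relation $\uni{(\fall{{\alpha}}{A})}\ \st\ \fall{{\alpha}}{(\uni{A})}$, I would start from $\fall{{\alpha}}{A}\ \st\ A$ above and apply the monotonicity of union \tlabel{ex-ctx} to get $\uni{(\fall{{\alpha}}{A})}\ \st\ \uni{A}$; a final application of \tlabel{f-gen} (with ${\alpha}\notin FV(\uni{(\fall{{\alpha}}{A})})$) yields the goal. For the third relation $\pw{(\fall{{\alpha}}{A})}\ \st\ \fall{{\alpha}}{(\pw{A})}$, the argument has exactly the same shape, except that union monotonicity has no primitive corruption counterpart: here I would invoke Theorem~\ref{theorem:corruptstable} to turn $\fall{{\alpha}}{A}\ \st\ A$ into $\pw{(\fall{{\alpha}}{A})}\ \st\ \pw{A}$, and close again with \tlabel{f-gen}.

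I do not expect a genuine obstacle. The only thing to be careful about is that the directed quantifier rules \tlabel{ex-fallu} and \tlabel{ex-fallc} point the opposite way to the inclusions sought here and therefore cannot be used directly; the correct derivations instead combine instantiation on the left with generalisation on the right. The remaining check is purely bookkeeping: verifying the side condition of \tlabel{f-gen}, which holds in both cases since ${\alpha}$ is captured by the quantifier in $\fall{{\alpha}}{A}$.
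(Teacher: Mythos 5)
Your proof is correct and takes essentially the same route as the paper: rules \tlabel{ex-corrupt} and \tlabel{ex-noexc} chained by transitivity for the first relation, and, for the two quantifier relations, instantiation \tlabel{f-inst} on the left, monotonicity (\tlabel{ex-ctx}, respectively Theorem~\ref{theorem:corruptstable}), and \tlabel{f-gen} on the right. The only cosmetic difference is that the paper first forms $\fall{{\alpha}}{A}\ \st\ \uni{A}$ via \tlabel{ex-uni} and then needs \tlabel{eq-uu} to collapse the resulting double union, whereas you apply monotonicity directly to $\fall{{\alpha}}{A}\ \st\ A$ and so skip the \tlabel{eq-uu}/\tlabel{eq-cc} step.
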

\proof
  Proof of $\funi{A}{{\emptyset}}\st A$ comes from rules~\tlabel{ex-corrupt}
  and~\tlabel{ex-noexc}. The proofs for $\uni{(\fall{{\alpha}}{A})} \st
  \fall{{\alpha}}{(\uni{A})}$ and $\pw{(\fall{{\alpha}}{A})} \st \fall{{\alpha}}{(\pw{A})}$ are
  similar. For instance, for the former one, we use \tlabel{f-inst} and
  \tlabel{ex-uni} to show that $\fall{{\alpha}}{A}\st \uni{A}$. Then, using
  \tlabel{ex-ctx} and \tlabel{eq-uu}, we show that $\uni{(\fall{{\alpha}}{A})}\st
  \uni{A}$. And we conclude with \tlabel{f-gen}.\qed

\begin{defi}
The \emph{corruption relation} $\morecor{{\Delta}}{}{}$ between terms is inductively
defined Figure~\ref{fig:typing:morecorruption}. To have $\morecor{{\Delta}}{M}{N}$
means that $N$ is obtained from $M$ by replacing some sub-terms in any
position by $\rai$, ${\varepsilon}$~belonging to ${\Delta}$.
\end{defi}
\begin{figurep}
  \begin{framedp}
    \begin{displaymath}
      \begin{array}{c@{\qquad}c@{\qquad}c}
        \infer[\tlabel{c-id}]{\morecor{{\Delta}}{M}{M}}{} &
        \infer[\tlabel{c-rai}]{\morecor{{\Delta}}{M}{\rai}}{{\varepsilon}\,{\in}\,{\Delta}} &
        \infer[\tlabel{c-lam}]{\morecor{{\Delta}}{\lam{x}{M}}{\lam{x}{t'}}}{\morecor{{\Delta}}{M}{M'}}
      \end{array}
    \end{displaymath}
    \begin{displaymath}
      \begin{array}{c@{\quad}c}
        \infer[\tlabel{c-app}]{\morecor{{\Delta}}{M\ N}{M'\ N'}}{\morecor{{\Delta}}{M}{M'} & \morecor{{\Delta}}{N}{N'}}
        & 
        \infer[\tlabel{c-try}]{\morecor{{\Delta}}{\try{M}{N}}{\try{M'}{N'}}}{\morecor{{\Delta}}{M}{M'} & \morecor{{\Delta}}{N}{N'}}
      \end{array}
    \end{displaymath}
  \end{framedp}
  \caption{Corruption relation}
  \label{fig:typing:morecorruption}
\end{figurep}

Thus, Theorem~\ref{typing:thmcorruption} formally states that, in term of
programming, exceptions can be used in any place, but with the added cost of
corrupting the type.
\begin{thm}[corruption]
  \label{typing:thmcorruption}
    If $M$ and $N$ are two terms, $A$ a type and ${\Delta}$ a set of exceptions such 
    that $\jtyp{M}{A}$ and $\morecor{{\Delta}}{M}{N}$, 
    then $\jtyp{N}{\pw{A}}$.
\end{thm}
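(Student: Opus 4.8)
The plan is to prove a statement slightly stronger than the theorem and then specialise it. Write $\pw{{\Gamma}}$ for the context obtained from ${\Gamma}$ by replacing each declaration $x:A$ by $x:\pw{A}$. I will show, by induction, that whenever $\jtypctx{{\Gamma}}{M}{A}$ and $\morecor{{\Delta}}{M}{N}$ hold, one also has $\jtypctx{\pw{{\Gamma}}}{N}{\pw{A}}$. Corrupting the context is forced by the abstraction case: corrupting the body of a $\lambda$ makes the bound variable range over potentially exceptional arguments, and since $\pw{(\arr{C}{D})} \seq \arr{\pw{C}}{\pw{D}}$ by \tlabel{eq-arrc}, the domain must be read as $\pw{C}$ rather than $C$. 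The theorem itself then follows at once: ${\Gamma}$ is obtained from $\pw{{\Gamma}}$ by narrowing every assumption to a subtype, since $A \st \pw{A}$ by \tlabel{ex-uni} and \tlabel{ex-corrupt}, and narrowing preserves typing (replace each axiom use of $x:\pw{A}$ by $x:A$ followed by \tlabel{subs}).

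I would argue by induction on the derivation of $\jtypctx{{\Gamma}}{M}{A}$, inverting the corruption hypothesis, whose applicable rules are fixed by the head of $M$ (always \tlabel{c-rai}, giving $N=\frai{{\varepsilon}}$, and otherwise either $N=M$ or the structural rule matching the head, with \tlabel{c-id} folded into the latter). Two families are immediate. Whenever inversion uses \tlabel{c-rai} we have $N = \frai{{\varepsilon}}$ with ${\varepsilon} \in {\Delta}$, and it suffices to observe that such an exception inhabits $\pw{B}$ for every type $B$: from \tlabel{raise} and \tlabel{f-inst} we get the type $\funi{B}{\lens{{\varepsilon}}}$, and then \tlabel{ex-corrupt} together with the monotonicity of corruption in its set of names (an easy consequence of \tlabel{ex-uni}, \tlabel{ex-corrupt} and \tlabel{eq-cc}, using ${\varepsilon} \in {\Delta}$) gives $\funi{B}{\lens{{\varepsilon}}} \st \pw{B}$. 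The axiom cases (\tlabel{ax} and the constants, including \tlabel{raise}) with $N = M$ are settled directly by $A \st \pw{A}$ and \tlabel{subs}.

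The structural cases reduce, each time, to applying the corresponding typing rule in the corrupted context after rewriting a type by one of the corruption equalities. For \tlabel{app} with $N = M_1'\,M_2'$, induction gives $\jtypctx{\pw{{\Gamma}}}{M_1'}{\pw{(\arr{C}{A})}}$ and $\jtypctx{\pw{{\Gamma}}}{M_2'}{\pw{C}}$; rewriting $\pw{(\arr{C}{A})} \seq \arr{\pw{C}}{\pw{A}}$ by \tlabel{eq-arrc} and applying \tlabel{app} yields $\pw{A}$. For \tlabel{try} catching ${\varepsilon}$, induction gives a body of type $\pw{(\funi{A}{\lens{{\varepsilon}}})}$, which equals $\funi{(\pw{A})}{\lens{{\varepsilon}}}$ by \tlabel{eq-uc}, so \tlabel{try} applies and returns $\pw{A}$. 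For \tlabel{gen} one uses that corruption introduces no new free type variables, so $FV(\pw{{\Gamma}}) = FV({\Gamma})$ and the eigenvariable remains fresh, together with \tlabel{ex-fallc} to pass from $\fall{{\alpha}}{\pw{A}}$ to $\pw{(\fall{{\alpha}}{A})}$; and \tlabel{subs} invokes Theorem~\ref{theorem:corruptstable} to turn $A_1 \st A_2$ into $\pw{A_1} \st \pw{A_2}$. The crux, and the only place where the proof could stall, is \tlabel{abs}, where the type has the form $\arr{C}{D}$: keeping ${\Gamma}$ fixed one would derive only $\arr{C}{\pw{D}}$, which is \emph{not} below $\pw{(\arr{C}{D})} \seq \arr{\pw{C}}{\pw{D}}$ (that would demand the false $\pw{C} \st C$); the corrupted hypothesis $\jtypctx{\pw{{\Gamma}}, x:\pw{C}}{M_0'}{\pw{D}}$ is exactly what lets \tlabel{abs} build $\arr{\pw{C}}{\pw{D}} \seq \pw{(\arr{C}{D})}$ instead, which is why the context had to be corrupted from the start.
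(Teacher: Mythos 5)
Your proof is correct, but it is essentially the mirror image of the paper's argument. The paper inducts on the corruption statement $\morecor{{\Delta}}{M}{N}$ and pays for the non-syntax-directed typing rules by first proving three inversion lemmas for typing (of abstractions, of applications, and of \texttt{try}); you induct on the typing derivation instead and invert the corruption relation, which is essentially syntax-directed, so that \tlabel{subs} and \tlabel{gen} become ordinary induction cases discharged by Theorem~\ref{theorem:corruptstable} and \tlabel{ex-fallc}, exactly as you say. The trade-off is genuine: you need no inversion lemma about typing at all, and instead pay with a (routine) narrowing lemma and with carrying $\pw{{\Gamma}}$ through the induction. One claim in your write-up is false, however, although it happens to cost you nothing: the alleged crux that, with ${\Gamma}$ kept fixed, the \tlabel{abs} case would stall because $\arr{C}{\pw{D}}$ is not a subtype of $\pw{(\arr{C}{D})}$. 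That subtyping does hold --- the paper explicitly derives $\jst{\arr{A}{\pw{B}}}{\arr{\pw{A}}{\pw{B}}}$ --- because one need not go through the contravariant rule \tlabel{st-arrow} at all: corrupt the whole arrow, distribute, and collapse the double corruption,
\[
\arr{C}{\pw{D}}\ \st\ \pw{(\arr{C}{\pw{D}})}\ \seq\ \arr{\pw{C}}{\pw{(\pw{D})}}\ \seq\ \arr{\pw{C}}{\pw{D}}\ \seq\ \pw{(\arr{C}{D})}
\]
using \tlabel{ex-uni}, \tlabel{ex-corrupt}, \tlabel{eq-arrc} and \tlabel{eq-cc}. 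Consequently the un-strengthened statement ($\jtyp{N}{\pw{A}}$ with ${\Gamma}$ untouched) can be proved directly by your induction; corrupting the context and the final narrowing step are sound but superfluous.
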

\proof
This theorem is proved by induction on the statement $\morecor{{\Delta}}{M}{N}$. The proof
presents no major difficulty as long as we first prove the three following
``inversion'' results~:
\begin{enumerate}[(1)]
  \item If $M$ is a term, $A$ a type and ${\Gamma}$ a typing context such that 
    $$\jtyp{\lam{x}{M}}{A},$$ then there exists a set of type variable
    $\overrightarrow{{\alpha}}\,{\notin}\;FV({\Gamma})$ and two terms $B$ and $C$ such that
    $\jst{\fall{\overrightarrow{{\alpha}}}{(\arr{B}{C})}}{A}$ and $\jtypctx{{\Gamma},
    x:B}{M}{C}$.
  \item If $M$ and $N$ are two terms, $A$ a type and ${\Gamma}$ a typing context such
    that $$\jtyp{M\ N}{A}\ ,$$ then there exists a term $C$ such that
    $\jtyp{M}{\arr{C}{A}}$ and $\jtyp{N}{C}$.
  \item If $M$ and $N$ are two terms, $A$ is a type and ${\Gamma}$ is a typing context such
    that $$\jtyp{\try{M}{N}}{A},$$ then 
    $\jtyp{M}{\suni{A}{{\varepsilon}}}$ and $\jtyp{N}{A}$.
\end{enumerate}
Proofs of these three results are straightforward inductions on the derivation
of the initial typing judgment.\qed

\section{Examples}
\label{examples}
\noindent A simple yet classical function on natural numbers which can
raise an exception is the predecessor function. In \fx{}, we can define:
%\vspace*{-2mm}
\begin{displaymath}
  \cmd{pred}\quad {\equiv}\quad \natrec\ (\rai)\ (\lam{x}{\lam{y}{x}})\quad : \quad
  \arr{\nat}{\suni{\nat}{\exn{pred\_err}}}
%\vspace*{-2mm}
\end{displaymath}
It has the expected reductions, i.e. $\cmd{pred}\ 0 \rede\rai$ and $\cmd{pred}\ (S\
N)\rede N$. We can then define a ``safe'' predecessor $\cmd{pred'}$ from
$\cmd{pred}$ which returns $0$ when applied to $0$:
%\vspace*{-2mm}
\begin{displaymath}
  \cmd{pred'}\quad {\equiv}\quad \lam{n}{\ftry{(\cmd{pred}\ n)}{\exn{pred\_err}}{0}} \quad : \quad \arr{\nat}{\nat}
%\vspace*{-2mm}
\end{displaymath}

Having exceptions, it is possible to define the functions that return the
head and the tail of a list:
\begin{displaymath}
  \begin{array}{rcl}
    hd &:& \arr{\liste{A}}{\funi{A}{\lens{\exn{hd\_fail}}}} \\
       &{\equiv}& \foldright\ (\frai{\exn{hd\_fail}})\ (\lam{e}{\lam{l}{\lam{\_}{e}}})\\[0.2cm]
    tl &:& \arr{\liste{A}}{\funi{\liste{A}}{\lens{\exn{tl\_fail}}}} \\
       &{\equiv}& \foldright\ (\frai{\exn{tl\_fail}})\ (\lam{e}{\lam{l}{\lam{\_}{l}}})
  \end{array}
\end{displaymath}

We can also define the Euclidean division ($div : \nat\ {\rightarrow}\ \nat\ {\rightarrow}\
\funi{\nat}{\lens{\exn{div\_by\_0}}}$) and the mapping  of a function to
a list ($map : \fall{{\alpha}}{\fall{{\beta}}{({\alpha}\ {\rightarrow}\ {\beta})\ {\rightarrow}\ \liste{{\alpha}}\ {\rightarrow}\ \liste{{\beta}}}}$).
The type system being modular, we can (and will) use these two functions
without having to exhibit a particular implementation. 
%
%We will not give the details of their definition here but it is not difficult
%to define functions like the euclidean division or the mapping of a function to
%a list. We simply give their types: 
%\begin{displaymath}
%  \begin{array}{rcl}
%    div &:& \nat\ {\rightarrow}\ \nat\ {\rightarrow}\ \funi{\nat}{\lens{\exn{div\_by\_0}}}\\
%    map &:& \fall{{\alpha}}{\fall{{\beta}}{({\alpha}\ {\rightarrow}\ {\beta})\ {\rightarrow}\ \liste{{\alpha}}\ {\rightarrow}\ \liste{{\beta}}}}
%  \end{array}
%\end{displaymath}
But using them allows us to define the following function that maps the function
$n\ {\mapsto}\ \frac{2}{n-1}$ to a list:
\begin{displaymath}
  \begin{array}{rcl}
    f &:& \liste{\nat}\ {\rightarrow}\ \liste{(\funi{\fpw{\nat}{\lens{\exn{pred\_err}}}}{\lens{\exn{div\_by\_0}}})}\\
        &{\equiv}& map\ (\lam{n}{div\ 10\ (pred\ n)}) 
  \end{array} 
\end{displaymath}
Remark that the result is always a list but that can contain exceptional
values. For instance, $f\ [2; 1; 5]$ computes the list $[5;
\frai{\exn{div\_by\_0}}; 2]$ which does not reduces to
$\frai{\exn{div\_by\_0}}$. Again, exceptions are values that propagate only
when used. 
Now we can get the first element of the result of this function with:
\begin{displaymath}
  \begin{array}{rcl}
    g &:& \liste{\nat}\ {\rightarrow}\ \funi{\fpw{\nat}{\lens{\exn{pred\_err},
    \exn{div\_by\_0}}}}{\lens{\exn{hd\_fail}}}\\
      &{\equiv}& \lam{l}{hd\ (f\ l)}
  \end{array} 
\end{displaymath}
We can apply $g$ to some argument and catch the exception $\exn{hd\_fail}$
with a $\code{try}$, but we cannot catch the two other exceptions since these
ones are not necessarily at top-level. If we want to catch them, we need a
function that evaluates a natural number potentially corrupted and returns
either a well formed natural number or an exception. It is the purpose of the
following function:
\begin{displaymath}
  \begin{array}{rcl}
    eval &:& \arr{\pw{\nat}}{\uni{\nat}}\\
         &{\equiv}& \lam{n}{(\natrec\ (\lam{a}{a})\ (\lam{m}{\lam{r}{\lam{a}{r\ (S\ a)}}})\ n)\ 0}
  \end{array}
\end{displaymath}
With this function, we can now capture the exceptions that can appear in
the result of the function $g$ above. That is what the following function does
(where we use a straightforward shortcut allowing the $\code{try}$ to catch
all the exceptions):
\begin{displaymath}
  \begin{array}{rcl}
    h &:& \arr{\liste{\nat}}{\nat} \\
      &{\equiv}& \lam{l}{\mathtt{try}}\ (eval\ (g\ l))\\
      & & \hphantom{\lam{l}{}}\mathtt{with}\:\exn{pred\_err},\exn{div\_by\_0},\exn{hd\_fail}~{\mapsto}~0
  \end{array}
\end{displaymath}
Note that for instance $h\ [2; 1; 5]$ will return $5$ in our system since the
part of the list that would yield an exception (the second element after the
mapping) is never used (we only use the head of the list). In contrast, a
similar function in say Caml would have yielded~$0$.

\section{Realizability model}
\label{model}
\noindent We will define a realizability model for \fx{} using techniques of
orthogonality (see~\cite{parigot1993sns,vouillon2004stf} for examples of use
of such techniques). The choice of those orthogonality techniques is mainly
motivated by two reasons: we believe that it offers a nice way to handle
second order and it will come in handy for the definition of the
interpretation of corruption, allowing a much more simple definition than a
``direct'' model would allow. We start by introducing a few definitions
necessary to the construction of the model.

\subsection{Daimon and contexts}

We add a new and distinguished term, the \emph{daimon} (denoted~$\daemon$)
similar to the one of \cite{girard2001lsr}. This term computationally behaves
like an uncatchable exception. We also introduce the new term construction $M;
N$. This construction tests if $M$ is the daimon and if so, return~$N$.
Otherwise, it does not reduce. The reduction rules for these two additions are
given Figure~\ref{fig:model:daimon}. Moreover, $\daemon$ is added to the
definition of value. 
Note that none of these constructs have typing rules and as such, they
cannot be used in well-typed terms. It can also easily be proved that they do
not break the confluence property of the language. In those respects, they are
only convenient technical addition for the model and should not be considered
as inherent part of the language.  
\begin{figurep}%[!htbp]
\begin{framedp} 
  \begin{displaymath}
    \begin{array}{r@{\quad\bred\quad}l}
      \daemon\ N & \daemon \\
      \try{\daemon}{N} & \daemon \\
      \natrec\ X\ Y\ \daemon & \daemon \\
      \foldright\ X\ Y\ \daemon & \daemon \\[1mm]
      \daemon; N & N
    \end{array}
  \end{displaymath}
\end{framedp}
\caption{Reduction rules for $\daemon$ and $;$. }
\label{fig:model:daimon}
\end{figurep}

The daimon has two purposes in the model. First, it will inhabit all type
interpretation, property that will be used to show that all the terms of the
interpretation are weakly normalizing (see
Lemma~\ref{lemme:fx:model:modelwhnf}). Secondly, our model is a realizability
one, types will be interpreted by sets of terms. But the principle of our
orthogonality model is to not define those sets directly, but instead to
first define the interpretation of types as sets of evaluation contexts. Then,
to each such set $S$ of evaluation contexts is associated the set of all the
terms that ``behave correctly'' for all the contexts of $S$. This notion of
a term $M$ ``behaving correctly'' in a context $C$ is the orthogonality
relation. For our model we chose it to be that $M$ put in the context $C$
reduces to this distinguished term $\daemon$. But to define formally this
orthogonality relation, let us first define formally the evaluation contexts
we will use: 
\begin{defi}[Context]
A \emph{context} is a term with a hole (denoted by $\hole$) defined by:
\begin{displaymath}
  C ::= \hole \md C\ N \md \try{C}{\daemon} \md \natrec\ M\ N\ C \md
  \foldright\ X\ Y\ C
\end{displaymath}
The set of all contexts is noted $\contextset$ and the term obtained by
filling the hole of a context $C$ with the term $M$ is noted $C[M]$. 
\end{defi}
Note that our definition of context is more restrictive than the usual
one (where a context is \emph{any} term with a hole). Actually, save the
restriction in the handler of \texttt{try} to $\daemon$ (which will allow for
a simpler interpretation of corruption), our contexts are the evaluation
contexts of call-by-name evaluation. 

Moreover, we will not care about the order of two adjacent \texttt{try} in a
context. Since the set of all exception names $\exceptionset$ is countable, we
can fix \textit{a priori} a bijection ${\phi} : \exceptionset\;\to\;\mathcal{N}$
and we define the following notation:
\begin{nota}
  If ${\Delta}$ is a (finite) set of exception names, then 
    $\stry{\hole}$
  is a notation for
  \begin{enumerate}[$\bullet$]
    \item the context
      $\ftry{(\ldots(\ftry{\hole}{{\varepsilon}_1}{\daemon})\ldots)}{{\varepsilon}_n}{\daemon}$
      if ${\Delta}\ {\neq}\ {\emptyset}$ and if ${\varepsilon}_1, \ldots\, {\varepsilon}_n$ are the elements of ${\Delta}$ arranged
        according to ${\phi}$ (that is, ${\phi}({\varepsilon}_1) < \ldots < {\phi}({\varepsilon}_n)$).
    \item the empty context otherwise (if ${\Delta} = {\emptyset}$).
  \end{enumerate}~
\end{nota}

\noindent Contexts have the following property~:
\begin{lem}
\label{appendix:mnhnfmhnf1}
  If $C$ is a context and $M$ is a term such that $C[M]$ has a value, then $M$
  has a value.
\end{lem}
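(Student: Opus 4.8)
The plan is to exploit the fact that the contexts of the preceding definition are exactly the weak-head evaluation contexts of the call-by-name calculus, so that a redex can fire at the top of $C[M]$ only once the term sitting in the hole has itself become a value. I would first reduce the statement to the case of \emph{single-level} contexts. A context $C$ is built by iterating the four productions $\hole\ N$, $\try{\hole}{\daemon}$, $\natrec\ X\ Y\ \hole$ and $\foldright\ X\ Y\ \hole$, so writing $C$ as such an iterated composition and arguing by induction on the structure of $C$, it suffices to prove: for a single-level context $D$ among these four shapes, if $D[P]$ has a value then $P$ has a value. The base case $C = \hole$ is immediate, and each composition step feeds the conclusion of one single-level lemma into the structural induction hypothesis for the strictly smaller sub-context.

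For each of the four single-level shapes I would argue by induction on the length of a reduction $D[P] \rede V$ to a value, with the induction hypothesis universally quantified over the fixed parameters of $D$ (the argument $N$, respectively the operators $X, Y$) as well as over $P$. Inspecting the first reduction step $D[P] \red T$, there are three kinds of steps. If the step takes place strictly inside the hole-term, $P \red P'$, then $D[P'] \rede V$ in fewer steps and the induction hypothesis produces a value for $P'$, hence for $P$. If the step takes place inside a fixed parameter of $D$ (for instance $X \red X'$ in $\natrec\ X\ Y\ \hole$), then $T$ is again of the shape $D'[P]$ for a single-level context $D'$ of the same kind, so the suitably quantified induction hypothesis applies directly and yields a value for $P$.

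The decisive case is when the first step is a head reduction $D[P] \bred T$. Here the crucial observation is that, scanning the rules of Figure~\ref{syntax:reduction} and Figure~\ref{fig:model:daimon}, every redex whose outermost shape matches $D$ forces $P$ itself to be a value: the application redexes $(\lam{x}{M})\ N$, $(\rai)\ M$ and $\daemon\ N$ have a value ($\lam{x}{M}$, $\rai$, $\daemon$) in function position; the recursor redexes $\natrec\ X\ Y\ Q$ and $\foldright\ X\ Y\ Q$ fire only when $Q$ is one of $0$, $S\ N$, $\rai$, $\daemon$, respectively $\nil$, $\lcons\ E\ L$, $\rai$, $\daemon$, all of which are values; and the \texttt{try} redexes require the body to be a regular value, $\rai$, some $\frai{{\varepsilon}'}$, or $\daemon$. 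Thus in every head-reduction case $P$ is already a value and we conclude without even examining $T$. The base case (a reduction of length zero) is handled by the dual remark that none of the four shapes $D[P]$ is itself a value unless $P$ is: only $\hole\ N$ can be a value, namely $S\ N$, $\natrec\ M$, $\lcons\ M$, and so on, and in each such case the function part $P$ is a regular value.

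The only point requiring genuine care is the bookkeeping in the second kind of step, namely stating the induction hypothesis with enough generality (quantifying over $N$, $X$ and $Y$) so that a reduction inside a fixed parameter of the context keeps the term within the scope of the hypothesis. Everything else is a direct inspection of the finitely many reduction rules, driven by the single structural observation that a top-level redex in an evaluation context can only appear once the hole has been filled by a value; in particular no appeal to confluence is needed.
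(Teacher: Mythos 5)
Your proof is correct and takes essentially the same route as the paper's: the paper's (very terse) proof likewise proceeds by case analysis on the form of the context together with induction on the length of the reduction of $C[M]$ to a value. Your elaboration --- factoring $C$ into single-level frames, quantifying the induction hypothesis over the frame's parameters, and observing that a head redex can only fire once the hole holds a value --- is precisely the detail the paper declares to raise ``no specific difficulties'' and leaves to the reader.
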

\proof
  By case on the form of the context $C$ and by induction on the length of the
  reduction of $C[M]$ to the value. No case raises specific
  difficulties.\qed

\subsection{Orthogonality relation}

\begin{defi}[Orthogonality relation]
  \label{definition:orthogonlatyrelation}
  If $M$ is a term and $C$ a context, then $M\;{\perp}\;C$ (and we say that $M$ and
  $C$ are orthogonal) if and only if $C[M]\rede\daemon$.
\end{defi}

\noindent Moreover, if $S$ is a set of contexts, we define the set of terms $\orth{S}$ by 
\begin{displaymath}
  \orth{S}\quad =\quad \ens{M \md {\forall}\,C\;{\in}\;S,\;M\;{\perp}\;C} 
\end{displaymath}
Note that as with any orthogonality relation, we can easily check that ${\perp}$
verifies the following properties:
\begin{lem}
  \label{lemme:orthoincl}
  If $S$ and $T$ are two context sets such that $S\;{\subseteq}\;T$, then
  $\orth{T}\;{\subseteq}\;\orth{S}$.
\end{lem}
\begin{lem}
  \label{lemme:orthounion}
  If $I$ is any set and $(S_i)_{i\,{\in}\,I}$ is a family of set of contexts
  indexed by $i$, then
  $\orth{(\mathop{{\bigcup}}_{i\,{\in}\,I} S_i)} \ =\  \mathop{{\bigcap}}_{i\,{\in}\,I}\orth{S_i}$.
\end{lem}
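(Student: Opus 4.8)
The plan is to prove the set equality $\orth{(\bigcup_{i\in I} S_i)} = \bigcap_{i\in I}\orth{S_i}$ by a direct chain of logical equivalences, unfolding the definition of $\orth{(-)}$ on both sides. Since the statement is about sets of terms, I would establish that an arbitrary term $M$ belongs to the left-hand side if and only if it belongs to the right-hand side, and conclude by extensionality. No reduction-theoretic reasoning about $\fx$ is needed here; everything reduces to the meaning of the universal quantifier ``for all $C$'' ranging over a union of families.

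First I would fix an arbitrary term $M$ and unfold membership in the left-hand side. By the definition of $\orth{S}$, we have $M \in \orth{(\bigcup_{i\in I} S_i)}$ if and only if for every context $C \in \bigcup_{i\in I} S_i$, we have $M \perp C$. The key observation is that $C \in \bigcup_{i\in I} S_i$ holds exactly when there exists some index $i \in I$ with $C \in S_i$, so the condition ``for all $C \in \bigcup_{i} S_i$, $M \perp C$'' is logically equivalent to ``for all $i \in I$ and for all $C \in S_i$, $M \perp C$''. This is the only nontrivial step, and it is simply the standard commutation of a universal quantifier with an existential membership test.

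Next I would massage the quantifier ``for all $i\in I$, for all $C\in S_i$, $M\perp C$'' by observing that, for each fixed $i$, the inner statement ``for all $C \in S_i$, $M \perp C$'' is precisely the condition $M \in \orth{S_i}$. Hence the whole formula becomes ``for all $i \in I$, $M \in \orth{S_i}$'', which by the definition of intersection is exactly $M \in \bigcap_{i\in I}\orth{S_i}$. Chaining these equivalences gives $M \in \orth{(\bigcup_{i\in I} S_i)} \iff M \in \bigcap_{i\in I}\orth{S_i}$, and since $M$ was arbitrary, the two sets are equal.

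There is no real obstacle: the statement is a purely set-theoretic identity that holds for any orthogonality relation whatsoever, independent of the specific definition $C[M]\rede\daemon$. The only point to handle carefully is the degenerate case $I = \emptyset$, where both sides equal the set of all terms (the empty union has empty orthogonal, i.e. all terms, and the empty intersection is by convention the whole ambient set $\termset$); but this falls out of the same equivalence since both quantifications over the empty index set are vacuously true. I would therefore simply remark that the proof is the standard argument for orthogonality relations and does not depend on the particulars of $\fx$.
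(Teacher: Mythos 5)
Your proof is correct: the paper states this lemma without proof, remarking only that ``as with any orthogonality relation, we can easily check'' it, and your quantifier-commutation argument (membership in $\orth{(\bigcup_{i\in I} S_i)}$ means orthogonality to every context of every $S_i$, which is exactly membership in every $\orth{S_i}$) is precisely that intended routine verification. Your additional remark on the degenerate case $I=\emptyset$ is a harmless refinement of the same argument.
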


\subsection{Operations on sets}

We recall the two standard definition of concatenation $\cons{}{}$ (of a set
of terms and a set of contexts) and composition $\comp{}{}$ (of two sets of
contexts): 
\begin{displaymath}
  \begin{array}{r@{\quad =\quad }l}
    \cons{A}{S} & \ens{C\big[\hole\ N\big] \md C\;{\in}\;S,\ N\;{\in}\;A}\\[1mm]
    \comp{S}{T} & \ens{C[D\hole] \md C\;{\in}\;S,\ D\;{\in}\;T}
  \end{array}
\end{displaymath}
For instance, 
\begin{displaymath}
  \begin{array}{c} %{r@{\ =\ }l}
    \cons{\lens{0,\ 1}}{\lens{\try{\hole}{\daemon}}} \ =\ \lens{\try{(\hole\ 0)}{\daemon},\ \try{(\hole\ 1)}{\daemon}} \\[1mm]
    \comp{\lens{\try{\hole}{\daemon},\ \hole\ (\lam{x}{x})}}{\lens{\hole\ 1}} \ =\ \lens{\try{(\hole\ 1)}{\daemon},\ (\hole\ 1)\ (\lam{x}{x})}
  \end{array}
\end{displaymath}

\noindent We then define two operations on sets of contexts:
\begin{displaymath}
  \begin{array}{r@{\quad =\quad }l}
    \unlift{S} & \comp{S}{\ens{\stry{\hole}}}\\[1mm]
    \lift{S} & \comp{\ens{\stry{\hole}}}{S} 
  \end{array}
\end{displaymath}
and thus, for instance, 
\begin{displaymath}
  \begin{array}{c@{\ {\equiv}\ }c@{\ =\ }l}
    C_1 & \unlift{(\natrec\ 0\ (\lam{x}{\lam{y}{y}})\ \hole)} 
      & \ens{\natrec\ 0\ (\lam{x}{\lam{y}{y}})\ (\stry{\hole})} \\[1mm]
    C_2 & \lift{(\natrec\ 0\ (\lam{x}{\lam{y}{y}})\ \hole)}
      & \ens{\stry{(\natrec\ 0\ (\lam{x}{\lam{y}{y}})\ \hole)}}
  \end{array}
\end{displaymath}
and if ${\varepsilon}\,{\in}\;{\Delta}$, 
\begin{displaymath}
  \begin{array}{l@{\ =\ }l@{\ \rede\ }l}
    C_1[\rai] & \natrec\ 0\ (\lam{x}{\lam{y}{y}})\ (\stry{\rai}) & \daemon \\
    C_1[S\ (\rai)] & \natrec\ 0\ (\lam{x}{\lam{y}{y}})\ (\stry{S\ (\rai)}) & \rai \\[1mm]
    C_2[\rai] & \stry{(\natrec\ 0\ (\lam{x}{\lam{y}{y}})\ (\rai))} & \daemon \\
    C_2[S\ (\rai)] & \stry{(\natrec\ 0\ (\lam{x}{\lam{y}{y}})\ (S\ (\rai)))} & \daemon 
  \end{array}
\end{displaymath}

\noindent It can be checked that by definition we have the following equalities:
\begin{displaymath}
  \begin{array}{r@{\quad =\quad}l}
    \lift{(\cons{A}{S})} & \cons{A}{\lift{S}} \\
    \lift{(\funlift{{\Delta}'}{S})} & \funlift{{\Delta}'}{(\lift{S})}
  \end{array}
\end{displaymath}
Moreover, we have the following lemma:
\begin{lem}
  \label{lemma:model:uniliftortho}
  If $S$ is a set of contexts and ${\Delta}$ and ${\Delta}'$ are sets of exception names, then 
  \begin{displaymath}
    \begin{array}{r@{\quad =\quad}l}
      \orth{(\lift{(\flift{{\Delta}'}{S})})} & \orth{(\flift{{\Delta}{\cup}{\Delta}'}{S})}  \\
      \orth{(\unlift{(\funlift{{\Delta}'}{S})})} & \orth{(\funlift{{\Delta}{\cup}{\Delta}'}{S})}  
    \end{array}
  \end{displaymath}
\end{lem}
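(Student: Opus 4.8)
The plan is to prove each of the two equalities \emph{contextwise}. Since $\flift{{\Delta}}{(\flift{{\Delta}'}{S})}$ and $\flift{{\Delta}{\cup}{\Delta}'}{S}$ are indexed by the same set $S$ (and likewise for the second line), I would, for each $C\,{\in}\,S$, pair the context it produces on the left with the one it produces on the right and show that an arbitrary term $M$ is orthogonal to the first iff it is orthogonal to the second. By Lemma~\ref{lemme:orthounion} the orthogonal of a union of contexts is the intersection of the orthogonals of its members, so this pointwise equivalence immediately gives the equality of the two orthogonal sets.

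Everything rests on one reduction analysis of the \texttt{try}-contexts. For a finite set $\Gamma$ of exceptions and an arbitrary term $P$, let me write \emph{$\Gamma$-wrapping of $P$} for the term obtained by wrapping $P$ in the nested \texttt{try}-context that catches every exception of $\Gamma$ (with handler $\daemon$). I would first show, by case analysis on the value of $P$ using the \texttt{try}-rules of Figure~\ref{syntax:reduction}, that the $\Gamma$-wrapping of $P$ reduces to $\daemon$ exactly when $P\rede\daemon$ or $P\rede\frai{{\varepsilon}_0}$ for some ${\varepsilon}_0\,{\in}\,\Gamma$, and that, when $P$ has a value, it reduces to $\daemon$ in these two cases, to $\frai{{\varepsilon}_0}$ when $P$ raises an ${\varepsilon}_0\,{\notin}\,\Gamma$, and to the same regular value as $P$ otherwise. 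Indeed an unmatched exception and the daimon propagate through every \texttt{try} (rules $\try{(\frai{{\varepsilon}'})}{N}\bred\frai{{\varepsilon}'}$ and $\try{\daemon}{N}\bred\daemon$), a matched one is replaced by the handler $\daemon$ which then propagates, and a regular value stops every \texttt{try}; Confluence (Theorem~\ref{syntax:confluence}) and Lemma~\ref{appendix:mnhnfmhnf1} handle the converse and the case where $P$ has no value. The essential consequence --- the formal content of ``we do not care about the order of two adjacent \texttt{try}'' --- is that this behaviour depends on $\Gamma$ only as a set: wrapping $P$ successively in the ${\Delta}$- and ${\Delta}'$-\texttt{try}s, \emph{in either order}, reduces to the very same value as the single $({\Delta}{\cup}{\Delta}')$-wrapping of $P$.

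For the first equality (where the \texttt{try}-contexts end up outermost) this essentially closes the argument. Filling the hole with $M$ puts the \texttt{try}s at top level with body $C[M]$, so by the reduction fact applied to $P\,{\equiv}\,C[M]$ the nested $({\Delta},{\Delta}')$-wrapping and the single $({\Delta}{\cup}{\Delta}')$-wrapping of $C[M]$ reduce to $\daemon$ under exactly the same condition on $C[M]$ (namely $C[M]\rede\daemon$, or $C[M]$ raises an exception of ${\Delta}{\cup}{\Delta}'$). Hence $M$ is orthogonal to the one iff to the other, for every $C\,{\in}\,S$.

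The second equality is the main obstacle, because here the \texttt{try}-contexts sit \emph{inside} $C$, so orthogonality can no longer be read off the top-level reduction of the wrapped term. I would again argue contextwise and split on whether $M$ has a value. Note first that $C$ with a \texttt{try}-context plugged into its hole is again a context in the sense of the grammar --- this is exactly why the handler was fixed to $\daemon$. If $M$ has no value, then neither the composite context built from the nested wrapping nor the one built from the single wrapping reduces to $\daemon$: reaching $\daemon$ would give the whole term a value, and Lemma~\ref{appendix:mnhnfmhnf1} applied to the composite context would then force $M$ to have a value, a contradiction; so both observations are vacuously false. If $M$ has a value, the reduction fact gives that both the nested and the single wrapping of $M$ reduce to one and the \emph{same} value $V_0$; plugging into $C$ and using that $\red$ is closed under our context constructors, both resulting terms reduce to the common term $C[V_0]$, so by Confluence each of them reaches $\daemon$ iff $C[V_0]$ does, hence iff the other does. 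Either way $M$ is orthogonal to the left context iff to the right one, which is the desired equality. The delicate point is precisely this no-value case: there the two wrappings are genuinely distinct stuck terms and are \emph{not} $\eqred$-equivalent, so a naive congruence argument fails, and it is Lemma~\ref{appendix:mnhnfmhnf1} that rescues it by making both observations vacuous.
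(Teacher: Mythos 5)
Your proof is correct, and for the first equality it coincides in substance with the paper's own argument: the paper proves the two inclusions by writing each context as a \texttt{try}-wrapping of some $D \in S$ and invoking exactly your dichotomy (the body $D[t]$ reduces to $\daemon$, or raises an exception of ${\Delta}{\cup}{\Delta}'$); your contextwise packaging via Lemma~\ref{lemme:orthounion} is a cosmetic difference, and you are in fact more careful than the paper, which asserts the dichotomy without spelling out the justification through Lemma~\ref{appendix:mnhnfmhnf1} and confluence as you do. The genuine divergence is on the second equality, which the paper dismisses with ``the other one is similar.'' You are right that it is not a verbatim copy: there the \texttt{try}s sit \emph{inside} $C$, so orthogonality cannot be read off a top-level case analysis, and your two-case argument supplies the content the paper leaves implicit --- if $M$ has a value, both wrappings of $M$ reduce to one common value $V_0$, hence both composite terms reduce to $C[V_0]$ (compatibility of $\red$ with the context constructors, which are all application or \texttt{try}-body positions covered by Figure~\ref{syntax:relreduction}), and confluence (Theorem~\ref{syntax:confluence}), together with the fact that neither exceptions nor regular values can reduce to $\daemon$, transfers $\rede\daemon$ between them; if $M$ has no value, Lemma~\ref{appendix:mnhnfmhnf1}, applied to the composition of two contexts (again a context), makes both orthogonality statements fail vacuously. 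What your route buys is a complete proof of the half of the lemma for which the paper gives none; what it costs is nothing beyond lemmas the paper already has at hand.
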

\proof
  We only give the proof for $\orth{(\lift{(\flift{{\Delta}'}{S})})} =
  \orth{(\flift{{\Delta}{\cup}{\Delta}'}{S})}$ since the other one is similar. We show two
  inclusions:
  \begin{enumerate}[$\bullet$]
    \item We show that
    $\orth{(\lift{(\flift{{\Delta}'}{S})})}\,{\subseteq}\,\orth{(\flift{{\Delta}{\cup}{\Delta}'}{S})}$:\\
      Let $t\,{\in}\;\orth{(\lift{(\flift{{\Delta}'}{S})})}$ and $C\,{\in}\;\flift{{\Delta}{\cup}{\Delta}'}{S}$,
      by definition $C = \fstry{D}{({\Delta}{\cup}{\Delta}')}$ where $D\,{\in}\;S$. Hence
      $\stry{(\fstry{D[t]}{{\Delta}'})}\rede\daemon$ and
      \begin{enumerate}[$-$]
        \item either $D[t]\rede\daemon$, but then $C[t]\rede\daemon$. 
        \item or $D[t]\rede\rai$ when ${\varepsilon}\,{\in}\;{\Delta}'$ or ${\varepsilon}\,{\in}\;{\Delta}$. But then
        again, $C[t]\rede\daemon$.
      \end{enumerate}
    \item We show
    $\orth{(\flift{{\Delta}{\cup}{\Delta}'}{S})}\,{\subseteq}\,\orth{(\lift{(\flift{{\Delta}'}{S})})}$:\\
      Let $t\,{\in}\;\orth{(\flift{{\Delta}{\cup}{\Delta}'}{S})}$ and $C\,{\in}\;\lift{(\flift{{\Delta}'}{S})}$,
      by definition, $C = \stry{(\fstry{D}{{\Delta}'})}$ where $D\,{\in}\;S$. Hence
      $\fstry{D[t]}{({\Delta}{\cup}{\Delta}')}\rede\daemon$ and
      \begin{enumerate}[$-$]
        \item either $D[t]\rede\daemon$, but then $C[t]\rede\daemon$. 
        \item or $D[t]\rede\rai$ when ${\varepsilon}\,{\in}\;{\Delta}{\cup}{\Delta}'$. But then again,
        $C[t]\rede\daemon$.\qed
      \end{enumerate}
  \end{enumerate}

\noindent Along with the definition of $\lift{}$, this lemma implies
$\orth{(\cons{A}{\lift{(\flift{{\Delta}'}{S})}})}\  =\ \orth{(\cons{A}{\flift{{\Delta}{\cup}{\Delta}'}{S}})}.$

\subsection{Model definition}

We call valuation function any function ${\rho}$ from type variables to the
power set of $\contextset$ minus the empty set (${\rho} :
\arr{\typevarset}{(\mathcal{P}(\contextset))^+}$). To each type $A$ we associate two sets:
\begin{displaymath}
  \begin{array}{l@{\hspace*{1cm}}c@{\;{\subseteq}\;}c@{\hspace*{3cm}}}
    \mbox{A set of contexts} & \strad{A} & \contextset \\
    \mbox{A set of terms} & \trad{A} & \termset
  \end{array}
\end{displaymath}
The set $\trad{A}$ is uniformly defined from $\strad{A}$ by
  $$\trad{A} = \orth{\strad{A}} = \ens{M \md {\forall}C\;{\in}\;\strad{A},\;M\ {\perp}\ C}.$$
The set $\strad{A}$ is defined by induction on $A$. Its definition is given
Figure~\ref{fig:model:modeldef}.
\begin{figurep}
  \begin{framedp}
  \newcommand{\espv}{\vspace*{0.5mm}}
  \begin{displaymath}
    \begin{array}{c@{\quad=\quad}l}
      \strad{{\alpha}} & {\rho}({\alpha}) \espv\\
      \strad{\nat} & \ens{ \natrec\ \daemon\ (\lam{y}{\lam{x}{x}})\ \hole} \espv\\
      \strad{\liste{A}} & \ens{ \foldright\ \daemon\
      (\lam{e}{\lam{l}{\lam{r}{(\strad{A})[e]; r}}})\ \hole} \espv\\
      \strad{\uni{A}} & \unlift{\strad{A}} \espv\\
      \strad{\pw{A}} & \lift{\strad{A}} \espv\\
      \strad{\arr{A}{B}} & \displaystyle \mathop{{\bigcup}}_{{\Delta}\,{\subseteq}\,\exceptionset} 
        \cons{\orth{\strad{\pw{A}}}}{\strad{\pw{B}}} \espv\\
      \strad{\fall{{\alpha}}{A}} & \displaystyle \mathop{{\bigcup}}_{S\,{\subseteq}\,\contextset^+}\fstrad{A}{{\rho};\,{\alpha}\,{\leftarrow}\,S}
    \end{array}
  \end{displaymath}
  \end{framedp}
  \caption{Definition of the $\strad{A}$ set of contexts}
  \label{fig:model:modeldef}
\end{figurep}

Note that the interpretation in the model of the construction $\uni{A}$ and
$\pw{A}$ follows, to some extends, the idea that terms of type $\uni{A}$ are terms
that may raise an exception only at top level, where terms of $\pw{A}$ are
those that may raise an exception in any evaluation context. This is
emphasized by the ``opposition'' of the operations $\unlift$ and $\lift$.
Remark that it is only the restriction to $\daemon$ in the handlers of
$\code{try}$ contexts that allows for such a simple definition of the
interpretation of corruption. Indeed, thanks to this restriction we ensure
that for any context $C$, $C[\rai]$ will always reduce to $\rai$ or $\daemon$.

The other interesting point of the model is the interpretation of arrow types.
In \fx, a function $f$ which has type $\arr{A}{B}$ has also all the types
$\arr{\pw{A}}{\pw{B}}$ for any~${\Delta}$. 
Our arrow type is thus smaller than the usual realizability one and so, functions
of \fx{} are in particular realizability functions. More formally, 
\begin{lem}
  \label{model:arrowiswhatwewant}
  If $A$ and $B$ are two types and ${\rho}$ a valuation function, then
  $$\trad{\arr{A}{B}}\ =\ \displaystyle
  \mathop{\bigcap}_{{\Delta}{\subseteq}\exceptionset}\ens{M\md {\forall}N\,{\in}\;\trad{\pw{A}},\ M\
  N\,{\in}\;\trad{\pw{B}}}.$$
\end{lem}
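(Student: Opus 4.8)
The plan is to establish the equality by pure unfolding of the definitions in Figure~\ref{fig:model:modeldef}, together with the single structural fact that orthogonality converts a union of context sets into an intersection (Lemma~\ref{lemme:orthounion}). No induction on types is needed here: the statement concerns one arrow type and refers only to the interpretations of its corrupted components $\pw{A}$ and $\pw{B}$, which I treat as black boxes.

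First I would rewrite the left-hand side. By definition $\trad{\arr{A}{B}} = \orth{\strad{\arr{A}{B}}}$, and by Figure~\ref{fig:model:modeldef} together with the defining identity $\orth{\strad{\pw{A}}} = \trad{\pw{A}}$ we have
\begin{displaymath}
  \strad{\arr{A}{B}}\ =\ \mathop{\bigcup}_{{\Delta}\subseteq\exceptionset}\ \cons{\trad{\pw{A}}}{\strad{\pw{B}}}.
\end{displaymath}
Distributing the orthogonal over this union via Lemma~\ref{lemme:orthounion} gives
\begin{displaymath}
  \trad{\arr{A}{B}}\ =\ \mathop{\bigcap}_{{\Delta}\subseteq\exceptionset}\ \orth{(\cons{\trad{\pw{A}}}{\strad{\pw{B}}})}.
\end{displaymath}

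The key step is then to identify the orthogonal of a concatenation. For an arbitrary set of terms $A_0$ and set of contexts $S$, a context in $\cons{A_0}{S}$ has the shape $C[\hole\ N]$ with $C\,{\in}\,S$ and $N\,{\in}\,A_0$; by the definition of the orthogonality relation, $M\,{\perp}\,C[\hole\ N]$ holds exactly when $C[M\ N]\rede\daemon$, i.e. when $M\ N\,{\perp}\,C$. Quantifying over all such $C$ and $N$ yields
\begin{displaymath}
  \orth{(\cons{A_0}{S})}\ =\ \ens{M \md {\forall}N\,{\in}\,A_0,\ M\ N\,{\in}\,\orth{S}}.
\end{displaymath}
Instantiating $A_0 = \trad{\pw{A}}$ and $S = \strad{\pw{B}}$ (so that $\orth{S} = \trad{\pw{B}}$) and feeding the result back into the intersection above produces precisely the right-hand side of the statement.

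I do not expect a genuine obstacle: the argument is a short chain of definitional rewrites hinging on one already-proved lemma. The only points requiring care are bookkeeping ones, namely keeping the set ${\Delta}$ that indexes the union fixed and shared between $\pw{A}$ and $\pw{B}$ inside each term of the intersection, and remembering that $\trad{\cdot}$ is by definition the orthogonal of $\strad{\cdot}$, so that $\orth{\strad{\pw{A}}}$ and $\orth{\strad{\pw{B}}}$ silently collapse to $\trad{\pw{A}}$ and $\trad{\pw{B}}$.
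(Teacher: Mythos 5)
Your proposal is correct and takes essentially the same approach as the paper, whose own proof simply observes that both inclusions are direct consequences of the definitions. Your write-up merely makes that unfolding explicit, organizing it as a chain of equalities via Lemma~\ref{lemme:orthounion} and the identity $\orth{(\cons{A_0}{S})} = \ens{M \md {\forall}N\,{\in}\,A_0,\ M\ N\,{\in}\,\orth{S}}$, both of which are indeed immediate from the definitions of contexts, filling, and orthogonality.
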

\proof
  We prove the two inclusions forming the equality separately, both being
  simple consequences of definitions.\qed

We can moreover show that the interpretation of $\uni{A}$ is a union and that
the interpretations of the natural numbers and the lists are
standards:

\begin{lem}
  \label{lemme:fx:model:uniisuni} If $A$ is a type, ${\Delta}$ a set
  of exception names and ${\rho}$ a valuation function, then
  $$\trad{\uni{A}} = \trad{A}\ {\cup}\ \ens{M \md M\rede\rai,\ {\varepsilon}\,{\in}\;{\Delta}}.$$
\end{lem}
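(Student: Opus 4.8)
The plan is to prove the equality by establishing the two inclusions, unfolding the definition $\trad{\uni{A}} = \orth{\strad{\uni{A}}} = \orth{(\unlift{\strad{A}})}$ and working directly from the orthogonality relation. The key technical fact I would isolate first is the behaviour of the $\unlift$ operation together with the restriction of $\code{try}$ handlers to $\daemon$: for any context $D$ and any term $M$, the context $\fstry{D}{\lens{\varepsilon}}[M]$ reduces to $\daemon$ precisely when either $D[M]\rede\daemon$, or $M$ reduces to $\rai$ (so that the $\code{try}$ catches it and yields $\daemon$). This is the same dichotomy exploited in Lemma~\ref{lemma:model:uniliftortho}, and it is the real engine of the proof.

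First I would prove the inclusion $\trad{A}\ {\cup}\ \ens{M \md M\rede\rai,\ {\varepsilon}\,{\in}\;{\Delta}}\ {\subseteq}\ \trad{\uni{A}}$. Take $C \in \strad{\uni{A}} = \unlift{\strad{A}}$, so $C = \fstry{D}{\lens{\varepsilon}}$ with $D \in \strad{A}$ and ${\varepsilon}\,{\in}\;{\Delta}$. If $M \in \trad{A}$ then $D[M]\rede\daemon$, so $C[M]\rede\fstry{\daemon}{\lens{\varepsilon}}\rede\daemon$ using the reduction $\try{\daemon}{N}\bred\daemon$; hence $M \perp C$. If instead $M\rede\rai$ with ${\varepsilon}\,{\in}\;{\Delta}$, then $D[M]\rede D[\rai]$, and since every context built by the grammar propagates $\rai$ to top level (by the reduction rules $(\rai)\,N\bred\rai$, $\natrec\,X\,Y\,(\rai)\bred\rai$, etc.), we get $D[\rai]\rede\rai$; the surrounding $\fstry{\cdot}{\lens{\varepsilon}}$ then catches it and reduces to $\daemon$. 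So again $M\perp C$, and since $C$ was arbitrary, $M\in\trad{\uni{A}}$.

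For the reverse inclusion $\trad{\uni{A}}\ {\subseteq}\ \trad{A}\ {\cup}\ \ens{M \md M\rede\rai,\ {\varepsilon}\,{\in}\;{\Delta}}$, I would argue contrapositively. Suppose $M\in\trad{\uni{A}}$ but $M\notin\trad{A}$; I must produce some ${\varepsilon}\,{\in}\;{\Delta}$ with $M\rede\rai$. Since $M\notin\trad{A} = \orth{\strad{A}}$, there is a witness context $D\in\strad{A}$ with $D[M]\not\rede\daemon$. On the other hand the context $\fstry{D}{\lens{\varepsilon}}$ lies in $\unlift{\strad{A}} = \strad{\uni{A}}$ for each ${\varepsilon}\,{\in}\;{\Delta}$, so $\fstry{D[M]}{\lens{\varepsilon}}\rede\daemon$. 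By the dichotomy above, since $D[M]\not\rede\daemon$, the only way the $\code{try}$ produces $\daemon$ is by catching the exception ${\varepsilon}$, which forces $D[M]\rede\rai$. Here I would invoke the crucial structural property, provable by induction on the context grammar, that if $D[M]\rede\rai$ then already $M\rede\rai$ (an evaluation context can only expose an exception that the plugged term itself raises). This yields $M\rede\rai$ with ${\varepsilon}\,{\in}\;{\Delta}$, completing the inclusion.

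The main obstacle is this last structural lemma, namely that $D[M]\rede\rai$ implies $M\rede\rai$ for $D\in\strad{A}$. It requires care because the interpretation contexts $\strad{A}$ contain $\natrec$ and $\foldright$ eliminators whose reductions can both propagate and consume exceptions, so one must check that no reduction internal to $D$ can manufacture a fresh top-level $\rai$ that was not already reachable from $M$. I expect this to follow from confluence (Theorem~\ref{syntax:confluence}) together with a case analysis mirroring Lemma~\ref{appendix:mnhnfmhnf1}, tracking which reductions can expose $\rai$ at top level; the remaining bookkeeping over the elements of ${\Delta}$ and the fixed ordering ${\phi}$ is routine and handled exactly as in Lemma~\ref{lemma:model:uniliftortho}.
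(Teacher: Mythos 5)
Your proof rests on a misreading of the definition of $\unlift{S}$, and the mistake is not cosmetic: it undermines both inclusions. By definition $\comp{S}{T}=\ens{C[D\hole]\md C\,{\in}\,S,\ D\,{\in}\,T}$, so $\unlift{\strad{A}}=\comp{\strad{A}}{\ens{\stry{\hole}}}$ consists of the contexts $D[\stry{\hole}]$ with $D\,{\in}\,\strad{A}$: the $\code{try}$ wrapper is plugged \emph{into the hole} of $D$, sitting directly around the tested term, with $D$ outermost. You instead take $C=\stry{D[\hole]}$, with the $\code{try}$ outermost --- but that is $\lift{\strad{A}}$, the interpretation of corruption $\pw{A}$, not of $\uni{A}$. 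This distinction is the whole point of the model: $\orth{(\lift{\strad{A}})}=\trad{\pw{A}}$ is in general strictly larger than $\trad{\uni{A}}$ (for ${\varepsilon}\,{\in}\,{\Delta}$ it contains $S\ (\rai)$ when $A=\nat$, which the right-hand side of the lemma does not). So the contexts you analyze are the wrong ones, orthogonality to them does not imply membership in $\trad{\uni{A}}$, and for them the equality you are trying to prove is actually false.

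The error surfaces concretely in the ``crucial structural property'' your second inclusion depends on: that $D[M]\rede\rai$ implies $M\rede\rai$ for $D\,{\in}\,\strad{A}$. This is false, and not repairable by bookkeeping. Take $D=\natrec\ \daemon\ (\lam{y}{\lam{x}{x}})\ \hole$, the unique element of $\strad{\nat}$, and $M=S\ (\rai)$: then $D[M]\rede\rai$, yet $M$ is a regular value all of whose reducts have the form $S\ N$, so $M\not\rede\rai$. Recursion contexts do ``manufacture'' top-level exceptions out of exceptions buried under constructors; that is exactly what separates $\uni{\nat}$ from $\pw{\nat}$. With the correct reading of $\unlift{\strad{A}}$ no such lemma is needed, because $M\,{\in}\,\trad{\uni{A}}$ says precisely that $\demtry{M}\,{\in}\,\trad{A}$, i.e. the $\code{try}$ is already wrapped around $M$ itself. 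The paper then concludes as follows: $\demtry{M}$ has a value (Lemma~\ref{lemme:fx:model:modelwhnf}), hence so does $M$ (Lemma~\ref{appendix:mnhnfmhnf1}); if that value is $\rai$ with ${\varepsilon}\,{\in}\,{\Delta}$ we are in the right-hand union, and otherwise the $\code{try}$ passes the value through, so $M\eqred\demtry{M}$ and closure by equivalence (Lemma~\ref{lemme:fx:model:cloturebetaequiv}) gives $M\,{\in}\,\trad{A}$; the converse inclusion uses the same two cases. Note that your forward inclusion has the same structural defect (with the correct $C=D[\stry{\hole}]$, knowing $D[M]\rede\daemon$ does not directly give $C[M]=D[\demtry{M}]\rede\daemon$; one needs the value analysis just described to see that $\demtry{M}\,{\in}\,\trad{A}$), and, a smaller point, your wrapper handles a single ${\varepsilon}$ where the definition requires nested handlers for all of~${\Delta}$.
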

\proof
  By definition, 
\[\trad{\uni{A}} = \orth{(\unlift{\strad{A}})} =
  \ens{M\md {\forall}\,C\;{\in}\;\strad{A},\
  C[\demtry{M}]\rede\daemon}\ .
\] 
  We show each side of the inclusion separately:
  \begin{enumerate}[$\bullet$]\overfullrule=2 pt
    \item \overfullrule=2 pt
      For $M\,{\in}\;\ens{t\md {\forall}\,C\;{\in}\;\strad{A},\
      C[\demtry{M}]\rede\daemon}$ we have
      $\demtry{M}\,{\in}\,\trad{A}$. \break Hence $\demtry{M}$ has a value
      (Lemma~\ref{lemme:fx:model:modelwhnf}) and $M$ as well
      (Lemma~\ref{appendix:mnhnfmhnf1})~:
      \begin{enumerate}[$-$]
        \item Either $M\rede\rai$ for ${\varepsilon}\,{\in}\;{\Delta}$ and we conclude directly. 
        \item Or $M\rede V$ where $V$ is a regular value and
          then $\demtry{M}\rede V$. But the interpretation being closed by
          equivalence (Lemma~\ref{lemme:fx:model:cloturebetaequiv}),
          $V\,{\in}\;\trad{A}$ and $M\,{\in}\;\trad{A}$.
      \end{enumerate}
    \item Let $M\,{\in}\;\trad{A}\ {\cup}\ \ens{M \md M\rede\rai,\ {\varepsilon}\,{\in}\;{\Delta}}$ and let
      $C\,{\in}\;\strad{A}$.  We have to show $C[\demtry{M}]\rede\daemon$:
      \begin{enumerate}[$-$]
        \item If $M\rede\rai$ for ${\varepsilon}\,{\in}\;{\Delta}$, then $\demtry{M}\rede\daemon$
          and \m{$C[\demtry{M}]\rede\daemon$}.
        \item Otherwise, $M\,{\in}\;\trad{A}$ and then $\demtry{M}\,{\in}\;\trad{A}$
          (because, $M$ having a value
          (Lemma~\ref{lemme:fx:model:modelwhnf}), there exists  
          $V$ such that $M\rede V$. But then $\demtry{M}\rede V$ and we use
          the closure by equivalence of the interpretation
          (Lemma~\ref{lemme:fx:model:cloturebetaequiv})). Finally, using the
          definition of orthogonality, \m{$C[\demtry{M}]\rede\daemon$}.\qed
      \end{enumerate}
  \end{enumerate}

\begin{lem}
  \label{lemme:natisnat}
  If ${\rho}$ is a valuation function, ${\Delta}$ a set of exception names and if
  ${\Phi}$ represents one of $0$, $\daemon$ or $\rai$ for some ${\varepsilon}\,{\in}\;{\Delta}$,
  then 
  $$\trad{\pw{\nat}} = \ens{M \md M\rede S^n\ {\Phi}, n\,{\in}\;\mathbb{N}}.$$
\end{lem}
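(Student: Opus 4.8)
The plan is to compute $\trad{\pw{\nat}}$ by first unfolding the definition of the context set $\strad{\pw{\nat}}$ and then characterizing its orthogonal. By the definition of the model, $\strad{\pw{\nat}} = \lift{\strad{\nat}}$, where $\strad{\nat} = \ens{\natrec\ \daemon\ (\lam{y}{\lam{x}{x}})\ \hole}$. Intuitively the basic context $\natrec\ \daemon\ (\lam{y}{\lam{x}{x}})\ \hole$ applied to a term $M$ reduces to $\daemon$ exactly when $M$ reduces to a numeral $S^n\ 0$ (since the recursor iterates the identity-like step and bottoms out at $\daemon$ on $0$), and the $\lift{}$ operation wraps this in a $\stry{\cdot}$ that additionally catches any exception $\rai$ with ${\varepsilon}\in{\Delta}$ and turns it into $\daemon$. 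So the plan is to show that $M\perp C$ for every $C\in\strad{\pw{\nat}}$ precisely when $M$ reduces to one of the allowed weak head normal forms $S^n\ {\Phi}$.

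First I would prove the inclusion from right to left, which is the routine direction: assume $M\rede S^n\ {\Phi}$ with ${\Phi}$ one of $0$, $\daemon$, or $\rai$ (${\varepsilon}\in{\Delta}$), and check by induction on $n$ that the unique context of $\strad{\pw{\nat}}$, namely $\stry{(\natrec\ \daemon\ (\lam{y}{\lam{x}{x}})\ \hole)}$, sends $M$ to $\daemon$. The base reductions $\natrec\ \daemon\ (\lam{y}{\lam{x}{x}})\ 0\ \bred\ \daemon$, $\natrec\ X\ Y\ \daemon\ \bred\ \daemon$, and $\natrec\ X\ Y\ (\rai)\ \bred\ \rai$ (with the outer $\stry{\cdot}$ then catching $\rai$ into $\daemon$ since ${\varepsilon}\in{\Delta}$) handle the three cases of ${\Phi}$; the inductive step uses the recursor rule on $S\ N$ together with the fact that the step function $\lam{y}{\lam{x}{x}}$ discards its first argument and returns the recursive call. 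Using closure of orthogonality under reduction (which follows from confluence, Theorem~\ref{syntax:confluence}, together with the fact that $\rede\daemon$ is preserved along $\rede$), this gives $M\in\trad{\pw{\nat}}$.

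For the left-to-right inclusion, I would take $M\in\trad{\pw{\nat}}$, so $M\perp C$ for the single context $C=\stry{(\natrec\ \daemon\ (\lam{y}{\lam{x}{x}})\ \hole)}$, i.e. $C[M]\rede\daemon$. Since $C[M]$ has a value ($\daemon$ is a value), Lemma~\ref{appendix:mnhnfmhnf1} gives that $M$ has a value $V$, and by confluence I may analyze $M$ through its weak head normal form. The argument then proceeds by case analysis on $V$: a $\natrec$ context only reduces when its third argument is $0$, of the form $S\ N$, equal to $\rai$, or equal to $\daemon$, so $V$ must be one of these (any other regular value, such as a $\lambda$-abstraction or a $\lcons$-headed term, would leave $C[M]$ stuck, contradicting $C[M]\rede\daemon$, since by confluence a stuck non-$\daemon$ normal form cannot reduce to $\daemon$). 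In the $V=S\ N$ case I recurse: by the reduction rule the context peels off one $S$ and continues evaluating $\natrec$ on $N$ inside the same $\stry{\cdot}$, so $N$ again satisfies the orthogonality condition for the sub-context and an induction on the length of the reduction of $C[M]$ to $\daemon$ shows $N\rede S^{n-1}\ {\Phi}$, hence $M\rede S^n\ {\Phi}$. The $\rai$ case with ${\varepsilon}\in{\Delta}$ is absorbed by ${\Phi}$, while if ${\varepsilon}\notin{\Delta}$ the outer $\stry{\cdot}$ would not catch it and $C[M]$ would reduce to $\rai\neq\daemon$, a contradiction, so only the permitted ${\varepsilon}$ survive.

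The main obstacle I expect is the case analysis in the forward direction, specifically ruling out spurious regular values and correctly handling the interaction between the $\natrec$ reductions and the surrounding $\stry{\cdot}$ wrapper from the $\lift{}$ operation. One must argue carefully that the only way $C[M]$ can reach $\daemon$ is by the recursor successively consuming successors until it hits $0$ or $\daemon$, or by an exception in ${\Delta}$ being caught; this requires knowing that no other weak head normal form of $M$ can make the whole context converge to $\daemon$, which leans on confluence to guarantee that a once-determined stuck head shape cannot later become $\daemon$. The bookkeeping of the exception set ${\Delta}$ — ensuring exactly the names in ${\Delta}$ (and not others) are permitted in ${\Phi}$ — is precisely what the $\stry{\cdot}$ notation and the behavior of $\lift{}$ on $\rai$ encode, so I would appeal to the earlier observation that $C[\rai]$ always reduces to $\rai$ or $\daemon$ to keep this clean.
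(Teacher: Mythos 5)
Your overall strategy coincides with the paper's: both directions are proved separately, the right-to-left inclusion by induction on $n$ using the head reductions of $\natrec$ and the behaviour of the $\stry{\hole}$ wrapper on $0$, $\daemon$ and $\rai$, and the left-to-right inclusion by determining which head shapes of $M$ can make the single context $C = \stry{(\natrec\ \daemon\ (\lam{y}{\lam{x}{x}})\ \hole)}$ reach $\daemon$. The easy direction is fine (note, though, that what you need there is closure of orthogonality under \emph{anti}-reduction, which is mere transitivity of $\rede$ and requires no confluence).

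There is, however, a genuine gap in the inductive set-up of the hard direction. You first extract a value $V$ of $M$ via Lemma~\ref{appendix:mnhnfmhnf1}, use confluence to pass from $C[M]\rede\daemon$ to $C[V]\rede\daemon$, and, in the case $V = S\ N$, use confluence again to obtain $C[N]\rede\daemon$, intending to apply the induction hypothesis to $N$ where the induction is ``on the length of the reduction of $C[M]$ to $\daemon$''. But confluence only yields the \emph{existence} of some reduction $C[N]\rede\daemon$; it gives no bound on its length relative to the reduction you started from. Consequently the measure you induct on does not decrease along your recursive call, and the recursion as stated is not well-founded. The paper avoids this by never invoking confluence in this direction: it proves, by strong induction on $k$, the statement ``for \emph{every} term $M$, if $C[M]\redk{k}\daemon$ then $M\rede S^n\ {\Phi}$'', and analyzes the \emph{first} step of the given reduction sequence. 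A step internal to $M$ is absorbed by the induction hypothesis (which is quantified over all terms), while a head step forces $M$ to be $0$, $\daemon$, $\rai$ (with ${\varepsilon}\in{\Delta}$, the case ${\varepsilon}\notin{\Delta}$ being impossible) or $S\ M'$, and leaves a strictly shorter given sequence to which the analysis is applied; the detour through the value of $M$ is unnecessary. Your argument can be repaired by reorganizing it in exactly this way, but as written the key recursive step does not go through.
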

\proof
In the following, ${\Phi}$ will always represent one of $0$, $\daemon$ or $\rai$
for some ${\varepsilon}\,{\in}\;{\Delta}$.
\begin{enumerate}[$\bullet$]
  \item If for some integer $n$, $M \rede S^n\ {\Phi}$, then by induction on $n$ it is easy to show that
    $\stry{(\natrec\ \daemon\ (\lam{y}{\lam{x}{x}})\ M)}\rede \daemon$.
  \item If $M\,{\in}\;\trad{\pw{\nat}}$, then there exists $k$ such that
    \m{$\stry{(\natrec\ \daemon\ (\lam{y}{\lam{x}{x}})\ M)}\redk{k} \daemon$}. 
    Hence, we show by induction on $k$ that for any $k'\ {\leq}\ k$ and for any
    term $M$: 
\[\mbox{if}\quad\stry{(\natrec\ \daemon\ (\lam{y}{\lam{x}{x}})\ M)}\redk{k'}
    \daemon\quad\hbox{then}\quad M \rede S^n\ {\Phi}
\]
    \begin{enumerate}[$-$]
      \item We cannot have $k = 0$ since $\stry{(\natrec\ \daemon\
        (\lam{y}{\lam{x}{x}})\ M)}$ is not $\daemon$. 
      \item We have $\stry{(\natrec\ \daemon\ (\lam{y}{\lam{x}{x}})\ M)}\red
        N\redk{k} \daemon$. But the first reduction can only occurs either if
        $M$ is $0$, $\daemon$ or $\rai$ (and in the last case we have
        ${\varepsilon}\,{\in}\;{\Delta}$), or if $M$ is $S\ M'$. In this last case, we easily
        conclude using the reduction of $\natrec$ and the induction
        hypothesis.\qed
    \end{enumerate}
\end{enumerate}

\begin{lem}
  \label{lemme:listislist}
  If ${\rho}$ is a valuation function, $A$ a type, ${\Delta}$ a set of exception names and
  if ${\Phi}$ represents one of $0$, $\daemon$ or $\rai$ for some ${\varepsilon}\,{\in}\;{\Delta}$, then 
  $$\eqalign{&\trad{\pw{(\liste{A})}}\cr ={} &\ens{M \md M\rede 
    \lcons\ a_0\ (\ldots(\lcons\ a_n\ {\Phi})\ldots), n\,{\in}\;\mathbb{N}, 
    {\forall}\,0{\leq}i{\leq}n,\ a_i\,{\in}\;\trad{\pw{A}}}\ .}$$
\end{lem}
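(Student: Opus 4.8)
\proof
The plan is to mirror the proof of Lemma~\ref{lemme:natisnat}, the only genuinely new feature being the test that the fold now performs on each \emph{element} of the list. First I would unfold the sets involved. By Figure~\ref{fig:model:modeldef}, $\strad{\pw{(\liste{A})}} = \lift{\strad{\liste{A}}}$, so the contexts against which a term of $\trad{\pw{(\liste{A})}}$ must be orthogonal are exactly the
\[
  C_D\ {\equiv}\ \stry{(\foldright\ \daemon\ Y_D\ \hole)},\qquad Y_D\ {\equiv}\ \lam{e}{\lam{l}{\lam{r}{D[e]; r}}},
\]
one for each context $D\,{\in}\;\strad{A}$; here ${\Phi}$ denotes the list terminals $\nil,\daemon,\rai$ (playing the role that $0,\daemon,\rai$ play for numerals). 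Thus $M\,{\in}\;\trad{\pw{(\liste{A})}}$ iff $C_D[M]\rede\daemon$ for \emph{every} $D\,{\in}\;\strad{A}$, whereas the element condition $a_i\,{\in}\;\trad{\pw{A}}$ unfolds, since $\trad{\pw{A}} = \orth{\lift{\strad{A}}}$, to the requirement that each element test succeed under the ${\Delta}$-handlers.

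For the inclusion from right to left, suppose $M\rede\lcons\ a_0\ (\ldots(\lcons\ a_n\ {\Phi})\ldots)$ with every $a_i\,{\in}\;\trad{\pw{A}}$, fix $D\,{\in}\;\strad{A}$, and prove $C_D[M]\rede\daemon$ by induction on $n$. One $\foldright$ step rewrites $C_D[M]$ into $\stry{(D[a_0]; (\foldright\ \daemon\ Y_D\ L))}$, where $L$ is the tail. Because $a_0\,{\in}\;\trad{\pw{A}} = \orth{\lift{\strad{A}}}$, the element test reduces to $\daemon$ under the handlers for ${\Delta}$; the ``$;$'' then fires and computation proceeds to $L$, to which the induction hypothesis applies. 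At the base the fold reaches $\foldright\ \daemon\ Y_D\ {\Phi}$, which reduces to $\daemon$ when ${\Phi}$ is $\nil$ or $\daemon$, and to $\rai$ --- hence again to $\daemon$ through the outer $\stry{\cdot}$, since ${\varepsilon}\,{\in}\;{\Delta}$ --- when ${\Phi}$ is $\rai$. This yields $C_D[M]\rede\daemon$ for every $D$, i.e. $M\,{\in}\;\trad{\pw{(\liste{A})}}$.

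For the converse I would argue as in the numeral case. Given $M\,{\in}\;\trad{\pw{(\liste{A})}}$, Lemmas~\ref{lemme:fx:model:modelwhnf} and~\ref{appendix:mnhnfmhnf1} give $M$ a value, and orthogonality to the $C_D$ forces this value to be $\nil$, $\daemon$, $\rai$ (with ${\varepsilon}\,{\in}\;{\Delta}$) or a cons $\lcons\ a_0\ L$ --- any other value would leave $\foldright$ stuck. I would then strengthen the claim to ``for all $k'\,{\leq}\,k$ and all $M$, if $C_D[M]\redk{k'}\daemon$ then $M$ reduces to a list of the announced shape'' and induct on the length $k$ of the reduction to $\daemon$: the terminal values supply the base, while in the cons case the first $\foldright$ step exposes $D[a_0]; (\foldright\ \daemon\ Y_D\ L)$, from which I read off the test on $a_0$ and recurse on the strictly shorter reduction governing the tail $L$, using closure under equivalence (Lemma~\ref{lemme:fx:model:cloturebetaequiv}) to replace $M$ by its reduct.

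The step I expect to be the real obstacle is this element test, namely proving $a_0\,{\in}\;\trad{\pw{A}}$ (and inductively each $a_i$). The difficulty is that one context $C_D$ hard-codes a single $D$ for all elements, while membership in $\trad{\pw{A}} = \orth{\lift{\strad{A}}}$ demands that $\stry{(D'[a_0])}\rede\daemon$ for \emph{every} $D'\,{\in}\;\strad{A}$. The plan is to quantify over $D'$ at the outset: for a fixed $D'$, the ``$;$'' blocks until its left argument becomes $\daemon$, so the reduction $C_{D'}[M]\rede\daemon$ is forced to run the test on $a_0$ before it can touch the tail, and that test must therefore reduce to $\daemon$ under the ${\Delta}$-handlers --- which is exactly $a_0\,{\perp}\,\lift{\strad{A}}$. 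Letting $D'$ range over $\strad{A}$ delivers $a_0\,{\in}\;\trad{\pw{A}}$; since a caught exception of ${\Delta}$ does not stall the test but lets computation continue, every element is reached and constrained in the same way, and the list skeleton read from the cons structure is independent of $D'$. The remaining work --- keeping the varying reduction lengths synchronised as $D'$ changes and threading the induction through both the skeleton and the element constraints --- is routine bookkeeping, with no case harder than those already met in Lemma~\ref{lemme:natisnat}.\qed
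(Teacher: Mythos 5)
Your overall route --- mirroring Lemma~\ref{lemme:natisnat}, with $\nil$, $\daemon$ and $\rai$ as the list terminals --- is precisely the paper's proof, which consists of exactly that remark. The problem lies in the one step that is genuinely new for lists, the element test, and your treatment of it has a real gap. In the right-to-left inclusion you argue that, since $a_0\,{\in}\;\trad{\pw{A}} = \orth{(\lift{\strad{A}})}$, ``the element test reduces to $\daemon$ under the handlers for ${\Delta}$; the `;' then fires''. What membership actually gives is $\stry{(D[a_0])}\rede\daemon$, and, as in the proof of Lemma~\ref{lemma:model:uniliftortho}, this splits into two cases: either $D[a_0]\rede\daemon$, or $D[a_0]\rede\frai{{\varepsilon}'}$ for some ${\varepsilon}'\,{\in}\;{\Delta}$. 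Your sentence covers only the first case. In the second, the term being reduced is $\stry{(D[a_0];R)}$, where $R\,{\equiv}\,\foldright\ \daemon\ (\lam{e}{\lam{l}{\lam{r}{D[e];r}}})\ L$: the ${\Delta}$-handlers enclose the \emph{whole} fold, not the test $D[a_0]$; the construction $;$ fires only on $\daemon$; and a \texttt{try} catches an exception only when the exception is its entire body. Hence the computation is stuck at $\stry{(\frai{{\varepsilon}'};R)}$, a normal form different from $\daemon$. Concretely, for $A=\nat$, ${\Delta}=\lens{{\varepsilon}}$ and $M=\lcons\ (\rai)\ \nil$ we have $\rai\,{\in}\;\trad{\pw{\nat}}$, yet filling the unique context of $\strad{\pw{(\liste{\nat})}}$ with $M$ leads to $\ftry{((\rai);\daemon)}{{\varepsilon}}{\daemon}$ and stops there; so, with Figure~\ref{fig:model:modeldef} read literally, the inclusion you are establishing is not merely unproved but false, and no amount of bookkeeping can repair that step.

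The same optimism reappears in your converse direction (``a caught exception of ${\Delta}$ does not stall the test but lets computation continue''): nothing catches an exception raised on the left of $;$. For the lemma to hold as stated, the handlers must wrap the element test itself, i.e. the corrupted list contexts must behave like $\stry{(\foldright\ \daemon\ (\lam{e}{\lam{l}{\lam{r}{\stry{(D[e])};r}}})\ \hole)}$; this is visibly the intended reading (without it, the \labelfont{cons} case of Theorem~\ref{model:correctness}, which needs $\lcons\ a\ L$ to inhabit $\trad{\pw{(\liste{A})}}$ whenever $a$ and $L$ inhabit the corresponding corrupted interpretations, fails for the same reason), but it is not what $\lift{\strad{\liste{A}}}$ literally produces. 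A complete proof has to surface this point: either adopt and justify the corrected reading of the interpretation, or perform the explicit two-case analysis on each $D[a_i]$ and show how the exceptional case is absorbed. You flagged the element test as ``the real obstacle'' and then dismissed it as routine; it is exactly where the argument, as written, breaks.
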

\proof
The proof follows the same structure as the one of
Lemma~\ref{lemme:natisnat}.\qed

\subsection{Model properties}

\begin{lem}[closure by equivalence]
  \label{lemme:fx:model:cloturebetaequiv}
  If $M$ and $N$ are two terms, $A$ is a type and ${\rho}$ is a valuation function
  such that $M\,{\in}\;\trad{A}$ and $M\eqred N$, then
  $$N\,{\in}\;\trad{A}.$$
\end{lem}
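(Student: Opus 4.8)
The plan is to prove closure by equivalence by reducing it to the orthogonality definition of the interpretation. Since $\trad{A} = \orth{\strad{A}}$ by definition, membership $M \in \trad{A}$ means exactly that $C[M] \rede \daemon$ for every context $C \in \strad{A}$. So to show $N \in \trad{A}$ it suffices to show $C[N] \rede \daemon$ for every such $C$, given that $C[M] \rede \daemon$ and $M \eqred N$.

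First I would reduce $M \eqred N$ to the two one-step directions. Recall $\eqred$ is the reflexive, transitive and symmetric closure of $\red$, so by a straightforward induction it is enough to prove the statement when $M \red N$ and, separately, when $N \red M$ (a single reduction step in either direction); the general case then follows by chaining these along the equivalence. This isolates the real content into two sub-claims about a single context $C \in \strad{A}$:
\begin{enumerate}[(i)]
  \item if $C[M] \rede \daemon$ and $M \red N$, then $C[N] \rede \daemon$;
  \item if $C[M] \rede \daemon$ and $N \red M$, then $C[N] \rede \daemon$.
\end{enumerate}

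For direction (ii) the argument is immediate: if $N \red M$ then $C[N] \red C[M]$ (filling a context preserves reduction, since the reduction rules of Figure~\ref{syntax:relreduction} are closed under the context formers), and since $C[M] \rede \daemon$ we get $C[N] \rede \daemon$ by prepending this step. Direction (i) is the delicate one and is where I expect the main obstacle to lie. Here $C[M] \red C[N]$, so we have two terms, $C[M]$ and $C[N]$, both reachable from a common predecessor $C[M]$, with $C[M] \rede \daemon$; we want $C[N] \rede \daemon$ as well. This is exactly a situation where confluence applies: by Theorem~\ref{syntax:confluence} applied to the span $C[N] \mathrel{\reflectbox{$\rede$}} C[M] \rede \daemon$, there is a common reduct $P$ with $C[N] \rede P$ and $\daemon \rede P$. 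Since $\daemon$ is a value (indeed it only reduces via the rules of Figure~\ref{fig:model:daimon}, none of which apply to $\daemon$ standing alone), the only reduct of $\daemon$ is $\daemon$ itself, so $P = \daemon$ and hence $C[N] \rede \daemon$, as required.

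The crux is therefore the interaction with confluence plus the observation that $\daemon$ is in normal form for $\red$; the rest is bookkeeping over the inductive structure of $\eqred$ and the fact that the context constructors of $\contextset$ are among the term formers for which $\red$ is a congruence. I would state these two facts (contexts preserve $\red$; $\daemon$ has no proper reducts) as small remarks before invoking them, and then assemble the four-line argument above.
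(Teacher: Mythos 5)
Your proof is correct and takes essentially the same route as the paper's: unfold the orthogonality definition, pass the equivalence $M \eqred N$ through the context to relate $C[M]$ and $C[N]$, and conclude by confluence (Theorem~\ref{syntax:confluence}) together with the fact that $\daemon$ has no proper reducts. The only difference is one of detail: the paper compresses into the single phrase ``by confluence'' the induction over the conversion chain (confluence implies the Church--Rosser property for $\eqred$, plus $\daemon$ being a normal form) that you spell out explicitly.
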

\proof
  Let $M\,{\in}\;\trad{A}$ and $M\eqred N$. Let $C\,{\in}\;\strad{A}$, by
  definition $C[M]\rede\daemon$. But since $M\eqred N$, $C[M]\eqred C[N]$.
  Thus, by confluence of the reduction (theorem~\ref{syntax:confluence}),
  $C[N]\rede\daemon$ and $N\,{\in}\;\trad{A}$.\qed

In particular, the interpretation is closed by reduction and
anti-reduction. 

\begin{lem}
  \label{lemme:fx:model:modelwhnf}
  If $M$ is a term, $A$ a type and ${\rho}$ a valuation function such that
  $M\,{\in}\;\trad{A}$, then $M$ has a value.
\end{lem}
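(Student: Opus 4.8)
The plan is to unfold $M \in \trad{A}$ through its definition $\trad{A} = \orth{\strad{A}}$: it says exactly that $C[M] \rede \daemon$ for every context $C \in \strad{A}$. Since $\daemon$ has been added to the set of values, any such reduction shows that $C[M]$ has a value, and then Lemma~\ref{appendix:mnhnfmhnf1} yields directly that $M$ has a value. The only gap in this short argument is that it needs at least one context against which to test $M$: if $\strad{A}$ were empty, $\trad{A}$ would be all of $\termset$ and the statement would fail. So the substance of the proof is to establish that $\strad{A} \neq \emptyset$ for every type $A$ and every valuation $\rho$.

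Before that I would isolate the fact, announced when the daimon was introduced, that $\daemon$ inhabits every interpretation, i.e. $\daemon \in \trad{A}$ for all $A$ and $\rho$. This reduces to $C[\daemon] \rede \daemon$ for every context $C$, which I prove by structural induction on $C$ along the context grammar. The base case $C = \hole$ is immediate; in each inductive case the (inductively reduced) $\daemon$ ends up in head position of one of the constructs $\hole\,N$, $\try{\hole}{\daemon}$, $\natrec\,M\,N\,\hole$, $\foldright\,X\,Y\,\hole$, where precisely one of the daimon rules of Figure~\ref{fig:model:daimon} fires and absorbs it back to $\daemon$. Note this holds for \emph{all} contexts, so $\daemon \in \orth{\strad{A}} = \trad{A}$ for every $A$, independently of any induction on types.

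I would then prove $\strad{A} \neq \emptyset$ by induction on $A$, reading off Figure~\ref{fig:model:modeldef}. For a variable, $\strad{{\alpha}} = {\rho}({\alpha})$ is nonempty since ${\rho}$ takes values in $(\mathcal{P}(\contextset))^+$; for $\nat$ and $\liste{A}$ the set is an explicit singleton; for $\uni{A}$ and $\pw{A}$ nonemptiness is preserved because $\unlift{}$ and $\lift{}$ merely pre/post-compose each context of $\strad{A}$ with the fixed wrapper $\stry{\hole}$; for $\fall{{\alpha}}{A}$ it is a union, over the nonempty index of valuations, of sets nonempty by the induction hypothesis. The decisive case is the arrow, $\strad{\arr{A}{B}} = \bigcup_{{\Delta}} \cons{\orth{\strad{\pw{A}}}}{\strad{\pw{B}}}$: a single concatenation in the union is nonempty as soon as both $\trad{\pw{A}} = \orth{\strad{\pw{A}}}$ and $\strad{\pw{B}}$ are, where $\strad{\pw{B}} \neq \emptyset$ comes from the induction hypothesis and $\trad{\pw{A}} \neq \emptyset$ is exactly $\daemon \in \trad{\pw{A}}$ from the previous paragraph.

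With both facts in hand the lemma follows: given $M \in \trad{A}$, pick any $C \in \strad{A}$, observe that $C[M] \rede \daemon$ has a value, and invoke Lemma~\ref{appendix:mnhnfmhnf1}. I expect the arrow case of the nonemptiness induction to be the only genuine obstacle, since the term interpretation $\trad{\pw{A}}$ re-enters positively there and this is precisely where the daimon's role as a universal inhabitant is indispensable. The one point needing care is the well-foundedness of that induction: one should measure types so that $\pw{A}$ (and $\uni{A}$) count as strictly smaller than $\arr{A}{B}$ — e.g. by the total number of constructors, using that every type has size at least one — so that the two inductions (on contexts for the daimon fact, on type size for nonemptiness) are genuinely independent rather than circular.
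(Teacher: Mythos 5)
Your proof is correct, and its skeleton is the paper's own: unfold $\trad{A}=\orth{\strad{A}}$, note that $C[M]\rede\daemon$ means $C[M]$ has a value (since $\daemon$ is a value), apply Lemma~\ref{appendix:mnhnfmhnf1}, and reduce everything to $\strad{A}\neq\emptyset$. Where you genuinely differ is in how the two auxiliary facts are organized. The paper proves $\strad{A}\neq\emptyset$ and $\daemon\in\trad{A}$ by a \emph{simultaneous} induction on $A$, remarking that they ``have to be proved simultaneously'' because non-emptiness of $\strad{\arr{A}{B}}$ needs an inhabitant of $\trad{\pw{A}}$. You show this coupling is unnecessary: $C[\daemon]\rede\daemon$ holds for \emph{every} context of the grammar, by one structural induction on contexts (using the rules of Figure~\ref{fig:model:daimon} together with the congruence rules of Figure~\ref{syntax:relreduction}), so $\daemon\in\trad{A}$ for all $A$ and $\rho$ without any induction on types, and non-emptiness becomes a free-standing type induction. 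Your well-foundedness remark is also apt: the arrow case calls on $\strad{\pw{A}}$ and $\strad{\pw{B}}$, which are not structural subterms of $\arr{A}{B}$, so one either counts constructors as you do, or unfolds $\strad{\pw{B}}=\lift{\strad{B}}$ (non-empty iff $\strad{B}$ is) to stay structural --- a point the paper glosses over. The only small slip is calling $\strad{\liste{A}}$ an explicit singleton: it contains one context for each $D\in\strad{A}$, so its non-emptiness also rests on the induction hypothesis for $A$; this is harmless, since that hypothesis is available in your induction. Overall, your decomposition is slightly more modular and makes transparent that the daimon's universal inhabitation is independent of the type structure, whereas the paper's single simultaneous induction is more compact but hides that fact.
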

\proof
  By definition $M\,{\in}\;\trad{A}$ yields $M\,{\in}\;\ens{M\md
  {\forall}C\;{\in}\;\strad{A},\;C[M] \rede \daemon}$.  Thus, if
  $C\,{\in}\;\strad{A}$, $C[M]$ has a value and, using
  Lemma~\ref{appendix:mnhnfmhnf1}, we have that $M$ has a value
  too. We simply have to make sure that there always exists such a
  context $C$, that is $\strad{A}$ is never empty. But for any type
  $A$, it can be easily proved by induction on $A$ that $\strad{A}\
  {\neq}\ {\emptyset}$ and $\daemon\,{\in}\;\trad{A}$ (remark that
  both properties have to be proved simultaneously since the
  non-emptiness of $\strad{A\ {\rightarrow}\ B}$ depends upon the
  non-emptiness of $\trad{A}$ which comes (by induction hypo\-thesis)
  of the non-emptiness of $\strad{A}$).\qed

\begin{lem}
  \label{lemme:fx:model:raiina}
  If $A$ is a type, ${\rho}$ a valuation function and ${\Delta}$ a set of exception names,
  then for all~${\varepsilon}\,{\in}\;{\Delta}$, $\rai\,{\in}\;\trad{\pw{A}}$.
\end{lem}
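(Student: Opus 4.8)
The plan is to unfold the definition of $\trad{\pw{A}}$ and reduce the statement to a concrete computation on the contexts of $\strad{\pw{A}}$. By definition $\trad{\pw{A}} = \orth{\strad{\pw{A}}}$ with $\strad{\pw{A}} = \lift{\strad{A}}$, so proving $\rai \in \trad{\pw{A}}$ amounts to showing $C[\rai] \rede \daemon$ for every $C \in \lift{\strad{A}}$. Since $\lift{\strad{A}} = \comp{\ens{\stry{\hole}}}{\strad{A}}$, every such $C$ has the shape $\stry{D[\hole]}$ for some context $D \in \strad{A}$, so that $C[\rai] = \stry{D[\rai]}$. The goal thus becomes: for every context $D$, $\stry{D[\rai]} \rede \daemon$, keeping in mind that since $\varepsilon \in \Delta$ the $\Delta$-wrapper $\stry{\cdot}$ contains a \texttt{try} handler catching $\varepsilon$.

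First I would isolate the key auxiliary fact, already announced after the model definition: for every context $D$, either $D[\rai] \rede \rai$ or $D[\rai] \rede \daemon$. I would prove this by induction on the grammar of contexts. The base case $D = \hole$ is immediate. For $D = C\ N$ the induction hypothesis gives $C[\rai] \rede \rai$ or $\daemon$, whence $(\rai)\ N \bred \rai$ respectively $\daemon\ N \bred \daemon$; the cases $\natrec\ M\ N\ C$ and $\foldright\ X\ Y\ C$ are identical, using the reduction rules sending $\natrec\ \ldots\ (\rai)$ and $\foldright\ \ldots\ (\rai)$ to $\rai$ (and their $\daemon$ counterparts). The only delicate case is $D = \try{C}{\daemon}$, say with handler catching $\varepsilon'$: if $C[\rai] \rede \rai$, then $\try{(\rai)}{\daemon}$ reduces to $\daemon$ when $\varepsilon' = \varepsilon$ and propagates to $\frai{\varepsilon'}$-shaped $\rai$ when $\varepsilon' \neq \varepsilon$; if $C[\rai] \rede \daemon$, then $\try{\daemon}{\daemon} \bred \daemon$. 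In every case we land on $\rai$ or $\daemon$, and this is exactly where the restriction of \texttt{try}-handlers to $\daemon$ in the context grammar is essential. I expect this induction (the $\try{C}{\daemon}$ case and its reliance on the $\daemon$-restriction) to be the only real content of the argument.

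Finally I would feed this fact into the $\Delta$-wrapper, distinguishing two cases for a fixed $D \in \strad{A}$. If $D[\rai] \rede \daemon$, then $\stry{D[\rai]} \rede \stry{\daemon} \rede \daemon$, since every enclosing $\try{\daemon}{\daemon}$ reduces to $\daemon$. If $D[\rai] \rede \rai$, then $\stry{D[\rai]} \rede \stry{\rai}$, and I would show $\stry{\rai} \rede \daemon$ by induction on the nested handlers of $\stry{\cdot}$: the exception propagates unchanged through every handler for an $\varepsilon' \neq \varepsilon$, is caught and turned into $\daemon$ by the handler for $\varepsilon$ (which exists precisely because $\varepsilon \in \Delta$), after which $\daemon$ survives the remaining handlers. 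In both cases $C[\rai] = \stry{D[\rai]} \rede \daemon$, giving $\rai \perp C$ for all $C \in \strad{\pw{A}}$, i.e. $\rai \in \trad{\pw{A}}$. Everything outside the context-induction is definitional bookkeeping, with confluence (Theorem~\ref{syntax:confluence}) available should I need to reconcile distinct reduction orders.
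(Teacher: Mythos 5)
Your proposal is correct and takes essentially the same route as the paper's proof: the paper also decomposes each $C\,{\in}\;\strad{\pw{A}}$ as $\stry{D}$ with $D\,{\in}\;\strad{A}$, invokes the fact that either $D[\rai]\rede\daemon$ or $D[\rai]\rede\rai$, and concludes that $C[\rai]\rede\daemon$. The only difference is that the paper leaves this dichotomy as ``easily shown'' whereas you spell out the induction on the context grammar (including the crucial $\tryctx{\cdot}$ case that depends on the $\daemon$-restriction in handlers), which is a faithful expansion of the intended argument.
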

\proof
  Let $C\,{\in}\;\strad{\pw{A}}$, by definition $C =
  \comp{\lens{\stry{\hole}}}{D}$ where $D\,{\in}\;\strad{A}$. We can easily show
  that either $D[\rai]\rede\daemon$ or $D[\rai]\rede\rai$. In any case,
  $C[\rai]\rede\daemon$.\qed

\noindent An important and essential property of the model is that it validates the
subtyping rule \labelfont{eq-arrc}:
\begin{lem} 
  \label{lemme:fx:model:corruptarrow}
  If $A$ and $B$ are two types, ${\rho}$ is valuation function and~${\Delta}$ is a set of
  exception names, then 
  \begin{displaymath}
    \trad{\pw{(\arr{A}{B})}}\quad =\quad \trad{\arr{\pw{A}}{\pw{B}}}
  \end{displaymath}
\end{lem}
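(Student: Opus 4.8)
The plan is to unfold the definition of $\strad{\cdot}$ on both sides, push the corruption operator $\lift{}$ through the union and the concatenation, and then match the two resulting intersections by reindexing the set ${\Delta}_0$ over which the arrow is interpreted. Throughout, $\pw{\cdot}$ abbreviates $\fpw{\cdot}{{\Delta}}$ for the fixed~${\Delta}$ of the statement, and ${\Delta}_0$ denotes the bound variable in the clause defining $\strad{\arr{A}{B}}$.

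For the left-hand side I would start from $\strad{\fpw{(\arr{A}{B})}{{\Delta}}} = \lift{\strad{\arr{A}{B}}}$. Since $\lift{}$ is composition on the right with a fixed context set, it distributes over the union and satisfies $\lift{(\cons{\mathcal A}{S})} = \cons{\mathcal A}{\lift{S}}$; unfolding $\strad{\arr{A}{B}}$ therefore gives
\[
  \strad{\fpw{(\arr{A}{B})}{{\Delta}}} = \bigcup_{{\Delta}_0\subseteq\exceptionset}\cons{\orth{\strad{\fpw{A}{{\Delta}_0}}}}{\flift{{\Delta}}{\flift{{\Delta}_0}{\strad{B}}}}.
\]
Taking $\orth{\cdot}$ and applying Lemma~\ref{lemme:orthounion} turns the union into an intersection, and the corollary of Lemma~\ref{lemma:model:uniliftortho} rewrites $\orth{(\cons{\mathcal A}{\flift{{\Delta}}{\flift{{\Delta}_0}{\strad{B}}}})}$ as $\orth{(\cons{\mathcal A}{\strad{\fpw{B}{{\Delta}\cup{\Delta}_0}}})}$, so that
\[
  \trad{\fpw{(\arr{A}{B})}{{\Delta}}} = \bigcap_{{\Delta}_0}\orth{\big(\cons{\trad{\fpw{A}{{\Delta}_0}}}{\strad{\fpw{B}{{\Delta}\cup{\Delta}_0}}}\big)}.
\]

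For the right-hand side I would unfold $\strad{\arr{\pw{A}}{\pw{B}}}$ with the same arrow clause and take $\orth{\cdot}$, again using Lemma~\ref{lemme:orthounion}. The nested corruptions then simplify by Lemma~\ref{lemma:model:uniliftortho} (which, since its right-hand side depends only on the union of the two exception sets, may be read in either order): its first identity gives $\trad{\fpw{(\pw{A})}{{\Delta}_0}} = \trad{\fpw{A}{{\Delta}\cup{\Delta}_0}}$, and its corollary turns the handler side into $\strad{\fpw{B}{{\Delta}\cup{\Delta}_0}}$, whence
\[
  \trad{\arr{\pw{A}}{\pw{B}}} = \bigcap_{{\Delta}_0}\orth{\big(\cons{\trad{\fpw{A}{{\Delta}\cup{\Delta}_0}}}{\strad{\fpw{B}{{\Delta}\cup{\Delta}_0}}}\big)}.
\]
The two intersections differ only in the term-set index of the concatenation ($\fpw{A}{{\Delta}_0}$ versus $\fpw{A}{{\Delta}\cup{\Delta}_0}$). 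The inclusion $\subseteq$ is pure reindexing: the distinct intersands on the right are those with index ${\Delta}_1\supseteq{\Delta}$, and instantiating the left clause at ${\Delta}_0 = {\Delta}_1$ (using ${\Delta}\cup{\Delta}_1 = {\Delta}_1$) shows each of them contains the left-hand set. For $\supseteq$, given ${\Delta}_0$ I set ${\Delta}_1 = {\Delta}\cup{\Delta}_0$ and pass from the ${\Delta}_1$-clause of the right side to the ${\Delta}_0$-clause of the left side using monotonicity of $\cons{}{}$ in its first argument together with the antitonicity of $\orth{\cdot}$ (Lemma~\ref{lemme:orthoincl}).

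The crux, and the step I expect to be the main obstacle, is the monotonicity $\trad{\fpw{A}{{\Delta}_0}}\subseteq\trad{\fpw{A}{{\Delta}\cup{\Delta}_0}}$ needed for the $\supseteq$ direction, since it is not among the lemmas already proved. I would establish it directly from the reduction of $\code{try}$: if $M\in\trad{\fpw{A}{{\Delta}_0}}$ then $\fstry{D[M]}{{\Delta}_0}\rede\daemon$ for every $D\in\strad{A}$, and the same case analysis on $\code{try}$-reductions used in the proof of Lemma~\ref{lemma:model:uniliftortho} shows $D[M]$ must reduce either to $\daemon$ or to some $\rai$ with ${\varepsilon}\in{\Delta}_0$; as ${\Delta}_0\subseteq{\Delta}\cup{\Delta}_0$, in both cases $\fstry{D[M]}{{\Delta}\cup{\Delta}_0}\rede\daemon$, giving $M\in\trad{\fpw{A}{{\Delta}\cup{\Delta}_0}}$. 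With this fact in hand both inclusions close and the claimed equality follows.
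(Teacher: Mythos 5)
Your proof is correct and takes essentially the same route as the paper's: both rewrite the two interpretations as intersections over exception sets using Lemmas~\ref{lemme:orthounion} and~\ref{lemma:model:uniliftortho}, obtain one inclusion by pure reindexing, and obtain the other from the monotonicity $\trad{\fpw{A}{{\Delta}'}}\,{\subseteq}\,\trad{\fpw{A}{{\Delta}{\cup}{\Delta}'}}$ combined with Lemma~\ref{lemme:orthoincl}. The only divergence is that the paper leaves that monotonicity as an unproved ``we can show that'' while you establish it explicitly via the case analysis on \texttt{try}-reductions, which fills in a detail rather than changing the argument.
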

\proof
  Using Lemmas~\ref{lemme:orthounion} and~\ref{lemma:model:uniliftortho}, we
  have that
  \begin{displaymath}
    \begin{array}{c@{\ =\ }c}
      \trad{\pw{(\arr{A}{B})}} & 
        \displaystyle\mathop{\bigcap}_{{\Delta}'\,{\subseteq}\,\exceptionset}\orth{(\cons{\orth{(\flift{{\Delta}'}{\strad{A}})}}{\flift{{\Delta}{\cup}{\Delta}'}{\strad{B}}})}\\
      \trad{\arr{\pw{A}}{\pw{B}}} & 
        \displaystyle\mathop{\bigcap}_{{\Delta}'\,{\subseteq}\,\exceptionset}\orth{(\cons{\orth{(\flift{{\Delta}{\cup}{\Delta}'}{\strad{A}})}}{\flift{{\Delta}{\cup}{\Delta}'}{\strad{B}}})}
    \end{array}
  \end{displaymath}
  It directly follows that 
  \m{$\trad{\pw{(\arr{A}{B})}}\;{\subseteq}\;\trad{\arr{\pw{A}}{\pw{B}}}$}.
  %En effet, si $t\,{\in}\;\tradr{\pwd{(\arr{A}{B})}}$ alors quel que soit ${\Delta}''$,
  %$t\,{\in}\;\orth{(\cons{\orth{(\lift{{\Delta}''}{\stradr{A}})}}{\lift{{\Delta}{\cup}{\Delta}''}{\stradr{B}}})}$.
  %Donc si ${\Delta}'$ est un ensemble de noms d'exceptions quelconque et si
  %$C\,{\in}\;\cons{\orth{(\lift{{\Delta}{\cup}{\Delta}'}{\stradr{A}})}}{\lift{{\Delta}{\cup}{\Delta}'}{\stradr{B}}}$,
  %alors
  %$C\,{\in}\;\cons{\orth{(\lift{{\Delta}''}{\stradr{A}})}}{\lift{{\Delta}{\cup}{\Delta}''}{\stradr{B}}}$ (en
  %prenant ${\Delta}'' = {\Delta}\,{\cup}\,{\Delta}'$) et donc on a bien $C[t]\nred\daimon$.
   
  For the other inclusion, let $M\,{\in}\;\trad{\arr{\pw{A}}{\pw{B}}}$. If ${\Delta}'$
  is a set of exception names and if 
  $C\,{\in}\;\cons{\orth{(\flift{{\Delta}'}{\strad{A}})}}{\flift{{\Delta}{\cup}{\Delta}'}{\strad{B}}}$, since
  we can show that 
  \m{$\orth{(\flift{{\Delta}'}{\strad{A}})}\;{\subseteq}\;\orth{(\flift{{\Delta}{\cup}{\Delta}'}{\strad{A}})}$}
  we have
  $C\,{\in}\;\cons{\orth{(\flift{{\Delta}{\cup}{\Delta}'}{\strad{A}})}}{\flift{{\Delta}{\cup}{\Delta}'}{\strad{B}}}$ and
  we conclude.\qed 
  %Mais d'autre part, on montre avec les lemmes~\ref{lemme:fx:model:cor-merge}
  %et~\ref{lemme:fx:model:ainacorr} que quel que soit les ensembles~${\Delta}$ et~${\Delta}'$,
  %\m{$\orth{(\lift{{\Delta}'}{\stradr{A}})}\;{\subseteq}\;\orth{(\lift{{\Delta}{\cup}{\Delta}'}{\stradr{A}})}$}.
  %Donc si $t\,{\in}\;\tradr{\arr{\pwd{A}}{\pwd{B}}}$, si ${\Delta}'$ est en ensemble de
  %noms d'exceptions et si
  %$C\,{\in}\;\cons{\orth{(\lift{{\Delta}'}{\stradr{A}})}}{\lift{{\Delta}{\cup}{\Delta}'}{\stradr{B}}}$,
  %alors
  %$C\,{\in}\;\cons{\orth{(\lift{{\Delta}{\cup}{\Delta}'}{\stradr{A}})}}{\lift{{\Delta}{\cup}{\Delta}'}{\stradr{B}}}$ et
  %on a bien $C[t]\nred\daimon$.

\begin{lem}
  The interpretations validate the following equalities:
  \begin{displaymath}
    \begin{array}{c@{\quad =\quad }c}
    \strad{\pw{(\fall{{\alpha}}{A})}} & \strad{\fall{{\alpha}}{\pw{A}}} \\ 
    \strad{\uni{(\fall{{\alpha}}{A})}} & \strad{\fall{{\alpha}}{\uni{A}}} \\
    \strad{\fpw{(\uni{A})}{{\Delta}'}} & \strad{{\uni{(\fpw{A}{{\Delta}'})}}} \\
    \trad{\fpw{(\pw{A})}{{\Delta}'}} & \trad{\fpw{A}{{\Delta}{\cup}{\Delta}'}} \\
    \trad{\funi{(\uni{A})}{{\Delta}'}} & \trad{\funi{A}{({\Delta}{\cup}{\Delta}')}}
    \end{array}
  \end{displaymath}
\end{lem}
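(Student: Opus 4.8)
\medskip

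The plan is to prove each of the five equalities by unfolding the definitions of $\strad{\cdot}$ (Figure~\ref{fig:model:modeldef}) and of $\trad{\cdot} = \orth{\strad{\cdot}}$, and then invoking properties of $\lift$ and $\unlift$ already at hand. The single ingredient I reuse for the first two equalities is that $\lift$ and $\unlift$ distribute over arbitrary unions of context sets: since $\lift{S} = \comp{\ens{\stry{\hole}}}{S}$ and $\unlift{S} = \comp{S}{\ens{\stry{\hole}}}$, and $\comp{\cdot}{\cdot}$ is defined pointwise in each argument, one has $\lift{(\bigcup_i S_i)} = \bigcup_i \lift{S_i}$ and likewise for $\unlift{\cdot}$; no orthogonality is needed here.

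For the first two equalities I would unfold $\strad{\fall{{\alpha}}{A}} = \bigcup_{S \subseteq \contextset^+} \fstrad{A}{{\rho};\,{\alpha}\,{\leftarrow}\,S}$ and compute $\strad{\pw{(\fall{{\alpha}}{A})}} = \lift{\strad{\fall{{\alpha}}{A}}}$. By the distributivity just noted this equals $\bigcup_S \lift{\fstrad{A}{{\rho};\,{\alpha}\,{\leftarrow}\,S}} = \bigcup_S \fstrad{\pw{A}}{{\rho};\,{\alpha}\,{\leftarrow}\,S}$, which is exactly $\strad{\fall{{\alpha}}{\pw{A}}}$. The second equality is identical, with $\unlift$ replacing $\lift$ throughout.

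For the third equality I would unfold both sides directly, obtaining $\flift{{\Delta}'}{(\funlift{{\Delta}}{\strad{A}})}$ on the left and $\funlift{{\Delta}}{(\flift{{\Delta}'}{\strad{A}})}$ on the right. These coincide by the already-recorded commutation $\lift{(\funlift{{\Delta}'}{S})} = \funlift{{\Delta}'}{(\lift{S})}$, applied with the two exception sets in the roles needed (the identity holds for arbitrary sets). Concretely, both describe the context $\fstry{C[\fstry{\hole}{{\Delta}}]}{{\Delta}'}$ as $C$ ranges over $\strad{A}$: a \texttt{try} by ${\Delta}'$ on the outside and a \texttt{try} by ${\Delta}$ plugged into the hole. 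Note that this equality holds already at the level of context sets, whereas the last two will only hold after orthogonality.

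The last two equalities are stated at the level of $\trad{\cdot}$ rather than $\strad{\cdot}$, and here I would reduce directly to Lemma~\ref{lemma:model:uniliftortho}. Unfolding gives $\trad{\fpw{(\pw{A})}{{\Delta}'}} = \orth{(\flift{{\Delta}'}{(\flift{{\Delta}}{\strad{A}})})}$ and $\trad{\fpw{A}{{\Delta}{\cup}{\Delta}'}} = \orth{(\flift{{\Delta}{\cup}{\Delta}'}{\strad{A}})}$, and the first half of Lemma~\ref{lemma:model:uniliftortho} applied to $S = \strad{A}$ (with the two sets swapped, which is harmless since $\cup$ is commutative) equates them; the fifth equality is obtained in the same way from the second half of that lemma, with $\unlift$ in place of $\lift$. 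No step is genuinely hard: the only point that needs care is the bookkeeping, since $\lift$ adds a \texttt{try}-guard on the outside whereas $\unlift$ adds one inside the hole, so I must keep track of which exception set decorates the outer versus the inner wrapping and align this precisely with the orientation of Lemma~\ref{lemma:model:uniliftortho}. Beyond that, every step is a routine unfolding.
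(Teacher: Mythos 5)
Your proposal is correct and follows the same route as the paper, which disposes of the first three equalities as direct consequences of the definitions (your union-distributivity of $\lift$/$\unlift$ and the recorded commutation $\lift{(\funlift{{\Delta}'}{S})} = \funlift{{\Delta}'}{(\lift{S})}$ are exactly the definitional facts involved) and of the last two as direct consequences of Lemma~\ref{lemma:model:uniliftortho}. Your write-up merely makes explicit the unfoldings the paper leaves implicit, including the harmless swap of ${\Delta}$ and ${\Delta}'$ when invoking that lemma.
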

\proof
  The three first equalities are direct consequence of the definitions, the
  two last are direct consequence of Lemma~\ref{lemma:model:uniliftortho}.\qed

\subsection{Model soundness}

We first show that subtyping is sound with respect to the interpretation we have
defined.
\begin{lem}[Subtyping soudness]
  \label{lemme:model:subtypingsoundness}
  If $A$ and $B$ are two types and ${\rho}$ a valuation function such that 
  $\jst{A}{B},$ then for any set of exception names ${\Delta}$, 
  \begin{displaymath}
    \trad{\pw{A}}\ {\subseteq}\ \trad{\pw{B}}
  \end{displaymath}
\end{lem}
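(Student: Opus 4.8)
The plan is to proceed by induction on the derivation of $\jst{A}{B}$, showing for each subtyping rule that $\trad{\pw{A}} \subseteq \trad{\pw{B}}$ holds for every $\Delta$. Since the interpretation of terms is defined by orthogonality, $\trad{\pw{A}} = \orth{\strad{\pw{A}}}$, by Lemma~\ref{lemme:orthoincl} it suffices in most cases to establish the reverse inclusion on contexts, namely $\strad{\pw{B}} \subseteq \strad{\pw{A}}$, though some rules will instead be handled directly at the level of term-sets using the equalities already proved.

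First I would dispatch the structural rules. For \tlabel{st-id} the statement is trivial, and for \tlabel{st-trans} I compose the two inclusions given by the induction hypotheses. The rules \tlabel{ex-uni} and \tlabel{ex-corrupt} reduce, after corrupting both sides, to inclusions between $\lift{}$ and $\unlift{}$ images of a single $\strad{A}$, which follow from the definitions of these operations together with Lemma~\ref{lemma:model:uniliftortho} for collapsing the nested modalities. The commutation equalities \tlabel{eq-uu}, \tlabel{eq-cc}, \tlabel{eq-uc} and the quantifier rules \tlabel{ex-fallc}, \tlabel{ex-fallu} are exactly the content of the preceding lemma on interpretations of modalities under $\forall$, so those cases close immediately by rewriting $\trad{\pw{A}}$ and $\trad{\pw{B}}$ using that lemma.

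The key semantic cases are the arrow rules. For \tlabel{eq-arrc} the required equality $\trad{\pw{(\arr{A}{B})}} = \trad{\arr{\pw{A}}{\pw{B}}}$ is precisely Lemma~\ref{lemme:fx:model:corruptarrow}, so after corrupting by $\Delta$ and using \tlabel{eq-cc} to absorb the double corruption I reduce directly to that lemma. For \tlabel{st-arrow}, contravariance on the left and covariance on the right, I would use Lemma~\ref{model:arrowiswhatwewant} to rewrite both $\trad{\pw{(\arr{A}{B})}}$ and $\trad{\pw{(\arr{A'}{B'})}}$ as intersections over $\Delta'$ of the set of $M$ sending $\trad{\fpw{A}{\Delta \cup \Delta'}}$ into $\trad{\fpw{B}{\Delta \cup \Delta'}}$; the induction hypotheses give $\trad{\fpw{A'}{\Delta''}} \subseteq \trad{\fpw{A}{\Delta''}}$ and $\trad{\fpw{B}{\Delta''}} \subseteq \trad{\fpw{B'}{\Delta''}}$ for all $\Delta''$, and composing these with $M$ yields membership in the $(\arr{A'}{B'})$ set. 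The rule \tlabel{ex-arru} is handled similarly, unfolding the union interpretation of $\uni{(\cdot)}$ via Lemma~\ref{lemme:fx:model:uniisuni} and checking that a term reducing to $\rai$ lands in the corrupted codomain by Lemma~\ref{lemme:fx:model:raiina}.

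The remaining cases are the list rules \tlabel{ex-lcor} and \tlabel{ex-lctx}, which I would settle using the explicit description of $\trad{\pw{(\liste{A})}}$ from Lemma~\ref{lemme:listislist}, pushing the elementwise inclusion $\trad{\pw{A}} \subseteq \trad{\pw{B}}$ through each list entry, and the $\forall$-rules \tlabel{f-gen}, \tlabel{f-inst}, \tlabel{f-distr}, which follow from the union-over-$S$ definition of $\strad{\fall{\alpha}{A}}$. I expect the main obstacle to be the \tlabel{st-arrow} case: the subtlety is that the naive realizability contravariance must be reconciled with the nonstandard arrow interpretation that quantifies over all $\Delta'$ and systematically corrupts both domain and codomain, so the bookkeeping of which exception set indexes which side has to be done carefully, and it is essential that the induction hypothesis be applied at the shifted set $\Delta \cup \Delta'$ rather than at $\Delta$ alone.
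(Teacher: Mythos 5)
Your proposal follows the paper's own proof in all essentials: the same induction on the derivation of $\jst{A}{B}$, the same rewriting of corrupted arrows via Lemma~\ref{lemme:fx:model:corruptarrow} and Lemma~\ref{model:arrowiswhatwewant}, and, crucially, the same observation that in the \tlabel{st-arrow} case the induction hypotheses must be instantiated at the shifted set ${\Delta}{\cup}{\Delta}'$ rather than at ${\Delta}$ alone, which is exactly how the paper's \tlabel{st-arrow} case proceeds. Your treatment of the structural rules, the quantifier rules and \tlabel{ex-arru} also matches the paper's.

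The one step that would fail as written is your dispatch of \tlabel{ex-uni} and \tlabel{ex-corrupt}. After corrupting, \tlabel{ex-uni} asks for $\orth{(\lift{\strad{A}})}\;{\subseteq}\;\orth{(\lift{(\unlift{\strad{A}})})}$, and this cannot be obtained from Lemma~\ref{lemme:orthoincl}: the context sets $\lift{\strad{A}}$ and $\lift{(\unlift{\strad{A}})}$ are incomparable (the former wraps \texttt{try} around the whole context only, the latter also around the hole), and Lemma~\ref{lemma:model:uniliftortho} collapses only nested modalities of the \emph{same} kind ($\lift{}$ over $\lift{}$, $\unlift{}$ over $\unlift{}$), never a $\lift{}$ against an $\unlift{}$. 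These two cases genuinely require term-level reasoning: the paper settles them with Lemma~\ref{lemme:fx:model:uniisuni} (whose proof rests on weak normalization, Lemma~\ref{lemme:fx:model:modelwhnf}, and closure under equivalence, Lemma~\ref{lemme:fx:model:cloturebetaequiv}), together with Lemma~\ref{lemme:fx:model:raiina} and monotonicity in the exception set for \tlabel{ex-corrupt}. A milder form of the same conflation appears in your \tlabel{ex-arru} case: when $t\rede\rai$ with ${\varepsilon}\,{\in}\;{\Delta}'$, the term $t\ u$ lands in the \emph{union} part of $\trad{\funi{\fpw{B}{{\Delta}{\cup}{\Delta}''}}{{\Delta}'}}$, justified by Lemma~\ref{lemme:fx:model:uniisuni} and closure under equivalence, not by Lemma~\ref{lemme:fx:model:raiina}, since ${\varepsilon}$ need not belong to the corrupting set ${\Delta}{\cup}{\Delta}''$. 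All of this is repairable with lemmas you already cite elsewhere, but the purely context-level shortcut for \tlabel{ex-uni} and \tlabel{ex-corrupt} is not available.
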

\proof
  We reason by induction on the derivation of~$\jst{A}{B}$. Many cases are
  either trivial (\tlabel{st-id} and \tlabel{st-trans}) or direct consequences
  of the lemmas we have defined so far. We only give in the following the
  cases that do not belong to one of these categories:
\begin{enumerate}[\hbox to8 pt{\hfill}]
\item\noindent{\hskip-12 pt\bf\tlabel{st-arrow}:\ }
    %\item[\hspace*{-8mm}\vspace*{3cm}\begin{tabular}{c}
    %       $\inferp{\labelfont{st-arrow}}{\jst{\arr{A}{B}}{\arr{A'}{B'}}}{\jst{A'}{A} & \jst{B}{B'}}$
    %      \end{tabular}]
    %$\begin{array}{c}\infer{\jst{\arr{A}{B}}{\arr{A'}{B'}}}{\jst{A'}{A} & \jst{B}{B'}}\end{array}$
      Consider~${\Delta}\,{\subseteq}\;\exceptionset$ and
      \mbox{$M\,{\in}\;\trad{\pw{(\arr{A}{B})}} =
      \trad{\arr{\pw{A}}{\pw{B}}}$}.  We now have to show that 
      \mbox{$M\,{\in}\;\trad{\pw{(\arr{A'}{B'})}}=
      \trad{\arr{\pw{A'}}{\pw{B'}}}$}.  For
      \mbox{$C\,{\in}\;\strad{\arr{\pw{A'}}{\pw{B'}}}$} we will establish
      \mbox{$C[M]\rede\daemon$}. 
      By definition of~\mbox{$\strad{\arr{\pw{A'}}{\pw{B'}}}$}, there
      exist ${\Delta}'$, \mbox{$N\,{\in}\;\trad{\fpw{A'}{{\Delta}{\cup}{\Delta}'}}$}
      and\break \mbox{$D\,{\in}\;\strad{\fpw{B'}{{\Delta}{\cup}{\Delta}'}}$} such that \mbox{$C =
      D[\hole\ N]$}. By induction hypothesis,
      \mbox{$\trad{\fpw{A'}{{\Delta}{\cup}{\Delta}'}}\,{\subseteq}\,\trad{\fpw{A}{{\Delta}{\cup}{\Delta}'}}$}. But
      since~\mbox{$M\,{\in}\;\trad{\arr{\pw{A}}{\pw{B}}}$},
      using Lemma~\ref{model:arrowiswhatwewant}, we have \mbox{$M\
      N\,{\in}\;\trad{\fpw{B}{{\Delta}{\cup}{\Delta}'}}$}. Then by induction hypothesis,
      \mbox{$\trad{\fpw{B}{{\Delta}{\cup}{\Delta}'}}\,{\subseteq}\,\trad{\fpw{B'}{{\Delta}{\cup}{\Delta}'}}$} and finally,
      \mbox{$C[M] = D[M\ N]\rede\daemon$}.
\item\noindent{\hskip-12 pt\bf\tlabel{f-gen}:\ } For ${\Delta}\,{\subseteq}\;\exceptionset$ and
      \m{$M\,{\in}\;\trad{\pw{A}}$} we will show
      that \m{$M\,{\in}\;\trad{\pw{(\fall{{\alpha}}{B})}} = \trad{\fall{{\alpha}}{\pw{B}}}$}.
      Let \m{$C\,{\in}\;\strad{\fall{{\alpha}}{\pw{B}}}$}, there exists~$S$ such that 
      \m{$C\,{\in}\;\fstrad{\pw{B}}{\addtovaluation{{\rho}}{{\alpha}}{S}}$}. Moreover, since
      ${\alpha}\,{\notin}\;\freevarset{A}$, we have \m{$\trad{\pw{A}} =
      \ftrad{\pw{A}}{\addtovaluation{{\rho}}{{\alpha}}{S}}$}.
      And
      since  \m{$\ftrad{\pw{A}}{\addtovaluation{{\rho}}{{\alpha}}{S}}\;{\subseteq}\;\ftrad{\pw{B}}{\addtovaluation{{\rho}}{{\alpha}}{S}}$}
      by induction hypothesis,
      \m{$M\,{\in}\;\ftrad{\pw{B}}{\addtovaluation{{\rho}}{{\alpha}}{S}}$} and finally,
      $C[M]\rede\daemon$.
\item\noindent{\hskip-12 pt\bf\tlabel{f-inst}:\ } Given ${\Delta}\,{\subseteq}\;\exceptionset$, we will show that 
      \mbox{$\strad{\pw{(\subst{A}{{\alpha}}{B})}}\ {\subseteq}\
      \strad{\pw{(\fall{{\alpha}}{A})}}$} and then conclude by
      orthogonality (Lemma~\ref{lemme:orthoincl}). Let
      \m{$C\,{\in}\;\strad{\pw{(\subst{A}{{\alpha}}{B})}}$}, we show
      by a straightforward induction on $A$ that
      \m{$\strad{\pw{(\subst{A}{{\alpha}}{B})}} =
      \fstrad{\pw{A}}{\addtovaluation{{\rho}}{{\alpha}}{\strad{B}}}$}. Moreover, by definition of
      \m{$\strad{\fall{{\alpha}}{\pw{A}}}$}, we have 
      \m{$\fstrad{\pw{A}}{\addtovaluation{{\rho}}{{\alpha}}{\strad{B}}}\ {\subseteq}\
      \strad{\fall{{\alpha}}{\pw{A}}}$}, from which if follows that 
      \m{$C\,{\in}\;\strad{\fall{{\alpha}}{\pw{A}}} = \strad{\pw{(\fall{{\alpha}}{A})}}$}. 
\item\noindent{\hskip-12 pt\bf\tlabel{f-distr}:\ } Consider ${\Delta}\,{\subseteq}\;\exceptionset$ and
      \m{$t\,{\in}\;\trad{\pw{(\fall{{\alpha}}{(\arr{A}{B})})}} =
      \trad{\fall{{\alpha}}{(\arr{\pw{A}}{\pw{B}})}}$}, we will show that
      \m{$t\,{\in}\;\trad{\pw{(\arr{A}{\fall{{\alpha}}{B}})}} =
      \trad{\arr{\pw{A}}{\fall{{\alpha}}{\pw{B}}}}$}. Let
      $C\,{\in}\;\strad{\arr{\pw{A}}{\fall{{\alpha}}{\pw{B}}}}$, by definition there
      exists~${\Delta}'$, \m{$u\,{\in}\;\trad{\fpw{A}{{\Delta}{\cup}{\Delta}'}}$} and
      \m{$D\,{\in}\;\strad{\fpw{(\fall{{\alpha}}{\pw{B}})}{{\Delta}'}}$} \m{$ =
      \strad{\fall{{\alpha}}{\fpw{B}{{\Delta}{\cup}{\Delta}'}}}$} such that
      $C = D[\hole\ u]$. Then there exists~$S$ such
      that~\m{$D\,{\in}\;\fstrad{\fpw{B}{{\Delta}{\cup}{\Delta}'}}{\addtovaluation{{\rho}}{{\alpha}}{S}}$} and
      since~${\alpha}\,{\notin}\;\freevarset{A}$, 
      $u\,{\in}\;\ftrad{\fpw{A}{{\Delta}{\cup}{\Delta}'}}{\addtovaluation{{\rho}}{{\alpha}}{S}}$. Thus by
      definition of \m{$\trad{\fall{{\alpha}}{(\arr{\pw{A}}{\pw{B}})}}$} and using
      Lemma~\ref{model:arrowiswhatwewant}, we have 
      \m{$t\ u\,{\in}\;\ftrad{\fpw{B}{{\Delta}{\cup}{\Delta}'}}{\addtovaluation{{\rho}}{{\alpha}}{S}}$} and
      finally $C[t] = D[t\ u]\rede \daemon$.
\item\noindent{\hskip-12 pt\bf\tlabel{ex-arru}:\ } Consider ${\Delta}\,{\subseteq}\;\exceptionset$ and
      \m{$t\,{\in}\;\trad{\pw{(\funi{(\arr{A}{B})}{{\Delta}'})}}$} \m{$ =
      \trad{\funi{(\arr{\pw{A}}{\pw{B}})}{{\Delta}'}}$}, we have to show that 
      \m{$t\,{\in}\;\trad{\pw{(\arr{A}{\funi{B}{{\Delta}'}})}} =
      \trad{\arr{\pw{A}}{\funi{\pw{B}}{{\Delta}'}}}$}. Consider
      ${\Delta}''\,{\subseteq}\;\exceptionset$ and  $u\,{\in}\;\trad{\fpw{A}{{\Delta}{\cup}{\Delta}''}}$, using
      Lemma~\ref{model:arrowiswhatwewant}, we must show that $t\
      u\,{\in}\;\trad{\funi{\fpw{B}{{\Delta}{\cup}{\Delta}''}}{{\Delta}'}}$. But using
      Lemma~\ref{lemme:fx:model:uniisuni} we have:
      \begin{enumerate}[$\bullet$]
        \item Either $t\rede\rai$ for ${\varepsilon}\,{\in}\;{\Delta}'$, but then $t\ u\rede\rai$
          and hence using Lemma~\ref{lemme:fx:model:uniisuni}, $t\
          u\,{\in}\;\trad{\funi{\fpw{B}{{\Delta}{\cup}{\Delta}''}}{{\Delta}'}}$.
        \item Or $t\,{\in}\;\trad{\arr{\pw{A}}{\pw{B}}}$ and using 
          Lemma~\ref{model:arrowiswhatwewant}, $t\
          u\,{\in}\;\trad{\fpw{B}{{\Delta}{\cup}{\Delta}''}}$ which in turn gives 
          $t\ u\,{\in}\;\trad{\funi{\fpw{B}{{\Delta}{\cup}{\Delta}''}}{{\Delta}'}}$ (Lemma~\ref{lemme:fx:model:uniisuni}).
      \end{enumerate}
\item\noindent{\hskip-12 pt\bf\tlabel{ex-ctx}:\ } This case is trivial with the use of
      Lemma~\ref{lemme:fx:model:uniisuni}. 
      %Let~${\Delta}\,{\subseteq}\;\exceptionset$, by induction
      %hypothesis \m{$\trad{\pw{A}}\;{\subseteq}\;\trad{\pw{B}}$}. We simply use the fact
      %that \m{$\trad{\pw{(\funi{A}{{\Delta}'})}} = \trad{\funi{\pw{A}}{{\Delta}'}} =
      %\trad{\pw{A}}\;{\cup}\;\ens{t \md t\rede\rai,\ {\varepsilon}\,{\in}\;{\Delta}'}$} and
      %\m{$\trad{\pw{(\funi{B}{{\Delta}'})}} = \trad{\funi{\pw{B}}{{\Delta}'}} = 
      %\trad{\pw{B}}\;{\cup}\;\ens{t \md t\rede\rai,\ {\varepsilon}\,{\in}\;{\Delta}'}$}.
\item\noindent{\hskip-12 pt\bf\tlabel{ex-uni}:\ } This case is trivial with the use of
      Lemma~\ref{lemme:fx:model:uniisuni}. 
      %We have to show that 
      %\m{$\trad{\pw{A}}$} \m{$ {\subseteq}\ \trad{\pw{(\funi{A}{{\Delta}'})}}$} \m{$=
      %\trad{\funi{\pw{A}}{{\Delta}'}}$}. We conclude directly using
      %lemma~\ref{lemme:fx:model:uniisuni}.
\item\noindent{\hskip-12 pt\bf\tlabel{ex-corrupt}:\ } We need to show 
      \m{$\trad{\pw{(\funi{A}{{\Delta}'})}}$} \m{$= \trad{\funi{\pw{A}}{{\Delta}'}}$} \m{${\subseteq}\
      \trad{\pw{(\fpw{A}{{\Delta}'})}}= \trad{\fpw{A}{{\Delta}{\cup}{\Delta}'}}$}. But
      \m{$\trad{\funi{\pw{A}}{{\Delta}'}} = \trad{\pw{A}}\;{\cup}\;\ens{t \md
      t\rede\rai,\ {\varepsilon}\,{\in}\;{\Delta}'}$} and it can be easily shown that if
      ${\Delta}\,{\subseteq}\,{\Delta}'$, then $\trad{\pw{A}}\,{\subseteq}\,\trad{\fpw{A}{{\Delta}'}}$
      and that if ${\varepsilon}\,{\in}\;{\Delta}$, then $\rai\,{\in}\;\trad{\pw{A}}$
      (Lemma~\ref{lemme:fx:model:raiina}).\qed
  \end{enumerate}

\noindent We define the interpretation corrupted by some set of exception
names ${\Delta}$ (eventually empty) of a typing context ${\Gamma}$ by: 
\begin{displaymath}
  \trad{\pw{{\Gamma}}} \quad = \quad \ens{{\sigma}\md {\forall}\,(x:A)\;{\in}\;{\Gamma},\ {\sigma}(x)\;{\in}\;\trad{\pw{A}}}
\end{displaymath}
Moreover, if ${\sigma}$ is a substitution of term variables and $M$ is a term, we use the 
notation $M[{\sigma}]$ for the \emph{parallel substitution} of $M$ by ${\sigma}$, which
consists in applying ${\sigma}$ to all free variables of $M$ in parallel. 
We can now show that our interpretation is sound with respect to typing: 
\begin{thm}[Model soundness]
  \label{model:correctness}
  If $M$ is a term, $A$ a type and ${\Gamma}$ a typing context
  such that $\jtyp{M}{A}$, then for all valuation function ${\rho}$, for all set of
  exception names ${\Delta}$ and for all substitution ${\sigma}\;{\in}\;\trad{\pw{{\Gamma}}}$, we have
  $M[{\sigma}]\;{\in}\;\trad{\pw{A}}.$
\end{thm}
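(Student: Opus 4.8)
The plan is to argue by induction on the derivation of $\jtyp{M}{A}$, proving the statement for every valuation $\rho$, every set of exception names $\Delta$ and every $\sigma \in \trad{\pw{\Gamma}}$ at once. Keeping all three parameters universally quantified in the induction hypothesis is what makes the argument go through: the rule $\labelfont{abs}$ will extend $\sigma$, the rule $\labelfont{gen}$ will vary $\rho$, and the data constructors and recursion operators will vary $\Delta$. The main tools are the arrow characterization (Lemma~\ref{model:arrowiswhatwewant}) together with $\trad{\pw{(\arr{A}{B})}} = \trad{\arr{\pw{A}}{\pw{B}}}$ (Lemma~\ref{lemme:fx:model:corruptarrow}), the union characterization (Lemma~\ref{lemme:fx:model:uniisuni}), closure by equivalence (Lemma~\ref{lemme:fx:model:cloturebetaequiv}) and subtyping soundness (Lemma~\ref{lemme:model:subtypingsoundness}).

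The System~F$\eta$ rules are routine. For $\labelfont{ax}$ the goal $\sigma(x) \in \trad{\pw{A}}$ is exactly the hypothesis on $\sigma$, and for $\labelfont{subs}$ the conclusion follows from Lemma~\ref{lemme:model:subtypingsoundness}. For $\labelfont{app}$ I feed the two induction hypotheses into Lemma~\ref{model:arrowiswhatwewant}. For $\labelfont{abs}$ I rewrite the goal as $\trad{\arr{\pw{A}}{\pw{B}}}$; then for an arbitrary $\Delta'$ and $N \in \trad{\fpw{A}{\Delta \cup \Delta'}}$ the term $(\lam{x}{M[\sigma]})\ N$ reduces to $M[\sigma, x \leftarrow N]$, so I apply the induction hypothesis with ambient set $\Delta \cup \Delta'$ (for which $\sigma$ still realizes $\Gamma$, since the interpretation is monotone in the exception set) and the substitution extended by $x \leftarrow N$, concluding by closure under anti-reduction. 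Finally $\labelfont{gen}$ follows from $\strad{\pw{(\fall{\alpha}{A})}} = \strad{\fall{\alpha}{\pw{A}}}$ by running the induction hypothesis under an arbitrary $\addtovaluation{\rho}{\alpha}{S}$, using $\alpha \notin FV(\Gamma)$ to keep $\sigma \in \trad{\pw{\Gamma}}$.

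The exception-handling rules rely on the union characterization. For $\labelfont{raise}$, after commuting the ambient corruption through the quantifier and the union the goal becomes $\rai \in \trad{\funi{(\fpw{\alpha}{\Delta})}{\lens{\varepsilon}}}$, which holds since $\rai$ lies in the $\{\,M \mid M \rede \rai\,\}$ component of Lemma~\ref{lemme:fx:model:uniisuni}. For $\labelfont{try}$, the induction hypothesis puts $M[\sigma]$ in $\trad{\pw{A}} \cup \{\,M \mid M \rede \rai\,\}$: if $M[\sigma] \rede \rai$ then $\try{M[\sigma]}{N[\sigma]}$ reduces to $N[\sigma] \in \trad{\pw{A}}$; otherwise $M[\sigma] \in \trad{\pw{A}}$ has a value (Lemma~\ref{lemme:fx:model:modelwhnf}) which the \texttt{try} either catches (reducing to $N[\sigma]$) or leaves unchanged, so $\try{M[\sigma]}{N[\sigma]}$ lies in $\trad{\pw{A}}$ by Lemma~\ref{lemme:fx:model:cloturebetaequiv}.

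The data constructors $\labelfont{zero}$, $\labelfont{succ}$, $\labelfont{nil}$ and $\labelfont{cons}$ fall out directly from the explicit descriptions of $\trad{\pw{\nat}}$ and $\trad{\pw{\liste{A}}}$ (Lemmas~\ref{lemme:natisnat} and~\ref{lemme:listislist}). I expect the recursion operators $\labelfont{rec}$ and $\labelfont{fold}$ to be the main obstacle, as one must show that $\natrec$ and $\foldright$ realize their precise types for an arbitrary ambient corruption. The strategy is to take realizers of the corrupted argument types, use Lemma~\ref{lemme:natisnat} (resp.~Lemma~\ref{lemme:listislist}) to write the recursion argument as $S^k\ \Phi$ with $\Phi$ one of $0$, $\daemon$ or $\rai$, and then induct on $k$. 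The delicate cases are $\Phi = \rai$ and $\Phi = \daemon$: there the step $\natrec\ X\ Y\ (\rai) \red \rai$ produces a top-level exception that must be absorbed by the union in the result type $\funi{\alpha}{(\Delta \cup \Delta')}$ --- precisely the reason the sets $\Delta$ and $\Delta'$ were added to the rule --- while in the successor case one must check that the step function keeps the accumulated result inside $\trad{\uni{\alpha}}$ all along the induction.
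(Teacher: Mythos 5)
Your proposal is correct and follows essentially the same route as the paper's proof: induction on the typing derivation with ${\rho}$, ${\Delta}$ and ${\sigma}$ kept universally quantified, handling \labelfont{abs}, \labelfont{app}, \labelfont{gen}, \labelfont{raise} and \labelfont{try} with the same key lemmas (Lemmas~\ref{model:arrowiswhatwewant}, \ref{lemme:fx:model:corruptarrow}, \ref{lemme:fx:model:uniisuni}, \ref{lemme:fx:model:cloturebetaequiv} and~\ref{lemme:model:subtypingsoundness}) in the same places. Your sketch of the \labelfont{rec}/\labelfont{fold} cases --- splitting the argument via Lemma~\ref{lemme:fx:model:uniisuni}, writing it as $S^k\ {\Phi}$ via Lemma~\ref{lemme:natisnat} (resp.\ Lemma~\ref{lemme:listislist}), then inducting on $k$ --- is precisely the argument the paper gives.
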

\proof
We use induction on the derivation of $\jtyp{t}{A}$. Note that
since \m{$\trad{A}\;{\subseteq}\; \trad{\pw{A}}$}
(Lemma~\ref{lemme:model:subtypingsoundness}), we will only show that
$t[{\sigma}]\,{\in}\;\trad{A}$ when possible. We give here only the interesting
cases. The other cases are either simple (\labelfont{ax}, \labelfont{subs},
\labelfont{zero}, \labelfont{succ}, \labelfont{nil}, \labelfont{cons}) or, for
\labelfont{fold}, follows closely the structure of the proof for
\labelfont{rec}.
\begin{enumerate}[\hbox to8 pt{\hfill}]
\item\noindent{\hskip-12 pt\bf\tlabel{abs}:\ } Let \m{$C\,{\in}\,\strad{\pw{(\arr{A}{B})}}
      = \strad{\arr{\pw{A}}{\pw{B}}}$}, we need to show that
      \m{$C[\lam{x}{t[{\sigma}]}]\rede\daemon$}. By definition, there exists ${\Delta}'$, 
      \m{$u\,{\in}\;\trad{\fpw{A}{{\Delta}'}}$} and
      \m{$D\,{\in}\;\strad{\fpw{B}{({\Delta}{\cup}{\Delta}')}}$} such that \m{$C = D[\hole\ u]$}.
      Then, if \m{${\delta} = {\sigma}+\{x\,{\rightarrow}\,u\}$}, we have 
      \m{${\delta}\,{\in}\;\trad{\fpw{({\Gamma}, x:A)}{({\Delta}{\cup}{\Delta}')}}$} 
      (\m{${\sigma}\,{\in}\;\trad{\fpw{{\Gamma}}{{\Delta}'}}\;{\subseteq}\;\trad{\fpw{{\Gamma}}{({\Delta}{\cup}{\Delta}')}}$} and
      \m{$u\,{\in}\;\trad{\fpw{A}{{\Delta}'}}\;{\subseteq}\trad{\fpw{A}{{\Delta}{\cup}{\Delta}'}}$}),
      and also by induction
      hypothesis, \m{$t[{\delta}]\,{\in}\;\trad{\fpw{B}{({\Delta}{\cup}{\Delta}')}}$}. 
      However, \m{$(\lam{x}{t[{\sigma}]})\ u\red t[{\delta}]$} and $\trad{\fpw{B}{({\Delta}{\cup}{\Delta}')}}$ is
      closed by anti-reduction (Lemma~\ref{lemme:fx:model:cloturebetaequiv}),
      and thus \m{$(\lam{x}{M[{\sigma}]})\ u\,{\in}\;\trad{\fpw{B}{({\Delta}{\cup}{\Delta}')}}$} and finally
      \m{$C[\lam{x}{M[{\sigma}]}] = D[(\lam{x}{M[{\sigma}]})\ u]\rede\daemon$}.
\item\noindent{\hskip-12 pt\bf\tlabel{app}:\ } We easily conclude using Lemma~\ref{model:arrowiswhatwewant}. 
\item\noindent{\hskip-12 pt\bf\tlabel{gen}:\ } Let \m{$C\,{\in}\;\strad{\pw{(\fall{{\alpha}}{A})}} =
      \strad{\fall{{\alpha}}{\pw{A}}}$}, by definition there exists~$S$ non empty
      such that \m{$C\,{\in}\;\fstrad{\pw{A}}{\addtovaluation{{\rho}}{{\alpha}}{S}}$}.
      Moreover, since ${\alpha}\,{\notin}\;\freevarset{{\Gamma}}$, \m{$\trad{\pw{{\Gamma}}} =
      \ftrad{\pw{{\Gamma}}}{\addtovaluation{{\rho}}{{\alpha}}{S}}$}. It follows that by induction
      hypothesis, \m{$t[{\sigma}]\,{\in}\;\ftrad{\pw{A}}{\addtovaluation{{\rho}}{{\alpha}}{S}}$}.
      Finally, \m{$C[t[{\sigma}]]\rede\daemon$} and
      \m{$t[{\sigma}]\,{\in}\;\trad{\pw{(\fall{{\alpha}}{A})}}$}. 
    %\item[\tlabel{t-zero}] Ce cas est direct grâce au
    %  lemme~\ref{lemme:fx:model:natcoriswhatwewant}.
    %\item[\tlabel{t-succ}] On doit montrer que
    %  \m{$S\,{\in}\;\tradr{\arr{\nat}{\nat}}$} et donc que pour tout ensemble de
    %  noms d'exceptions ${\Delta}$, \m{$S\,{\in}\;\set{t\md
    %  {\forall}u\,{\in}\;\tradr{\pwd{\nat}},\ t\ u\,{\in}\;\tradr{\pwd{\nat}}}$}
    %  (lemme~\ref{lemme:fx:model:interparrow}). Donc soit
    %  $u\,{\in}\;\tradr{\pwd{\nat}}$, on doit montrer que $S\
    %  u\,{\in}\;\tradr{\pwd{\nat}}$, mais ceci est facile avec le
    %  lemme~\ref{lemme:fx:model:natcoriswhatwewant}.
\item\noindent{\hskip-12 pt\bf\tlabel{rec}:\ } We have to show that  
      $$\natrec\,{\in}\;\trad{\fall{{\alpha}}{\arr{\uni{{\alpha}}}{(\arr{\pw{\nat}}{\uni{{\alpha}}}{\uni{{\alpha}}})}{\funi{\pw{\nat}}{{\Delta}'}}{\funi{{\alpha}}{({\Delta}{\cup}{\Delta}')}}}}.$$
      Using Lemma~\ref{model:arrowiswhatwewant}, we have to show that for any
      non empty set of contexts $S$, for any ${\Delta}_1$, ${\Delta}_2$ and ${\Delta}_3$ and for  
      \begin{displaymath}
        \begin{array}{ccl}
          z&{\in}&\ftrad{\uni{\fpw{{\alpha}}{{\Delta}_1}}}{\addtovaluation{{\rho}}{{\alpha}}{S}}\\
          f&{\in}&\ftrad{\arr{\fpw{\nat}{{\Delta}_1{\cup}{\Delta}_2{\cup}{\Delta}}}{\uni{\fpw{{\alpha}}{{\Delta}_1{\cup}{\Delta}_2}}}{\uni{\fpw{{\alpha}}{{\Delta}_1{\cup}{\Delta}_2}}}}{\addtovaluation{{\rho}}{{\alpha}}{S}}\\
          n&{\in}&\ftrad{\funi{\fpw{\nat}{{\Delta}_1{\cup}{\Delta}_2{\cup}{\Delta}_3}}{{\Delta}'}}{\addtovaluation{{\rho}}{{\alpha}}{S}}
        \end{array}
      \end{displaymath}
      we have
      \m{$\natrec\ z\ f\ n\,{\in}\;\ftrad{\funi{\fpw{{\alpha}}{{\Delta}_1{\cup}{\Delta}_2{\cup}{\Delta}_3}}{({\Delta}{\cup}{\Delta}')}}{\addtovaluation{{\rho}}{{\alpha}}{S}}$}. 
      With Lemma~\ref{lemme:fx:model:uniisuni}, we have either $n\rede\rai$ for 
      ${\varepsilon}\,{\in}\;{\Delta}'$ (and we easily conclude), or
      $n\,{\in}\;\ftrad{\fpw{\nat}{{\Delta}_1{\cup}{\Delta}_2{\cup}{\Delta}_3{\cup}{\Delta}}}{\addtovaluation{{\rho}}{{\alpha}}{S}}$. In
      this last case, there exists some $k$ such that $n\rede S^k\ {\Phi}$ where
      ${\Phi}$ is one of $0$, $\daemon$ or $\rai$ for ${\varepsilon}\,{\in}\;{\Delta}_1{\cup}{\Delta}_2{\cup}{\Delta}_3{\cup}{\Delta}$
      (Lemma~\ref{lemme:natisnat}). We then proceed by induction on $k$. If
      $k = 0$ ($n\rede {\Phi}$), we easily conclude in each case of ${\Phi}$. Otherwise,
      we must show that 
      \m{$\natrec\ z\ f\ (S\ (S^k\ {\Phi}))\,{\in}\;\ftrad{\funi{\fpw{{\alpha}}{{\Delta}_1{\cup}{\Delta}_2{\cup}{\Delta}_3}}{({\Delta}{\cup}{\Delta}')}}{\addtovaluation{{\rho}}{{\alpha}}{S}}$}. 
      But $\natrec\ z\ f\ (S\ (S^k\ {\Phi}))\red f\ (S^k\ {\Phi})\ (\natrec\ z\ f\ (S^k\ {\Phi}))$.
      We then conclude using Lemma~\ref{model:arrowiswhatwewant} with
      $f$, the fact that $S^k\ {\Phi}\,{\in}\;\trad{\fpw{\nat}{{\Delta}_1{\cup}{\Delta}_2{\cup}{\Delta}_3{\cup}{\Delta}}}$
      (Lemma~\ref{lemme:natisnat}) and the induction hypothesis.
\item\noindent{\hskip-12 pt\bf\tlabel{raise}:\ } It is a direct using Lemma~\ref{lemme:fx:model:uniisuni}. 
\item\noindent{\hskip-12 pt\bf\tlabel{try}:\ } 
      By induction hypothesis, 
      \m{$t[{\sigma}]\,{\in}\;\trad{\pw{(\uni{A})}} = \trad{\uni{\pw{A}}}$} and 
      \m{$u[{\sigma}]\,{\in}\;\trad{\pw{A}}$}. Using Lemma~\ref{lemme:fx:model:uniisuni},
      we have either that \m{$t[{\sigma}]\,{\in}\;\trad{\pw{A}}$} or that
      \m{$t[{\sigma}]\rede\rai$}. Since terms inhabiting the interpretation have
      values (Lemma~\ref{lemme:fx:model:modelwhnf}), in both case
      we can show that either \m{$\try{t[{\sigma}]}{u[{\sigma}]}$} reduces to $u[{\sigma}]$ or, if
      $t[{\sigma}]\,{\in}\;\trad{\pw{A}}$, it reduces to some $t'$ such that $t[{\sigma}]\rede
      t'$. In both case, we can conclude.\qed
    %\item[\tlabel{t-subtyp}] La preuve est une application direct de
    %  l'hypothèse d'induction et du
    %  lemme~\ref{lemme:fx:model:subtypingsoundness}.
  \end{enumerate}

%\begin{proof}
%  We have seen that arrow types of \fx{} is smaller than the realizability
%  one. Thus forming functions requires stronger assumptions and to prove the
%  soundness of the interpretation for the typing rule of
%  abstraction~\labelfont{abs}, we need to prove the more general soundness
%  result stating that if $\jtyp{M}{A}$, then for all valuation function
%  ${\rho}$, \emph{for all set of exceptions name ${\Delta}$} and for all substitution
%  ${\sigma}\;{\in}\;\trad{\pw{{\Gamma}}}$ we have $M[{\sigma}]\;{\in}\;\trad{\pw{A}}$ (where
%  $\trad{\fpw{{\Gamma}}{{\Delta}}}\ =\ \ens{{\sigma}\md {\forall}\,(x:A)\;{\in}\;{\Gamma},\ {\sigma}(x)\;{\in}\;\trad{\fpw{A}{{\Delta}}}}$).
%
%  The proof of this more general theorem proceeds by induction on the
%  derivation of $\jtyp{M}{A}$ and to prove the case of the subsumption
%  rule~\labelfont{subs}, we first show that if $A \st B$, then for all set of
%  exceptions names ${\Delta}$, we have $\trad{\pw{A}}\ {\subseteq}\ \trad{\pw{B}}$. 
%  We prove this last result by induction on the derivation of $A \st B$ and no
%  case presents major difficulty. 
%\end{proof}
%Note that the usual soundness theorem is simply recovered by taking ${\Delta} = {\emptyset}$ in the theorem
%above.
\noindent Note that in this model, we only consider closed terms by
construction. For this very reason, we cannot establish a strong
normalization theorem using this model. But, from the model, we obtain
a form a weak head normalization theorem (let us recall that values
corresponds to weak head normal form):
\begin{thm}[Weak head normalization]
  \label{model:wtypedwhnf}
  If $M$ is a closed term, $A$ a
  type and ${\Gamma}$ a typing context such that $\jtyp{M}{A}$, then $M$ has a value.
\end{thm}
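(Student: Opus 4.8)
The plan is to read this off as a direct corollary of model soundness (Theorem~\ref{model:correctness}), combined with the fact that every inhabitant of a type interpretation has a value (Lemma~\ref{lemme:fx:model:modelwhnf}). The only real work is to supply Theorem~\ref{model:correctness} with a legitimate substitution, and this is exactly where the daimon earns its keep.

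First I would fix an arbitrary valuation function $\rho$ and take $\Delta = \emptyset$. Since $M$ is typed in the context $\Gamma$, I build a substitution $\sigma$ sending every variable $x$ declared in $\Gamma$, say with $(x : B) \in \Gamma$, to the daimon $\daemon$. To check that $\sigma \in \trad{\fpw{\Gamma}{\emptyset}}$, I invoke the inhabitation fact established inside the proof of Lemma~\ref{lemme:fx:model:modelwhnf}: for \emph{any} type $C$ one has $\daemon \in \trad{C}$. Applying this with $C = \fpw{B}{\emptyset}$ for each declared type $B$ yields $\sigma(x) = \daemon \in \trad{\fpw{B}{\emptyset}}$, so $\sigma$ belongs to $\trad{\fpw{\Gamma}{\emptyset}}$ by the very definition of the corrupted context interpretation.

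Now I apply Theorem~\ref{model:correctness} to the derivation of $\jtyp{M}{A}$ with this $\rho$, this $\Delta = \emptyset$, and this $\sigma$, obtaining $M[\sigma] \in \trad{\fpw{A}{\emptyset}}$. Because $M$ is closed it has no free variables, so the parallel substitution leaves it unchanged, i.e.\ $M[\sigma] = M$; hence $M \in \trad{\fpw{A}{\emptyset}}$. Finally I apply Lemma~\ref{lemme:fx:model:modelwhnf} once more, this time to the term $M$ and the type $\fpw{A}{\emptyset}$, to conclude that $M$ has a value, which is precisely the statement.

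I do not anticipate a genuine obstacle, since the theorem merely repackages results already proved. The one point deserving care is that a closed term may still be typed under a nonempty context $\Gamma$ (the excerpt does not seem to establish a strengthening lemma permitting us to discard $\Gamma$), so I must not assume $\sigma$ can be the empty substitution; populating $\trad{\fpw{\Gamma}{\emptyset}}$ with the daimon sidesteps this cleanly, and it is exactly the purpose for which the daimon was introduced. The choice $\Delta = \emptyset$ is moreover inessential: any $\Delta$ would do, as Lemma~\ref{lemme:fx:model:modelwhnf} applies verbatim to $\pw{A}$.
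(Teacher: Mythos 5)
Your proof is correct and takes essentially the same route as the paper, whose proof is the one-line remark that the result ``comes directly from the model soundness theorem and Lemma~\ref{lemme:fx:model:modelwhnf}''. The only detail you add --- populating $\trad{\pw{{\Gamma}}}$ by sending every declared variable to $\daemon$, justified by the inhabitation fact $\daemon\,{\in}\;\trad{A}$ from the proof of Lemma~\ref{lemme:fx:model:modelwhnf} --- is exactly the step the paper leaves implicit, and it is handled correctly.
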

\proof
  This comes directly from the model soundness theorem and
  Lemma~\ref{lemme:fx:model:modelwhnf}.\qed

The model allows us to prove for instance that our typing of exceptions is safe for the
primitive data types of the natural numbers:
\begin{lem}[type safety for natural numbers]
  If $M$ is a term such that \mbox{$\jtypctx{}{M}{\nat}$}, then $M\rede S^n\
  0$ for some $n{\geq}0$.
\end{lem}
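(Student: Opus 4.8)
The plan is to read the statement straight off the realizability model, exploiting that an empty typing context imposes no side condition on substitutions and that the daimon is foreign to the source calculus. Since $\jtypctx{}{M}{\nat}$ means $M$ is closed of type $\nat$, I would first instantiate the model soundness theorem (Theorem~\ref{model:correctness}) on this judgement. The context being empty, the requirement $\sigma\,{\in}\;\trad{\pw{{\Gamma}}}$ is vacuously met by the empty substitution, and $M[\sigma] = M$ because $M$ is closed. Choosing an arbitrary valuation $\rho$ together with the empty set of exception names $\Delta = \emptyset$, the theorem then yields $M\,{\in}\;\trad{\fpw{\nat}{\emptyset}}$.

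Next I would invoke the characterisation of the interpretation of corrupted natural numbers (Lemma~\ref{lemme:natisnat}) specialised to $\Delta = \emptyset$. That lemma describes $\trad{\fpw{\nat}{\emptyset}}$ as the set of terms reducing to some $S^n\ \Phi$, where $\Phi$ is one of $0$, $\daemon$, or $\rai$ with ${\varepsilon}\,{\in}\;\Delta$. Because $\Delta = \emptyset$ contains no exception name, the $\rai$ alternative is unavailable, so $\Phi$ can only be $0$ or $\daemon$. Combining this with the membership obtained above, I conclude that $M\rede S^n\ 0$ or $M\rede S^n\ \daemon$ for some $n\,{\geq}\,0$.

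The sole remaining step is to rule out the daimon alternative, and this is the one point deserving care. Here I would argue that $\daemon$ is a purely model-internal construct: it carries no typing rule, so the well-typed term $M$ contains no occurrence of the daimon to begin with. An inspection of the reduction rules (those of Figure~\ref{syntax:reduction} together with the daimon rules of Figure~\ref{fig:model:daimon}) shows that none of them introduces a $\daemon$ into a daimon-free term, whence daimon-freeness is preserved along $\rede$. Since $S^n\ \daemon$ visibly contains a daimon, $M$ cannot reduce to it, and we are left with $M\rede S^n\ 0$, exactly as claimed. The overall argument is short; the only genuine obstacle is the bookkeeping that reduction never manufactures a daimon, which is precisely what licenses discarding the $S^n\ \daemon$ case.
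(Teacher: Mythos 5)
Your proposal is correct and takes essentially the same route as the paper's own proof: model soundness (Theorem~\ref{model:correctness}) to get $M\,{\in}\;\trad{\nat}$, the characterisation of Lemma~\ref{lemme:natisnat} with ${\Delta}={\emptyset}$ to rule out the exception case, and the absence of a typing rule for $\daemon$ to rule out the daimon case. The paper states this last step very tersely; your explicit observation that no reduction rule can introduce a daimon into a daimon-free term merely spells out what the paper leaves implicit.
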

\proof
  If \mbox{$\jtypctx{}{M}{\nat}$}, then with Theorem~\ref{model:correctness},
  $M\,{\in}\;\trad{\nat}$. We conclude using Lemma~\ref{lemme:natisnat} and the
  fact that $M$ is well typed and there is no typing rule for $\daemon$.\qed

Hence, if a program is of the type of the natural numbers, we assure
that it will compute a true natural number without producing errors.

\section{Related Works} 
\label{related}
\noindent The static detection of uncaught exceptions has been studied in many works,
based on typing or not. For instance, for the OCaml languages, J.C.~Guzmán and
A.~Suárez~\cite{guzman1994ets} have proposed an extension of the type system
where arrows are annotated by the exceptions a function can raise.  Later,
X.~Leroy and F.~Pessaux~\cite{pessaux1999tba} have proposed a similar system
but have added polymorphism over these annotations. Their solution is
efficient and covers all the Ocaml language, including modularity. However,
all these works consider exceptions in call-by-value languages and rely
heavily on the exceptions-as-control-flow paradigm. 

In call-by-name, it is standard to use monads to encode
exceptions~\cite{wadler1985rfl,spivey1990fte}. We have however already
explained in section~\ref{whichexceptions} the drawbacks of such approach. 
As already stressed, from a computational point of view, the exception
mechanism described in this paper is very similar to the imprecise exceptions
of S. Peyton Jones \textit{et al.}~\cite{jones1999sie} who are implemented in
the ghc Haskell compiler~\cite{ghcExceptions}. The novelty of this paper is to
provide a precise type system for this exception mechanism while
in~\cite{jones1999sie} exceptional values inhabit all types. The
``imprecision'' of imprecise exceptions comes from the willingness to not
force a particular reduction strategy for primitive binary operators. For
instance, with imprecise exception the term $(\rai) + (\frai{{\varepsilon}'})$ evaluates
to the set $\lens{\rai, \frai{{\varepsilon}'}}$ (hence exceptional values are sets). Since
in \fx{} we do not have binary primitive operators, we have no need for such so-called
imprecision. However, in~\fx{}, the addition should be coded using the
$\natrec$ operator, such coding being bound to be non commutative for
exceptions (the coding have to choose on which operand of the addition the
recursion should be performed). We however believe that if needed, the typing of
exceptions presented in this paper could be adapted with almost no changes to
the case of imprecise exceptions since our type notions already deal with sets
of exceptions. 

% Operateurs de controle 

In the literature, exceptions are often considered as control operators. Note
however that exceptions have a dynamic semantic, and as such, cannot be
compared to static control operators like first-class
continuations~\cite{thielecke2002ccc}. In particular, the typing of exceptions
does not necessarily lift the logic to a classical one. Besides, in this paper, we
address the problem of the static detection of uncaught exceptions. We do not
know of previous works on control operators dealing with this particular
problem.

Exceptions in type theoretical settings have been less studied. However, R.~David 
and G.~Mounier~\cite{david2004ilambdac} have designed a typed mechanism 
of exceptions for the language AF2. However, as with monads, the propagation
of exceptions in their system has to be forced by means of Krivine's storage
operators. Besides, their exceptions are restricted in the sense that only
data types can carry exceptions and for example, exceptions cannot be used as
functions.

\section{Conclusion and future works}
\label{futureworks}
\noindent We have presented the \fx{} calculus, an extension of
System~F with typed exceptions. We have presented a mechanism of
exceptions that does not force a particular ${\beta}$-reduction
strategy for the calculus. We have also provided a type system for
this mechanism that performs static detection of uncaught
exceptions. This type system is modular and allows the use and
propagation of exceptions to be transparent for the
programmer. Finally, we have justified the semantic of our calculus by
exhibiting a realizability model.

This calculus can be improved in a certain number of ways. First, by proving
more meta-theoretical properties. Our realizability model only allows to prove
weak head normalization but it could probably be modified in order to prove
strong normalization. In fact, we believe that the simple change of the
definition of the orthogonality relation
(definition~\ref{definition:orthogonlatyrelation}) to ``$M\;{\perp}\;C$ if and only if
$C[M]\rede\daemon$ \emph{and} $C[M]$ is strongly normalizing'',  would yield a
strong normalization model (but with this new notion the interpretation
will not be closed by anti-reduction anymore and proofs will have to be
adapted).  Moreover, we have not completed yet the proof of subject-reduction
for \fx.  However, a detailed proof of subject-reduction for the restriction
of the calculus to first-order can be found in~\cite{slphd08} (showing
that corruption does not break intrinsically the subject-reduction property).
Adapting this proof to second order (and thus to \fx) is however not trivial,
not because of corruption, but because of the subtyping rules of
quantification. %
%\footnote{The subject-reduction
%for F${\eta}$ can be deduce from the one of $F$ plus ${\eta}$ typing rule, but our additional
%subtyping rules for corruption prevent us from using this technique. A
%``direct'' proof of subject-reduction for F${\eta}$ is much more complicated, hence
%is the one of \fx.}. 
Besides, the realizability model already proves a form a type safety for the
calculus.

Type inference for \fx{} is obviously undecidable~\cite{wells1999tat}. But
type inference for restrictions of \fx{}, to first-order for instance, remains
to be studied, and we have good hopes since we know that in such a
restriction, the subtyping relation is decidable. 

Exceptions in \fx{} are simple names. We would like to extend the calculus so
that they carry arguments. However, we will then need to account in the type
system for the types of these arguments, which will complicates notably the
type system.

As mentioned in the introduction, we think that corruption is a promising
notion for the addition of exceptions to proof assistants based on type
theoretical calculi. To that end, we think that a natural extension would be
to add dependent product to our calculus. As our type system is heavily based
on subtyping, we would build on previous works on subtyping in dependent
calculus~\cite{chen1295scc,miquel2001icc}. Moreover, we already know how to
extend our realizability model to handle the dependent product: if $T$ is a
type and $U_x$ a type family indexed by $x$, we can take 
    \begin{displaymath}
      \strad{{\Pi}x:T.\,U} = \displaystyle
      \mathop{{\bigcup}}_{{\Delta}\,{\subseteq}\,\exceptionset}\ens{\cons{M}{C} \md M\,{\in}\;\trad{\pw{T}}\ {\wedge}\
      C\,{\in}\;\strad{\pw{U_M}}} 
    \end{displaymath}

\bibliographystyle{plain}
\bibliography{biblio}

\appendix
\section{Parallel reduction for \fx}
%\subsection{Parallel reduction for \fx}
\label{appendix:pred}

\renewcommand{\esp}{\vspace*{0.1cm}}
\begin{displaymath}
  \begin{array}{c@{\qquad}c@{\qquad}c}
    \infer{M\pred M}{} 
    &
    \infer{\lam{x}{M}\pred \lam{x}{M'}}{M\pred M'} 
    &
    \infer{M\ N\pred M'\ N'}{M\pred M' & N\pred N'} 
  \end{array}
\end{displaymath}
\esp
\begin{displaymath}
  \begin{array}{c@{\qquad\qquad}c}
    \infer{(\lam{x}{M})\ N\pred \subst{M'}{x}{N'}}{M\pred M' & N\pred N'} 
    & 
    \infer{(\rai)\ M\pred\rai}{}
  \end{array}
\end{displaymath}
\esp
\begin{displaymath}
  \begin{array}{c@{\qquad}c}
    \infer{\try{(\rai)}{N}\pred N'}{N\pred N'} & 
    \infer{\try{(\frai{{\varepsilon}'})}{N}\pred \frai{{\varepsilon}'}}{} 
  \end{array}
\end{displaymath}
\esp
\begin{displaymath}
  \infer{\try{M}{N}\pred \try{M'}{N'}}{M\pred M' & N\pred N'} 
\end{displaymath}
\esp
\begin{displaymath}
  \infer{\try{V}{N}\pred V'}{V\pred V' & \mbox{$V$ is a regular value}} 
\end{displaymath}
\esp
\begin{displaymath}
  \begin{array}{c@{\qquad}c}
    \infer{\natrec\ X\ Y\ 0\pred X'}{X\pred X'} & 
    \infer{\natrec\ X\ Y\ (S\ N)\pred Y'\ N'\ (\natrec\ X'\ Y'\ N')}{X\pred X' & Y\pred Y' & N\pred N'}
  \end{array}
\end{displaymath}
\esp\esp\esp\esp
\begin{displaymath}
  \infer{\natrec\ X\ Y\ (\rai)\pred \rai}{}
\end{displaymath}
\esp
\begin{displaymath}
  \begin{array}{c@{\quad\enspace}c}
    \infer{\foldright\ X\ Y\ \nil\pred\ X'}{X\pred X'} &
    \infer{\foldright\ X\ Y\ (\lcons\ E\ L)\pred Y'\ E'\ L'\ (\foldright\ X'\ Y'\ L')}{
      X\pred X' & Y\pred Y' & E\pred E' & L\pred L'}
  \end{array}
\end{displaymath}
\esp\esp\esp\esp
\begin{displaymath}
  \infer{\foldright\ X\ Y\ (\rai)\pred \rai}{}
\end{displaymath}

\end{document}